\def\url@leostyle{%
  \@ifundefined{selectfont}{\def\UrlFont{\sf}}{\def\UrlFont{\footnotesize\ttfamily}}}
\begin{document}
\title{New SDP Roundings and Certifiable Approximation for Cubic Optimization}

 \author{
 Jun-Ting Hsieh\thanks{Carnegie Mellon University, \texttt{juntingh@cs.cmu.edu}. Supported in part by NSF CAREER Award \#2047933.} \and
 Pravesh K. Kothari\thanks{Carnegie Mellon University, \texttt{praveshk@cs.cmu.edu}. Supported by  NSF CAREER Award \#2047933, Alfred P. Sloan Fellowship and a Google Research Scholar Award.}\and
 Lucas Pesenti\thanks{Bocconi University, \texttt{lucas.pesenti@phd.unibocconi.it}.}\and
 Luca Trevisan\thanks{Bocconi University, \texttt{l.trevisan@unibocconi.it}. This project has received funding from the European Research Council (ERC) under the European Union’s Horizon 2020 research and innovation programme (grant agreement No. 834861).}
 }

\date{\today}
\maketitle

\begin{abstract}
We give new rounding schemes for SDP relaxations for the problems of maximizing cubic polynomials over the unit sphere and the $n$-dimensional hypercube. In both cases, the resulting algorithms yield a $O(\sqrt{n/k})$ multiplicative approximation in $2^{O(k)} \poly(n)$ time. In particular, we obtain a $O(\sqrt{n/\log n})$ approximation in polynomial time. For the unit sphere, this improves on the rounding algorithms of~\cite{BGG+17} that need quasi-polynomial time to obtain a similar approximation guarantee. Over the $n$-dimensional hypercube, our results match the guarantee of a search algorithm of Khot and Naor \cite{KN08} that obtains a similar approximation ratio via techniques from convex geometry. Unlike their method, our algorithm obtains an upper bound on the integrality gap of SDP relaxations for the problem and as a result, also yields a \emph{certificate} on the optimum value of the input instance. Our results naturally generalize to homogeneous polynomials of higher degree and imply improved algorithms for approximating satisfiable instances of Max-3SAT.

Our main motivation is the stark lack of rounding techniques for SDP relaxations of higher degree polynomial optimization in sharp contrast to a rich theory of SDP roundings for the quadratic case. Our rounding algorithms introduce two new ideas: 1) a new polynomial reweighting based method to round sum-of-squares relaxations of higher degree polynomial maximization problems, and 2) a general technique to \emph{compress} such relaxations down to substantially smaller SDPs by relying on an explicit construction of certain hitting sets. We hope that our work will inspire improved rounding algorithms for polynomial optimization and related problems. 

\end{abstract}

\section{Introduction}

The focus of this paper is on \emph{polynomial optimization}: approximate the maximum of a given $n$-variate polynomial $p$ over the Boolean hypercube $\pmo^n$ or the unit $n$-dimensional sphere $\calS^{n-1}$. This formulation is very expressive and captures several important discrete and continuous optimization problems. Representative examples include constraint satisfaction problems (CSPs) such as Max-Cut and Max-$3$SAT; the best separable state problem~\cite{BKS17} and the QMA(2) vs EXP conjecture in quantum information~\cite{AIM14}; the problem of finding lowest energy states of spin-glass systems in statistical physics~\cite{SK75}; and the problem of finding optimal Lyapunov certificates of stability in control theory~\cite{Par00}. 

Thanks to spectacular developments in the design and analysis of algorithms using semidefinite programming, the case when $p$ is a homogeneous \emph{quadratic} polynomial is well understood. Over the unit sphere, the problem is easy and is equivalent to computing the maximum eigenvalue of the coefficient matrix. Over the hypercube, Goemans and Williamson~\cite{GW95} ushered in a new era for such problems via their elegant ``hyperplane'' rounding of the canonical semidefinite programming (SDP) relaxation for quadratic optimization. The resulting momentum led to similar rounding algorithms for the bipartite case (a.k.a., the celebrated Grothendieck inequality) \cite{AN06}, 
positive semidefinite case (via the Nesterov rounding)
and arbitrary quadratics (via the Megretski rounding)~\cite{Meg01,CW04}. Some examples of the many algorithmic applications include approximation algorithms for 2-variable CSPs, cut norms of matrices~\cite{AN06}, and correlation clustering~\cite{CW04}. To top it off, a general result of Raghavendra~\cite{Rag08} shows that rounding the canonical SDP is in fact the \emph{optimal} approximation algorithm for all CSPs assuming the Unique Games Conjecture (UGC). Together, these results establish a conjecturally complete picture of quadratic maximization and underscore the centrality of SDP rounding. 

In sharp contrast to this rich picture of the quadratic case, even the very next step of \emph{cubic} optimization is scarcely understood\footnote{We note that for random polynomial optimization (as opposed to the worst-case setting of interest to this work) and related problems, there has been considerable recent success in obtaining new algorithms via rounding sum-of-squares relaxation of polynomial optimization formulations.}.
This can be directly ``blamed'' on the significantly under-developed technology for rounding canonical SDP relaxations of higher degree polynomial optimization problems. The only known general result in this direction is the rounding scheme of Bhattiprolu et.~al.~\cite{BGG+17} for the natural SDP relaxation (arising out of the sum-of-squares hierarchy of semidefinite programs) of (a variant of) polynomial optimization\footnote{The work of Bhattiprolu et.~al.~\cite{BGG+17} considers the variant where one intends to maximize the \emph{absolute value} of a polynomial and this distinction makes a material difference to the difficulty of the problem.} over the unit $n$-dimensional sphere. For cubic polynomials, their algorithm gives an $O(\sqrt{n})$-approximation. Over the hypercube, we know of no rounding algorithms (for any convex relaxation) that obtain a non-trivial approximation guarantee. Indeed, the only known algorithmic result in this direction is the seminal work of Khot and Naor~\cite{KN08} that gives a randomized $O(\sqrt{n/\log n})$-approximation algorithm by circumventing SDP rounding altogether and relying on anti-concentration inequalities and techniques from convex geometry instead. On the flip side, random homogeneous cubic polynomials (over the hypercube, these are equivalent to random $3$-XOR formulas) are known to exhibit a $\wt{\Omega}(n^{1/4})$ integrality gap for the natural SDP relaxation and their constant degree sum-of-squares strengthenings (over both the unit sphere and the hypercube), while the best known NP-hardness~\cite{Has01} can rule out only constant-factor approximation algorithms.  

In addition to being a central algorithmic problem, an improved understanding of higher degree polynomial maximization (in fact just cubic!) is vital to progress on central open questions in disparate areas. This formulation immediately captures the basic problem of maximizing the ``advantage over random assignments''\footnote{For example, a random assignment satisfies a $7/8$ fraction of constraints for any 3SAT instance. Thus, it is natural to ask: given an instance with optimum value $7/8 + \eps$, what's the largest fraction of constraints that we can satisfy in polynomial time? This is a more nuanced formulation of the classical question of approximating Max-3SAT.} for approximating Max-3SAT (and other CSPs). Further, sufficiently strong (and yet far from what known hardness results and integrality gaps rule out) approximation algorithms for special cases of degree-$3$ and degree-$4$ polynomial optimization can refute the Small-Set Expansion~\cite{BBH+12} hypothesis (a close variant of the UGC), settle the Aaronson-Impagliazzo-Moshkovitz~\cite{AIM14} conjecture that relates to the power of quantum entanglement, and refute the celebrated ``planted clique hypothesis''~\cite{FK08}.

To summarize, there are wide gaps in our understanding of the landscape of cubic optimization, and much of it can be attributed to the significant dearth of SDP rounding algorithms for higher degree polynomial maximization. 

\subsection{Our results}

In this paper, we design new rounding algorithms for SDP relaxations of cubic polynomial optimization  and as a consequence, obtain improved guarantees for polynomial optimization over both the unit sphere and the hypercube. Our first result gives a rounding algorithm for the \emph{canonical} sum-of-squares relaxation for homogeneous cubic optimization.

\begin{theorem}[Informal, see \Cref{thm:higher-degree-sos,thm:sos-spherical}] \label{thm:sos-rounding}
    For every $k \geq 6, n \in \N$, there is an $n^{O(k)}$-time rounding algorithm for the canonical degree-$k$ sum-of-squares relaxation of homogeneous cubic maximization problem (over both the unit sphere and the hypercube) that achieves an $O(\sqrt{n/k})$-multiplicative approximation.
\end{theorem}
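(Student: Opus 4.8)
The plan is to solve the canonical degree-$k$ sum-of-squares SDP in $n^{O(k)}$ time --- obtaining a degree-$k$ pseudo-expectation $\tilde{\mathbb{E}}$ over $\calS^{n-1}$ (resp.\ $\pmo^n$) that satisfies the domain constraints and has $\tilde{\mathbb{E}}[p(x)] = \mathrm{SDP}_k \geq \mathrm{OPT}$ --- and then round it in two stages: a \emph{reweighting} stage that distills from $\tilde{\mathbb{E}}$ a single good ``direction'', and a \emph{quadratic} stage that optimizes a quadratic form around that direction. Write $p(x) = T(x,x,x)$ for the symmetric coefficient tensor, and $T(b,\cdot,\cdot)$ for the quadratic form $x \mapsto T(b,x,x)$; since $p$ is odd and homogeneous, $\max p = \max|p|$, so signs cost nothing.

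\textbf{Step 1: reduce cubic to quadratic.} It suffices to produce in polynomial time a unit vector $b$ (for the hypercube, a vector $b \in \pmo^n$) for which the \emph{quadratic} maximum $\max T(b,x,x)$ over the respective domain is $\geq \Omega(\sqrt{k/n})\cdot\mathrm{SDP}_k$. Over the sphere this maximum is $\lambda_{\max}(T(b,\cdot,\cdot))$, attained at a unit eigenvector $w$; writing $b = \alpha w + \beta w^\perp$ with $w^\perp \perp w$ a unit vector, either $w$ itself already has $|p(w)|$ comparable, or the orthonormal pair $w^\perp, w$ has $|T(w^\perp,w,w)|$ comparable. In the second case, restrict $p$ to the $2$-plane $\mathrm{span}(w,w^\perp)$: the resulting bivariate cubic form has a ``mixed'' coefficient of the right size, and an elementary estimate (comparing its $L_2$-mass on the circle with a single coefficient) shows its maximum modulus on the unit circle is within a constant of that coefficient; this is found by optimizing a degree-$3$ univariate trigonometric polynomial. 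Over the hypercube the analogue instead rounds the optimizer of the quadratic to $\pmo^n$ by Goemans--Williamson-type hyperplane rounding (an $O(1)$ loss; see \cite{GW95} and its quadratic-maximization/Grothendieck variants) and recombines the two Boolean vectors into one feasible point by an elementary argument, again losing $O(1)$.

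\textbf{Step 2: find $b$ by polynomial reweighting.} For a reweighting polynomial $r$ of degree $t \leq (k-3)/2 = \Theta(k)$, set $\tilde{\mathbb{E}}'[\,\cdot\,] := \tilde{\mathbb{E}}[r(x)^2\,\cdot\,]/\tilde{\mathbb{E}}[r(x)^2]$ (a valid pseudo-expectation still satisfying the domain constraints) and $\Sigma_r := \tilde{\mathbb{E}}[r(x)^2\, xx^\top]$, so $\mathrm{tr}\,\Sigma_r = \tilde{\mathbb{E}}[r^2]$ on the sphere. The basic identity is $\tilde{\mathbb{E}}'[T(\cdot,x,x)] = T(\cdot,\Sigma_r)/\tilde{\mathbb{E}}[r^2]$; taking $b := T(\cdot,\Sigma_r)/\|T(\cdot,\Sigma_r)\|$ and using $\langle A,\Sigma\rangle \leq \lambda_{\max}(A)\,\mathrm{tr}\,\Sigma$,
\[
\lambda_{\max}(T(b,\cdot,\cdot)) \;\geq\; \frac{\langle T(b,\cdot,\cdot),\,\Sigma_r\rangle}{\mathrm{tr}\,\Sigma_r} \;=\; \|\tilde{\mathbb{E}}'[T(\cdot,x,x)]\| .
\]
So Step~1 is satisfied as soon as we find $r$ with $\|\tilde{\mathbb{E}}'[T(\cdot,x,x)]\| \geq \Omega(\sqrt{k/n})\,\mathrm{SDP}_k$. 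For $r\equiv 1$ this norm is $\|\tilde{\mathbb{E}}[T(\cdot,x,x)]\|$, which can be tiny when $\tilde{\mathbb{E}}$ is (nearly) symmetric under $x\mapsto -x$, even though $\tilde{\mathbb{E}}[\langle T(\cdot,x,x),x\rangle] = \tilde{\mathbb{E}}[p(x)] = \mathrm{SDP}_k$ is large --- so the role of a degree-$\Theta(k)$ reweighting is exactly to kill this cancellation. A reweighting by $\langle a,x\rangle^{\Theta(k)}$ for a good direction $a$ boosts one covariance eigenvalue to $\Theta(k/n)$, collapsing the stable rank of the covariance from $\Theta(n)$ to $\Theta(n/k)$; the plan is to choose such a reweighting (built up over $\Theta(k)$ conditioning rounds, each driven by an eigendecomposition of the current degree-$2$ moments, so the search is $n^{O(k)}$) so that \emph{simultaneously} $\tilde{\mathbb{E}}'[p(x)] = \Omega(\mathrm{SDP}_k)$ and the mean $\tilde{\mathbb{E}}'[T(\cdot,x,x)]$ no longer cancels but has norm $\Omega(\sqrt{k/n})\,\mathrm{SDP}_k$. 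The degree-$d$ case is identical with $T(\cdot,x,x)$ replaced by the $(d{-}1)$-fold evaluation $T(\cdot,x,\dots,x)$.

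\textbf{The main obstacle.} The entire difficulty is the quantitative claim inside Step~2: that a reweighting of degree $\Theta(k)$ can at once (i) concentrate the pseudo-covariance to effective dimension $O(n/k)$, (ii) keep $\tilde{\mathbb{E}}'[p] = \Omega(\mathrm{SDP}_k)$, and (iii) make $\tilde{\mathbb{E}}'[T(\cdot,x,x)]$ have norm $\Omega(\sqrt{k/n})\,\mathrm{SDP}_k$ instead of cancelling --- and that a suitable reweighting is efficiently \emph{findable}. The subtlety is that $\tilde{\mathbb{E}}$ is only a \emph{pseudo}-distribution, so the heuristic ``$\langle a,x\rangle^{\Theta(k)}$ concentrates to angular radius $\Theta(\sqrt{n/k})$ and thus freezes a $\sqrt{k/n}$ fraction of the cubic's action into the direction $a$'' must be replaced by inequalities certified within degree $k$; I expect the right implementation to be a potential/pigeonhole argument over $\Theta(k)$ reweighting rounds, each round either collapsing a constant fraction of the remaining covariance spectrum or already exposing the seed $b$, with a charging scheme guaranteeing that a constant fraction of $\mathrm{SDP}_k$ survives all the rounds --- pinning down this family of reweightings and the charging is the technical core. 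For the hypercube there is the additional cost of replacing the sphere geometry used in Step~1 (eigenvectors, the $2$-plane argument) by hypercube-appropriate rounding, which is where the $O(1)$ Grothendieck-type losses appear, and of checking that no step secretly relies on the sphere-only fact that partial and full diagonals of a symmetric tensor have the same maximum.
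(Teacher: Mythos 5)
Your high-level architecture (solve the degree-$\Theta(k)$ SoS SDP, reduce the cubic to a quadratic around a single direction, use a polynomial reweighting of degree $\Theta(k)$ to extract that direction, finish with Grothendieck/eigenvector rounding) is aligned with the paper, but the step you yourself flag as ``the main obstacle'' is exactly the step that is missing, and the route you sketch for it is not the one that works. You propose to \emph{find} the seed direction $b$ from the pseudo-distribution itself, by an iterative conditioning/reweighting in the $x$-variables that collapses the pseudo-covariance to stable rank $O(n/k)$ while preserving a constant fraction of $\mathrm{SDP}_k$ and making $\tilde{\mathbb{E}}'[T(\cdot,x,x)]$ large; none of (i)--(iii) is proved, the charging scheme is left unspecified, and it is far from clear that such a simultaneous guarantee is even true for pseudo-distributions (this is precisely the kind of ``decoupling a pseudo-expectation into a direction plus a quadratic'' statement that fails naively and motivated the weak-decoupling machinery of \cite{BGG+17}). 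As written, the proposal is a plan with its core lemma unproven.

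The paper's proof circumvents this entirely by \emph{not} extracting $b$ from the SoS solution. After decoupling $f$ into $\langle x, q(y,z)\rangle$ with $q_i = y^\top T_i z$, the direction is a \emph{uniformly random} $h \in \{\pm1\}^n$ (Gaussian on the sphere). Two separate ingredients replace your Step 2: first, a degree-$O(k)$ SoS proof of the moment inequality $\langle x,q\rangle^{2k} \leq O(n/k)^k\, \mathbb{E}_h \langle h,q\rangle^{2k}$ (\Cref{lem:h-v-moment}, via a blockwise decomposition of $[n]$ into $k$ blocks), which gives $\mathrm{SOS}^{2k} \leq O(n/k)^k\, \mathbb{E}_h \tilde{\mathbb{E}}_\mu \langle h,q\rangle^{2k}$; second, \emph{classical} anti-concentration (hypercontractivity, \Cref{lem:anti-concentration-hypercontractivity}) applied to the genuine degree-$2k$ polynomial $h \mapsto \tilde{\mathbb{E}}_\mu \langle q,h\rangle^{2k}$, showing a random $h$ achieves $(\tilde{\mathbb{E}}_\mu\langle q,h\rangle^{2k})^{1/2k} \geq \Omega(\sqrt{k/n})\,\mathrm{SOS}$ with probability $2^{-O(k)}$ (absorbed by repetition into the $n^{O(k)}$ runtime). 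Only then does a reweighting enter, and it is in the $(y,z)$-variables, not in $x$: \Cref{lem:scalar-fixing-high-degree} shows that reweighting $\mu$ by one of three \emph{explicit} SoS polynomials ($p^{2k}$, $p^{2k-2}$, or $(p+m)^2p^{2k-2}$ with $p = \langle q,h\rangle$ and $m$ its $2k$-th pseudo-moment root) yields $|\tilde{\mathbb{E}}_{\mu'}\langle q,h\rangle| \geq \frac13 (\tilde{\mathbb{E}}_\mu \langle q,h\rangle^{2k})^{1/2k}$ --- no iterative search, no eigendecomposition-driven conditioning, no charging argument. Grothendieck rounding of $\mu'$ then finishes. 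If you want to salvage your approach, the lesson is to make the direction random and push all the work into (a) an SoS-certifiable comparison between the SoS value and the $2k$-th moments along random directions and (b) a one-shot moment-to-mean reweighting lemma, rather than trying to certify covariance concentration of the pseudo-distribution.
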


This relaxation was first analyzed in~\cite{BGG+17} to obtain similar guarantees to \Cref{thm:sos-rounding} but only for the case of the unit sphere. Their rounding algorithm is based on significantly different ideas relying on ``weak decoupling'' inequalities and involve reasoning about eigenvectors of the SDP solution. In particular, such techniques seem to have no natural analogs over the Boolean hypercube. As a result, prior to our work, there were no known bounds on the integrality gap of the sum-of-squares relaxations of the foundational problem of homogeneous polynomial optimization over the hypercube. 

Our techniques in fact extend naturally to homogeneous polynomial optimization for all \emph{odd}-degree polynomials (see \Cref{sec:high-degree}). Handling \emph{even} degree polynomials is related to the inherent (and well-known) issue of non-existence of decoupling inequalities and is an outstanding open question.
Informally speaking, decoupling inequalities (\Cref{lem:decoupling,lem:decoupling-hd}) allow us to assume that the underlying objective function is a \emph{tripartite} polynomial (i.e., the variables can be partitioned into three classes such that each non-zero monomial uses one variable from each class) with only a constant factor difference in the optimum. Such a fact is simply not true for homogeneous even-degree polynomials (see \Cref{ex:failure-of-decouling}). Indeed, this is the key reason why \cite{BGG+17} considers the problem of maximizing the \emph{absolute} value of a polynomial instead, in which case decoupling inequalities are known for all degrees \cite{HLZ10}. 
We note (see \Cref{sec:high-degree}) that for maximizing the absolute value of a polynomial, our techniques naturally extend to homogeneous polynomials of all degrees.

\paragraph{Compressing SDPs.} Our second main result shows that the sum-of-squares relaxation can be \emph{compressed} from $n^{O(k)}$-size to just $2^{O(k)} \poly(n)$-size SDP while preserving the same approximation guarantees as in \Cref{thm:sos-rounding}. This is reminiscent, at a high-level, of the work of Guruswami and Sinop~\cite{GS12} in the context of Max Bisection and of Barak, Raghavendra and Steurer~\cite{BRS11} for Unique Games. Both these examples are closely related to 2-CSPs and thus quadratic polynomials. Our work gives the first analog of such a pruning for higher-degree polynomial optimization.

\begin{theorem}[Informal, see \Cref{thm:pruned-version,thm:pruned-version-sphere}] \label{thm:pruning}
    For every $k, n \in\N$, there is a $2^{O(k)} n^{O(1)}$-time algorithm that takes input a homogeneous multilinear cubic polynomial $f(x)$ in $n$ variables and outputs an assignment that achieves an $O(\sqrt{n/k})$-approximation to the optimum of $f$ over the hypercube or the unit sphere. 
    Moreover, our algorithm is obtained by rounding an SDP relaxation of the cubic optimization problem.
\end{theorem}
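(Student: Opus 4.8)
The plan is to bootstrap the rounding algorithm of \Cref{thm:sos-rounding}: the degree-$k$ sum-of-squares relaxation has $n^{\Theta(k)}$ variables, so to get down to $2^{O(k)}\poly(n)$ we must argue that the rounding only ever consults the pseudo-expectation operator $\wt{\mathbb E}$ through a small, explicitly describable collection of linear functionals, and then replace the full relaxation by a much smaller SDP guaranteed to be consistent on exactly those functionals. First I would apply the decoupling inequality (\Cref{lem:decoupling}) to reduce, up to an absolute constant factor, to maximizing a \emph{tripartite} cubic form $f(x,y,z)=\sum_{i,j,l} T_{ijl}\, x_i y_j z_l$ over the product of three copies of $\pmo^n$ (or of $\calS^{n-1}$). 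This is exactly the structure the compression needs: $f$ is linear in each of the three blocks, so once the $z$-block is ``pinned down'' in the rounding --- via a polynomial reweighting of its pseudo-distribution --- the residual bilinear optimization in $(x,y)$ can be carried out directly and touches only the degree-$\le 1$ data of those two blocks. Consequently the only genuinely high-degree moments the rounding uses live in a single block, and only in the shape ``reweight the $z$-pseudo-distribution by a low-degree polynomial $g(z)$ and read off its first and second moments.''

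Given this, the compressed SDP writes itself. Fix in advance a family $\mathcal{G}$ of candidate reweighting polynomials; the SDP keeps the $n^3$ cross-moment variables carrying the objective $\sum_{i,j,l}T_{ijl}\,\wt{\mathbb E}[x_iy_jz_l]$, the $\poly(n)$ ordinary degree-$\le 2$ variables of the $x$- and $y$-blocks, and, for each $g\in\mathcal G$, a $\poly(n)$-sized block of variables representing the reweighted moments $\wt{\mathbb E}[g(z)z_i]$ and $\wt{\mathbb E}[g(z)z_iz_j]$, subject to the PSD and linear-consistency constraints that link all these blocks. If $|\mathcal G|=2^{O(k)}\poly(n)$ then this SDP has the advertised size and is solved in $2^{O(k)}\poly(n)$ time, and feasibility with value at least the (decoupled) optimum is free: the moments of the Dirac distribution at an optimal point, restricted to the functionals indexed by $\mathcal G$, form a feasible solution of exactly that value. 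So the entire difficulty is to choose $\mathcal G$ as an explicit \emph{hitting set} for the rounding: for every tensor $T$ and every feasible solution of the compressed SDP, at least one $g\in\mathcal G$ must already ``witness'' the bound, meaning that reweighting by that $g$ concentrates the $z$-moments enough that the read-off on $(x,y)$ attains $\Omega(\sqrt{k/n})$ times the SDP value. I expect $\mathcal G$ to be built from powers of linear forms $\langle v,z\rangle^{k-2}$ (or low-degree symmetric functions thereof) with $v$ ranging over an explicit $\eps$-net of an $O(k)$-dimensional subspace, or from products over small coordinate subsets --- in either case an explicit combinatorial object of size $2^{O(k)}\poly(n)$ --- designed so that the probabilistic ``a good reweighting exists'' step inside the proof of \Cref{thm:sos-rounding} survives when the random reweighting is replaced by the best element of $\mathcal G$.

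The main obstacle, to my mind, is entirely in making this last step rigorous, and it has two faces. First, discarding the high-degree moment constraints could in principle destroy exactly the positivity the rounding exploits, so one must distill the precise structural consequence of SDP feasibility that the analysis of \Cref{thm:sos-rounding} actually uses --- some reweighted second-moment matrix being spectrally close to rank one, or an analogue of a ``weak decoupling'' estimate --- restate it so that it refers to $\wt{\mathbb E}$ only through the $\mathcal G$-indexed functionals, and then verify that a feasible point of the small SDP still satisfies it. This is where the argument is most delicate, and where the analogy with the quadratic compressions of \cite{GS12,BRS11} is only a rough guide, since those retain the full degree-$2$ profile of a relaxation rather than a sparse slice of a higher-degree one. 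Second, $\mathcal G$ must be genuinely explicit and enumerable in $2^{O(k)}\poly(n)$ time, not merely shown to exist by a net or probabilistic argument, and ``generic enough'' that one of its members is a witness for \emph{every} worst-case $T$; proving that genericity is, in effect, an anti-concentration / spread statement about the reweighting, and it is the real technical crux. Granting the hitting set together with its feasibility-transfer property, everything else --- the decoupling reduction, writing down the SDP, bounding its size, solving it, and porting over both the approximation guarantee and the resulting integrality-gap certificate --- should be routine given \Cref{thm:sos-rounding} and \Cref{lem:decoupling}.
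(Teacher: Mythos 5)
Your high-level architecture matches the paper's: reduce to a decoupled tripartite form, keep only low-degree moments of two blocks, and index a small explicit family of reweightings by auxiliary SDP variables so that feasibility of the integral optimum is immediate and soundness reduces to one member of the family ``witnessing'' the bound. But the two load-bearing steps are exactly the ones you defer, and one of your guesses for the first would fail. The family of reweightings the paper uses is indexed by an explicit \emph{hitting set for halfspaces}: a set of $2^{k}n^{O(1)}$ points $\wh x\in\pmo^n$ such that for \emph{every} $w\in\R^n$ one has $\E_{\wh x}\langle \wh x,w\rangle^{2k}\ge \Omega(k/n)^k\, n^{-O(1)}\,\|w\|_1^{2k}$. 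Your suggestion of taking $v$ over an $\eps$-net of an $O(k)$-dimensional subspace cannot work: the direction that must be hit is (morally) the unknown optimal coefficient vector $q^*\in\R^n$, which is not confined to any fixed low-dimensional subspace, so the net misses almost all relevant $w$. The paper also points out that off-the-shelf PRGs fooling linear threshold functions to error $2^{-\Theta(k)}$ have seed length $O(\log n+\log^2(1/\eps))$ and hence support $n^{O(\log n)}$, which is too large; the construction that works is a bespoke blockwise one ($\wh x=\wh c\otimes\wh b$ with $\wh b$ drawn $4$-wise independently from $\pmo^{n/k}$ and $\wh c\sim\pmo^k$), analyzed via Paley--Zygmund on each block.

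The second gap is the precise form of the compressed constraint and the two-way transfer of feasibility. The paper does not store reweighted moment blocks $\wt{\mathbb E}[g(z)z_iz_j]$ for each $g$; it adds one scalar indeterminate $M_{x}$ per hitting-set point (intended value $\langle x,q^*\rangle^k$) and the single aggregate axiom $\E_{\wh x\sim\calD}\bigl[M_{\wh x}^2(\langle\wh x,q\rangle^2-\alpha^2)\bigr]\ge 0$ inside a constant-degree SoS program, with a binary search over $\alpha$. Completeness is then H\"older's inequality $\E\langle\wh x,q^*\rangle^{2k+2}\ge(\E\langle\wh x,q^*\rangle^{2k})^{(k+1)/k}$ combined with the moment lower bound; soundness is an averaging argument producing a single good $\ol x$, followed by reweighting by $M_{\ol x}^2$, the scalar-fixing lemma (\Cref{lem:scalar-fixing}) applied to the linear form $\langle\ol x,q\rangle$, and Grothendieck rounding. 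In particular, your concern about needing a reweighted second-moment matrix to be ``spectrally close to rank one'' or a weak-decoupling analogue is a red herring: no such structural property is used. Without the explicit hitting set and this concrete constraint-plus-transfer argument, the proposal is a plausible plan rather than a proof.
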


\Cref{thm:pruning} improves the approximation guarantees of~\cite{BGG+17}. In particular, we obtain an $O(\sqrt{n/\log n})$ approximation algorithm in polynomial time, as opposed to quasi-polynomial time in~\cite{BGG+17}. In contrast to the prior works on pruning SoS relaxations, the proof of \Cref{thm:pruning} relies on ``derandomizing'' a crucial set of inequalities that naturally arise in our rounding algorithm via polynomial reweightings. 

Over the $n$-dimensional hypercube, an $O(\sqrt{n/\log n})$-approximation algorithm for homogeneous cubic maximization was given by Khot and Naor~\cite{KN08}. Their algorithm relies on ideas from convex geometry and involves a randomized reduction to the classical bipartite quadratic optimization problem. Our algorithm for \Cref{thm:pruning} is \emph{deterministic} and our guarantees on the hypercube in the polynomial-time regime match the ones of \cite{KN08}. 

Finally, we note that while prior works \cite{GS12,BRS11} obtain smaller SDPs by \emph{pruning} constraints and variables from the sum-of-squares SDP relaxations, our compressed SDPs are obtained by \emph{adding} certain auxiliary variables and constraints that are enough to imply the relevant inequalities for the analysis of the rounding algorithm. This is similar to the work of Steurer and Tiegel \cite{ST21} that showed a degree reduction result for some applications in robust statistics where one can take a high degree SoS relaxation and get a new SDP with additional variables but a lower degree with a similar performance. However, the result of \cite{ST21} utilizes problem-specific structure and does not appear to be directly relevant to our applications in this work.

\paragraph{Certification algorithms and application to 3SAT.} As a key consequence of our new rounding algorithms, we obtain \emph{certificates} of upper bounds on the true maximum of a homogeneous cubic objective. This is because rounding algorithms immediately imply upper bounds on the integrality gaps of the underlying SDP relaxations. It is often desirable (both from a practical, e.g.\ SAT solving, and theoretical standpoint, e.g.\ refutation problems in average-case complexity) to ask for algorithms that produce certificates on the optimum of an optimization problem.
We note that the problem of finding certifiable approximation algorithms for cubic maximization was formulated by Trevisan in his blog post~\cite{Trevisan}. We refer the reader to the post for additional discussion.

In \Cref{sec:3-sat}, we demonstrate a concrete advantage of our new certification algorithm by using it to obtain an improved rounding algorithm for \emph{satisfiable} 3SAT instances:

\begin{theorem}
    \label{thm:3sat}
    There is a polynomial-time randomized algorithm that, given a satisfiable 3SAT formula with $n$ variables, finds with high probability an assignment satisfying a $(\frac 7 8 + \wt{\Omega}(n^{-\frac 3 4}))$-fraction of the clauses.
\end{theorem}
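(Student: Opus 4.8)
The plan is to reduce \Cref{thm:3sat} to approximately maximizing a cubic objective with lower-order terms and then feed the cubic part to the certifiable approximation of \Cref{thm:pruning}. Arithmetize each clause so that the number of satisfied clauses $\mathrm{sat}(x)$ of an assignment $x\in\pmo^n$ equals $\frac78 m + \frac18\bigl(L(x)+Q(x)+C(x)\bigr)$, where $L,Q,C$ collect the degree-$1,2,3$ contributions of the clauses (so $L(x)=\sum_i a_i x_i$ with $\|L\|_1:=\sum_i|a_i|=\max_x L(x)$, and similarly $|Q(x)|\le 3m$, $|C(x)|\le m$ for all $x$); since the formula is satisfiable, $\max_x\bigl(L(x)+Q(x)+C(x)\bigr)=m$, so it suffices to output $x$ with $L(x)+Q(x)+C(x)\ge\wt{\Omega}(n^{-3/4})\,m$. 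Averaging the identity over $x$ and $-x$ gives $\mathrm{sat}(x)+\mathrm{sat}(-x)=\frac74 m+\frac14 Q(x)$, hence $\max\bigl(\mathrm{sat}(x),\mathrm{sat}(-x)\bigr)\ge\frac78 m+\frac18 Q(x)$, so the quadratic part is directly exploitable through quadratic-form maximization.

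I would then branch on which homogeneous part carries the advantage. (i) If $\|L\|_1\ge n^{-3/8}m$: take $y$ to be the sign vector of $(a_i)_i$ and \emph{dilute} it, setting each $x_i=y_i$ independently with probability $\frac12+\gamma$; then $\mathbb{E}\bigl[L(x)+Q(x)+C(x)\bigr]=2\gamma L(y)+4\gamma^2 Q(y)+8\gamma^3 C(y)$, and using $|Q(y)|\le 3m$, $|C(y)|\le m$ and $\gamma\asymp\|L\|_1/m$ yields advantage $\Omega(\|L\|_1^2/m)\ge\wt{\Omega}(n^{-3/4})\,m$. (ii) Else if $\max_x Q(x)\ge n^{-3/4}m$: run the $O(\log n)$-approximation of~\cite{CW04} for maximizing the (indefinite) quadratic form $Q$ over $\pmo^n$, obtaining $y$ with $Q(y)\ge\max_x Q(x)/O(\log n)$, and output whichever of $y,-y$ satisfies more clauses; by the identity this is $\frac78 m+\Omega(\max_x Q(x)/\log n)\ge\frac78 m+\wt{\Omega}(n^{-3/4})\,m$. (iii) Otherwise $\|L\|_1<n^{-3/8}m$ and $\max_x Q(x)<n^{-3/4}m$, which forces $\max_x C(x)\ge(1-o(1))\,m$; this is the main case. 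Here I would invoke \Cref{thm:pruning} on the cubic part, which both returns $y$ with $C(y)\ge\wt{\Omega}(n^{-1/2})\cdot\max_x C(x)=\wt{\Omega}(n^{-1/2})\,m$ and certifies that $\max_x C(x)\le\wt{O}(n^{1/2})\,C(y)$; the remaining task is to reconcile $y$ with its lower-order parts.

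The hard part is precisely this last reconciliation. The assignment $y$ good for the cubic structure can have $Q(y)$ as negative as $-\Theta(m)$, and no dilution of $y$ can repair this, since diluting by $\gamma$ buys only a cubic gain of order $\gamma^3\cdot\wt{\Omega}(n^{-1/2})m$ against a quadratic loss of order $\gamma^2\cdot\Theta(m)$. I would therefore apply \Cref{thm:pruning} not to $C$ in isolation but to a homogeneous cubic on $n+O(1)$ variables that also encodes $Q$ and $L$ --- for instance $x_0 x_{n+1}L(x)+x_0 Q(x)+C(x)$, whose maximum over $\pmo^{n+2}$ equals $\max_x\bigl(|L(x)|+|Q(x)|+|C(x)|\bigr)\ge m$ --- so that the rounding simultaneously controls all three parts, and then decode back to the original variables by taking the better of $y,-y$ and, in the residual cases where a sign is twisted, applying a final dilution that absorbs the (now small, by the case hypothesis) linear part. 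The loss incurred both in matching the very different scales of the degree-$1,2,3$ parts and in making the three branches cover all instances with mutually consistent thresholds is an extra $\wt{O}(n^{1/4})$ factor beyond the $\wt{O}(n^{1/2})$ guarantee for homogeneous cubics; this is exactly what produces the $\wt{\Omega}(n^{-3/4})$ bound. Finally, the randomized steps (the dilutions and the rounding in~\cite{CW04}) are derandomized by the method of conditional expectations or amplified by independent repetition, giving the claimed high-probability guarantee --- and, in any case, the output assignment can be evaluated directly to confirm it is good.
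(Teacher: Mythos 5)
Your overall case split (linear, quadratic, or cubic part carries the advantage) matches the paper's, and your cases (i) and (ii) are sound: the dilution argument for a large linear part is an elementary alternative to the paper's SoS-plus-Gaussian-rounding treatment of that case and produces the same $\wt{\Omega}(n^{-3/4})$ bottleneck, and case (ii) is exactly the paper's use of Charikar--Wirth together with the $\pm x$ averaging identity. The gap is in case (iii), which you yourself flag as the main case: your proposed fix does not work. Applying \Cref{thm:pruning} to the homogenized cubic $x_0x_{n+1}L(x)+x_0Q(x)+C(x)$ fails for two reasons. First, its optimum is $\max_x\bigl(|L(x)|+|Q(x)|+|C(x)|\bigr)$, and in case (iii) this can be dominated by $|Q|$: the hypothesis $\max_x Q(x)\le n^{-3/4}m$ does not prevent $\min_x Q(x)=-\Theta(m)$, so the rounding may return a point whose value comes entirely from $y_0=-1$ and $Q(y)\approx-\wt{\Omega}(n^{-1/2})\,m$, with $C(y)$ contributing nothing. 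Second, even when the cubic term does contribute, a twisted sign $y_0=-1$ leaves $Q(y)$ large and negative, and --- as your own back-of-the-envelope computation shows --- no dilution can trade a $\gamma^2\cdot\Theta(m)$ quadratic loss against a $\gamma^3\cdot\wt{\Omega}(n^{-1/2})\,m$ cubic gain. The ``residual cases where a sign is twisted'' are therefore not absorbable by handling the (small) linear part; the obstruction is quadratic, and negating $y$ does not change $Q(y)$.

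What is actually needed in case (iii) --- and what the paper does --- is to control the quadratic part at every stage rather than after the fact. The hypothesis $\max_x f_2\le\delta$ lets one (a) add the axioms $f_2(y)\ge-\delta$ and $f_2(z)\ge-\delta$ to the degree-6 SoS relaxation of the decoupled cubic (feasible via the satisfying assignment), (b) choose $\ol x$ uniformly at random so that $f_2(\ol x)\ge-C\delta$ with constant probability by Markov's inequality, (c) round $y,z$ with Charikar--Wirth (\Cref{thm:charikar-wirth}) rather than Grothendieck so that the constraints $f_2\ge-\delta$ approximately survive rounding, and (d) recouple $\ol x,\ol y,\ol z$ by a pairwise-independent mixture (\Cref{lem:recoupling-improved}) under which the degree-1 part has mean zero and the degree-2 part equals $\frac19\bigl(f_2(\ol x)+f_2(\ol y)+f_2(\ol z)\bigr)\ge-O(\delta)$. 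Without some substitute for this mechanism your case (iii) does not close, and the theorem is not proved.
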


For comparison, the best known result of \Hastad and Venkatesh~\cite{HV04} gives $\frac{7}{8} + \Omega\left(\frac 1 {\sqrt{m}}\right)$ (via random assignments), which is in general polynomially worse as $m$ can be $\Omega(n^3)$. On the lower bound side, achieving approximation $\frac{7}{8}+\eps$ for any $\eps > 0$ is NP-hard~\cite{Has01}.

Note that the fraction of satisfied constraints of a 3SAT formula can be written as $\frac{7}{8} + f(x)$, where $f(x)$ is a \emph{non-homogeneous} degree-3 polynomial (the advantage over random assignment).
Thus, due to the degree-2 part of $f$, the naive approach of decoupling and optimizing the degree-3 part of $f$ (e.g., using \Cref{thm:sos-rounding}) does not work.
Our main idea is that even though $f$ has components of degree 1 and 2, they behave nicely when the instance is satisfiable (see \Cref{obs:satisfiable-3sat}).
This allows us to add extra constraints to the sum-of-squares relaxation that enforce the niceness of the degree-2 part (see \Cref{alg:3sat-improved}).
We remark that this is an example of the benefit of designing \emph{certifiable} approximation algorithms using SDPs, as it allows us to add additional constraints for specific applications.

We leave as future work the problem of achieving value $\frac{7}{8} + \Omega\left(\frac 1 {\sqrt{n}}\right)$ for 3SAT, which would match our result for the degree-3 homogeneous case.

\subsection{Technical overview}

\paragraph{Our rounding algorithm.} Our first main idea is a new rounding scheme based on \emph{polynomial reweightings} (introduced in~\cite{BKS17}) that at once replicates the guarantees of \cite{BGG+17} on the unit sphere while naturally extending (in fact, with essentially no change!) to the hypercube.
Let us elaborate on this new scheme of roundings next.

Our rounding works with \emph{decoupled} polynomials --- cubic polynomials of three separate classes of variables $x,y,z$ such that the input polynomial can be written as $f(x,y,z) = \sum_{i=1}^n x_i q_i(y,z)$ for some quadratic polynomial $q_i \coloneqq q_i(y,z) = y^{\top} T_i z$. Via classical decoupling inequalities (\Cref{lem:decoupling}), working with such polynomials is without loss of generality (at the cost of losing only an absolute constant factor in the approximation ratio).

Informally speaking, an SDP solution is described by a linear operator $\pE$ that maps low-degree polynomials into real numbers so that $\pE[p^2] \geq 0$ for every $p$ that is low-degree (see \Cref{sec:sos-prelims}). In other words, the linear operator ``pretends'' like an expectation of a probability distribution over the unit sphere/hypercube for the purpose of taking expectations of low-degree polynomials. 

Consider the value that an SDP solution assigns to the input polynomial $f$. In particular, if it so happens that $\sum_{i} |\pE[q_i]|$ is at least an $O(\sqrt{k/n})$-fraction of the SDP objective, then by using the Grothendieck inequality (\Cref{fact:grothendieck}) to round $y$ and $z$ into integral $\ol{y},\ol{z}$ and setting $x_i$ to be $\sign(q_i(\ol{y},\ol{z}))$, we obtain an $O(\sqrt{k/n})$-approximation to the optimum. Now, this reasoning relies on the SDP solution $\pE$ giving a large value to $\sum_i |q_i|$, that is, $\sum_i |\pE[q_i]|$ being large. 

Of course, there is no reason for such a property to hold for the actual SDP solution we compute. 
Our key idea is to show that there is an efficiently computable transformation from the initial $\pE$ into a new $\pE'$  with the property that 1) $\pE'$ continues to have an objective value as large as the one we began with, and 2) $\sum_i |\pE'[ q_i]|$ is at least an $O(\sqrt{k/n})$-fraction of the SDP objective. 
This transformation is obtained by ``reweighting'' (a generalization of ``conditioning" procedure first used in~\cite{BKS17} for rounding high-degree sum-of-squares SDP relaxations) the original solution $\pE$ by an appropriate sum-of-squares polynomial $r(x)$. In fact, $r(x)$ is chosen to be an appropriate random polynomial in our analysis. Informally speaking, if one thinks of the SDP solution as an actual probability distribution over $x,y,z$, then our reweighting can be thought of as shifting mass around a randomly chosen direction in $x$ to ensure that the $\sum_i |\pE'[q_i]|$ becomes large enough. 

\paragraph{Compression via hitting sets for halfspaces.} Constructing the compressed SDP for \Cref{thm:pruning} involves adding some constraints that enforce a certain appropriate variant of \emph{anti-concentration} inequalities to hold for all projections of the SDP solution. Such inequalities are implied by the constraints in a degree-$k$ sum-of-squares relaxation, but would cost an $n^{O(k)}$-bound in the size of the resulting SDP. Our main observation is that we can obtain the relevant anti-concentration inequalities by explicitly forcing constraints on an appropriate ``hitting set'' for halfspaces. To implement this, we need an efficient construction of a set of $2^k n^{O(1)}$ points from $\{ -1,1\}^n$ (respectively, $2^{O(k)} n^{O(1)}$ elements of the unit sphere) such that for every vector $w$ there is at least one element $x$ of the hitting set such that $\langle x,w \rangle \geq \Omega(\sqrt{k/n})  \|w\|_1$
(respectively, at least one element $x$ such that
$\langle x,w \rangle \geq \Omega(\sqrt {k/n}) \|w\|_2$). The desired condition occurs with probability $2^{-\Theta(k)}$ for a Rademacher (respectively, Gaussian) choice of $x$, so the required hitting set property is satisfied by the support of any pseudorandom generator that $\varepsilon$-fools all linear threshold functions for $\varepsilon = 2^{-\Theta(k)}$. To obtain the compressed SDPs we desire, we need $\varepsilon$-PRGs (or even $\varepsilon$-Hitting Set Generators) for all linear threshold functions with optimal seed length $O(\log n + \log(1/\eps))$. However, the best known seed length for an efficient, explicit construction is $O(\log n + \log^2(1/\eps))$~\cite{MZ10}, which amounts to a hitting set of size $n^{O(\log n)}$ in our setting. Instead, we give an elementary construction of a set as above with nearly optimal size that yields us our compressed SDP.

\subsection{Organization of the paper}

In \Cref{sec:prelims}, we introduce various tools and facts needed in the sequel. We finish that section by reformulating the algorithm of Khot and Naor \cite{KN08} to pave the way for the presentation of our algorithmic techniques.

In \Cref{sec:rounding-main}, we present our rounding algorithms for the canonical degree-$k$ SoS relaxation of cubic optimization over the hypercube.

In \Cref{sec:compressed}, we state and analyze our compressed SDP relaxation that achieves better approximation ratio for cubic optimization over the hypercube.

In \Cref{sec:sphere}, we adapt the techniques to show the analogous results over the unit sphere.

In \Cref{sec:3-sat}, we illustrate the power of certification algorithms by giving an improved approximation for Max-3SAT in the satisfiable regime.

In \Cref{sec:warmup}, we describe an alternative simple certification algorithm that achieves approximation $O(\sqrt n)$ for cubic optimization over the hypercube.

In \Cref{sec:high-degree}, we show how to generalize our techniques to higher-degree polynomials.

\section{Preliminaries}
\label{sec:prelims}

\subsection{Sum-of-Squares SDPs} 
\label{sec:sos-prelims}

We refer the reader to the monograph~\cite{FKP19} and the lecture notes~\cite{BS16} for a detailed exposition of the sum-of-squares method and its usage in algorithm design. 

A \emph{degree-$\ell$ pseudo-distribution} $\mu$ over variables $x_1, x_2, \ldots, x_n$ corresponds to a linear operator $\pE_{\mu}$ that maps polynomials of degree $\leq \ell$ to real numbers and satisfies $\pE_\mu\, 1=1$ and $\pE_{\mu}[p^2] \geq 0$ for every polynomial $p(x_1,x_2, \ldots, x_n)$ of degree $\leq \ell/2$. We say that such a pseudo-distribution satisfies the \emph{hypercube} constraints if $\pE_{\mu}[p x_i^2] = \pE_{\mu}[p]$ for every polynomial $p$ of degree $\leq \ell-2$ and every $i\in [n]$. We say that such a pseudo-distribution satisfies the \emph{unit sphere} constraints if $\pE_{\mu}[ \norm{x}_2^2 p] =  \pE_{\mu}[p]$ for every $p$ of degree $\leq \ell-2$. 

Given a polynomial $p$ (with the $\ell_1$-norm of the coefficients being $\norm{p}_1$) over $x_1, x_2,\ldots,x_n$, a pseudo-distribution of degree $\ell$ over the unit sphere or the hypercube that maximizes $p$ within an additive $\varepsilon \norm{p}_1$ error can be found in time $n^{O(\ell)} \polylog(n/\varepsilon)$ via the ellipsoid method. 

\paragraph{Reweighting.} Given a pseudo-distribution $\mu$ over the unit sphere or the hypercube, a \emph{reweighting} of $\mu$ by a sum-of-squares polynomial $q$ satisfying $\pE_{\mu}[q]>0$ is a pseudo-distribution $\mu'$ that maps any polynomial $p$ to $\pE_{\mu'}[p] = \pE_{\mu}[pq]/\pE_{\mu}[q]$. For any $\mu$ of degree $\ell$ and $q$ of degree $ r < \ell$, $\mu'$ is a pseudo-distribution of degree at least $\ell-r$. Furthermore, if $\mu$ satisfies the unit sphere (or the hypercube) constraints then so does $\mu'$ as long as $r\leq \ell- 2$.

\paragraph{Sum-of-squares proofs.}
Let $f_1,f_2,\dots,f_m$ and $g$ be multivariate polynomials in $x$.
A \emph{sum-of-squares} proof that the constraints $\{f_1 \geq 0,\dots, f_m \geq 0\}$ imply $g \geq 0$ consists of sum-of-squares polynomials $(p_S)_{S\subseteq[m]}$ such that $g = \sum_{S\subseteq[m]} p_S \prod_{i\in S} f_i$.
The \emph{degree} of such a sum-of-squares proof equals the maximum of the degree of $p_S \prod_{i \in S} f_i$ over all $S$ appearing in the sum above. We write $\{f_i \geq 0,\ \forall i\in[m] \} \sststile{t}{x} \{g \geq 0\}$ where $t$ is the degree of the sum-of-squares proof.

We will rely on the following basic connection between SoS proofs and pseudo-distributions:

\begin{fact}
Suppose $\{f_i \geq 0, \forall i\in[m]\} \sststile{t}{} \{g \geq 0\}$ for some polynomials $f_i$ and $g$.
Let $\mu$ be a pseudo-distribution  of degree $\geq t$ satisfying $\{f_i \geq 0\}_{i\in[m]}$. Then, $\pE_\mu[g] \geq 0$.
\end{fact}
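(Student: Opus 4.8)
The plan is to unwind the definition of a sum-of-squares proof and combine linearity of $\pE_\mu$ with the positivity axioms of pseudo-distributions. By hypothesis there exist sum-of-squares polynomials $(p_S)_{S \subseteq [m]}$ with $g = \sum_{S \subseteq [m]} p_S \prod_{i \in S} f_i$ and $\deg\!\big(p_S \prod_{i \in S} f_i\big) \le t$ for every $S$. First I would apply linearity of $\pE_\mu$ to write $\pE_\mu[g] = \sum_{S \subseteq [m]} \pE_\mu\!\big[p_S \prod_{i \in S} f_i\big]$, reducing the claim to showing that each summand is nonnegative.

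Next, for a fixed $S$ I would expand $p_S = \sum_j (p_{S,j})^2$, which is possible since $p_S$ is a sum of squares. Because leading terms of squares cannot cancel, $\deg\!\big((p_{S,j})^2\big) \le \deg(p_S)$ for each $j$, and hence $\deg\!\big((p_{S,j})^2 \prod_{i \in S} f_i\big) \le t \le \deg(\mu)$. Applying linearity once more gives $\pE_\mu\!\big[p_S \prod_{i \in S} f_i\big] = \sum_j \pE_\mu\!\big[(p_{S,j})^2 \prod_{i \in S} f_i\big]$, so it suffices to show nonnegativity of a single term of the form $\pE_\mu\!\big[(p_{S,j})^2 \prod_{i \in S} f_i\big]$.

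For $S = \emptyset$ this term is $\pE_\mu\!\big[(p_{\emptyset,j})^2\big] \ge 0$ directly from the definition of a degree-$\ge t$ pseudo-distribution, using that $\deg(p_{\emptyset,j}) \le t/2 \le \deg(\mu)/2$. For $S \neq \emptyset$, nonnegativity is precisely the content of $\mu$ \emph{satisfying} the constraints $\{f_i \ge 0\}_{i\in[m]}$: by definition this means $\pE_\mu\!\big[s \prod_{i \in S} f_i\big] \ge 0$ for every sum-of-squares polynomial $s$ and every $S$ with $\deg\!\big(s \prod_{i \in S} f_i\big) \le \deg(\mu)$, and we verified this degree bound above with $s = (p_{S,j})^2$. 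Summing the nonnegative contributions over all $j$ and all $S$ yields $\pE_\mu[g] \ge 0$.

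The content of this argument is a two-line application of linearity and positivity; the only real obstacle is bookkeeping. One must check that the degree of every product $(p_{S,j})^2 \prod_{i \in S} f_i$ genuinely stays within the budget $t$ — which is exactly why $t$ is defined as the largest degree appearing in the SoS proof rather than merely $\deg(g)$ — and one must use the convention for a pseudo-distribution "satisfying" a family of inequality constraints that makes the final step immediate. Once these conventions are fixed, no further work is needed.
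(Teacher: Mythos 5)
Your proof is correct: the paper states this as a standard fact without proof, and your argument -- expanding the SoS certificate, using that each $p_S$ is a sum of squares with no top-degree cancellation, and invoking the definition of a pseudo-distribution satisfying inequality constraints -- is exactly the canonical justification found in the standard references (e.g.\ \cite{FKP19}). The degree bookkeeping and the explicit appeal to the convention for ``satisfying'' $\{f_i \geq 0\}$ are handled properly, so nothing is missing.
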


We next state some standard facts (see \cite{FKP19} for references).

\begin{fact}[SoS generalized triangle inequality] \label{fact:sos-triangle}
    Let $k \in \N$ and $x = (x_1,\dots,x_n)$ be indeterminates.
    \begin{equation*}
        \Set{x_i \geq 0,\ \forall i \in [n]} \sststile{k}{x_1,\dots,x_n} \Set{ \Paren{\frac{1}{n}\sum_{i=1}^n x_i}^k \leq \frac{1}{n} \sum_{i=1}^n x_i^k }\mper
    \end{equation*}
    Moreover, if $k$ is even, then
    \begin{equation*}
        \sststile{k}{x_1,\dots,x_n} \Set{ \Paren{\frac{1}{n}\sum_{i=1}^n x_i}^k \leq \frac{1}{n} \sum_{i=1}^n x_i^k } \mper
    \end{equation*}
\end{fact}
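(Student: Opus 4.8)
The plan is to prove a single bivariate ``supporting-line'' inequality and then average it; this is just the standard sum-of-squares proof of convexity of $u \mapsto u^k$. Write $S = \frac1n\sum_{\ell=1}^n x_\ell$ and consider the degree-$k$ form
\[
    p(x,t) \;=\; x^k - k\,t^{k-1} x + (k-1)\,t^k ,
\]
which has a double root at $x = t$. The crux is to show that $p(x,t)$ admits a degree-$k$ sum-of-squares proof of nonnegativity --- unconditionally when $k$ is even, and from the hypotheses $\{x \geq 0,\ t\geq 0\}$ in general. Granting this, I would substitute $x \mapsto x_i$ and $t \mapsto S$ (note that $\{x_\ell \geq 0\}_{\ell\in[n]}$ trivially sum-of-squares-implies $S \geq 0$), average the resulting inequalities $x_i^k - k S^{k-1} x_i + (k-1) S^k \geq 0$ over $i \in [n]$, and use the identity $\frac1n\sum_i x_i = S$ to collapse the last two terms via $-kS^{k-1}\cdot S + (k-1)S^k = -S^k$. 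What remains is exactly $\frac1n\sum_i x_i^k - S^k \geq 0$, i.e.\ $(\frac1n\sum_i x_i)^k \leq \frac1n\sum_i x_i^k$, and every manipulation preserves the degree-$k$ bound.

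For the even case, I would specialize to the univariate polynomial $q(u) := p(u,1) = u^k - k u + (k-1)$. Since $k$ is even, $q'(u) = k(u^{k-1}-1)$ has the single real root $u = 1$, at which $q$ has a minimum (as $q''(1) = k(k-1) > 0$) of value $q(1) = 0$; hence $q(u) \geq 0$ for all real $u$. A nonnegative univariate polynomial is a sum of squares, and since $q$ has positive leading coefficient in degree $k$ one can take $q = \sum_j c_j^2$ with each $c_j$ of degree at most $k/2$. Homogenizing, $\tilde c_j(x,t) := t^{k/2} c_j(x/t)$ is a form of degree $k/2$ and $\sum_j \tilde c_j(x,t)^2 = t^k\, q(x/t) = p(x,t)$ as a polynomial identity, giving an explicit degree-$k$ sum-of-squares representation of $p$ with no sign constraints.

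For general $k$ (where we are allowed to use $x \geq 0$, $t\geq 0$), I would instead use the algebraic factorization
\[
    p(x,t) \;=\; (x-t)^2 \, h(x,t), \qquad h(x,t) = \sum_{j=1}^{k-1} j\, x^{k-1-j}\, t^{j-1},
\]
checked by a direct multiplication. Since $h$ has nonnegative coefficients, from $\{x \geq 0,\ t \geq 0\}$ each monomial of $h$ is a nonnegative constant times a square times a sub-product of the two hypotheses, which gives a degree-$(k-2)$ sum-of-squares derivation of $h \geq 0$; multiplying through by the square $(x-t)^2$ yields the degree-$k$ derivation of $p \geq 0$. After substituting $t \mapsto S$ and re-expanding $S = \frac1n\sum_\ell x_\ell$, every monomial of $h(x_i,S)$ becomes a nonnegative combination of degree-$(k-2)$ monomials in $x_1,\dots,x_n$, each of which is manifestly sum-of-squares-derivable from $\{x_\ell \geq 0\}$ by writing $\prod_\ell x_\ell^{a_\ell}$ as a perfect square times the product of those $x_\ell$ with $a_\ell$ odd. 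I expect this base claim about $p(x,t)$ to be the only non-routine part: one must bound the degrees of the $c_j$ correctly so that homogenization lands in degree $k/2$, and in the general case spell out the trivial-but-fiddly certificate once $S$ is plugged in. The averaging step and the downstream cancellation $-kS^k + (k-1)S^k = -S^k$ are pure bookkeeping.
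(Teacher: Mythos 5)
Your proof is correct. The paper states this as a standard fact without proof (deferring to the reference [FKP19]), so there is nothing to compare against; your tangent-line argument --- certifying $x^k - k t^{k-1}x + (k-1)t^k \geq 0$ via the factorization $(x-t)^2\sum_{j=1}^{k-1} j\,x^{k-1-j}t^{j-1}$ in the constrained case and via homogenizing the nonnegative univariate $u^k - ku + (k-1)$ in the even case, then substituting $t \mapsto \frac1n\sum_\ell x_\ell$ and averaging --- is the standard route, and all the steps (the factorization, the degree bookkeeping, and the re-expansion of the monomials of $h(x_i,S)$ as squares times sub-products of the hypotheses) check out.
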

Note that the $k=2$ case is the SoS version of the Cauchy-Schwarz inequality for vectors.

\begin{fact}[Cauchy-Schwarz for pseudo-distributions]
\label{fact:sos-cauchy-schwarz}
    Let $f, g$ be polynomials of degree at most $d$ in indeterminate $x$.
    Then, for any degree-$2d$ pseudo-distribution $\mu$ over $x$, we have
    \[
        \pE_{\mu}[fg] \leq \sqrt{\pE_{\mu}[f^2]} \sqrt{\pE_{\mu}[g^2]}\mper
    \]
\end{fact}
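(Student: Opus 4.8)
The plan is to mimic the classical one-parameter proof of the Cauchy--Schwarz inequality, with the expectation over an honest probability distribution replaced by the pseudo-expectation operator $\pE_\mu$ and with positivity invoked only on squares of low-degree polynomials. First I would note that for any scalar $\lambda \in \R$ the polynomial $f - \lambda g$ has degree at most $d$, so $(f-\lambda g)^2$ has degree at most $2d$; since $\mu$ is a degree-$2d$ pseudo-distribution, its defining positivity property gives $\pE_\mu[(f-\lambda g)^2] \geq 0$. Expanding by linearity of $\pE_\mu$ yields
\begin{equation*}
\pE_\mu[f^2] - 2\lambda\,\pE_\mu[fg] + \lambda^2\,\pE_\mu[g^2] \;\geq\; 0 \qquad \text{for all } \lambda \in \R .
\end{equation*}
Also, applying the same property to the degree-$d$ polynomials $f$ and $g$ individually gives $\pE_\mu[f^2]\geq 0$ and $\pE_\mu[g^2]\geq 0$, so the right-hand side of the desired inequality is a well-defined product of nonnegative reals.

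Next I would split into two cases according to whether $\pE_\mu[g^2]>0$ or $\pE_\mu[g^2]=0$. If $\pE_\mu[g^2]>0$, the displayed expression is a genuine quadratic in $\lambda$ that is everywhere nonnegative; substituting the minimizer $\lambda = \pE_\mu[fg]/\pE_\mu[g^2]$ (equivalently, using nonpositivity of the discriminant) yields $\pE_\mu[fg]^2 \leq \pE_\mu[f^2]\,\pE_\mu[g^2]$, and taking square roots gives $\pE_\mu[fg] \leq \sqrt{\pE_\mu[f^2]}\sqrt{\pE_\mu[g^2]}$. If instead $\pE_\mu[g^2]=0$, the displayed inequality reduces to $\pE_\mu[f^2] - 2\lambda\,\pE_\mu[fg] \geq 0$ for all $\lambda \in \R$, which is impossible unless the linear coefficient vanishes, i.e.\ $\pE_\mu[fg]=0$; then $\pE_\mu[fg] = 0 = \sqrt{\pE_\mu[f^2]}\sqrt{\pE_\mu[g^2]}$ and the claim holds with equality.

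The only point requiring any care is the degree bookkeeping: the argument hinges on $(f-\lambda g)^2$ having degree at most $2d$, which is precisely why the hypothesis demands a degree-$2d$ (rather than merely degree-$d$) pseudo-distribution, and on $f-\lambda g$ having degree at most $d$ so that $\pE_\mu$ is actually positive on its square. The degenerate case $\pE_\mu[g^2]=0$ is the other place where a small detour is needed, since one cannot divide by $\pE_\mu[g^2]$ there, but it is dispatched directly as above. Beyond these two bookkeeping points the proof is routine algebra, so I do not anticipate any genuine obstacle.
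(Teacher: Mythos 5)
Your proof is correct, and it is the standard one-parameter quadratic argument for Cauchy--Schwarz, carried over verbatim to pseudo-expectations with the right degree bookkeeping (a degree-$2d$ pseudo-distribution is nonnegative on squares of degree-$\leq d$ polynomials, which is exactly what $(f-\lambda g)^2$ requires). The paper states this as a standard fact with a citation and gives no proof, so there is nothing to compare against beyond noting that your argument, including the careful handling of the degenerate case $\pE_\mu[g^2]=0$, is the canonical one.
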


\begin{fact}[\Holder's inequality for pseudo-distributions]
\label{fact:sos-holders}
    Let $f,g$ be polynomials of degree at most $d$ in indeterminate $x$, and fix any even $t\in \N$.
    Then, for any degree-$td$ pseudo-distribution $\mu$ over $x$, we have
    \[
        \pE_{\mu}[f^{t-1} g] \leq \paren{\pE_{\mu}[f^t]}^{\frac{t-1}{t}} \paren{\pE_{\mu}[g^t]}^{\frac{1}{t}}\mper
    \]
    Furthermore, for any $t\in \N$ and any degree-$2td$ pseudo-distribution $\mu$, we have $\pE_{\mu}\left[f^{2t-2}\right] \leq \pE_{\mu}\left[f^{2t}\right]^{\frac{t-1}{t}}$. 
\end{fact}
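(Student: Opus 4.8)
The plan is to deduce the inequality from a constant-size ``pointwise'' algebraic inequality that admits a low-degree sum-of-squares proof, and then to sharpen the resulting bound by introducing a free scaling parameter that is optimized at the end. For the first (main) part, with $t$ even, I would begin from the bivariate form
\[
  F(u,v) \;\coloneqq\; \frac{t-1}{t}\,u^t + \frac 1t\, v^t - u^{t-1}v .
\]
This is a nonnegative binary form of (even) degree $t$: when $u^{t-1}v \le 0$ this is immediate since $u^t,v^t \ge 0$, and otherwise $u$ and $v$ share a sign, in which case $F(u,v)\ge 0$ is precisely the weighted AM--GM inequality $(|u|^t)^{\frac{t-1}{t}}(|v|^t)^{\frac 1t} \le \frac{t-1}{t}|u|^t + \frac 1t |v|^t$. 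By Hilbert's theorem that every nonnegative binary form is a sum of squares, $F = \sum_j s_j(u,v)^2$ with each $s_j$ a form of degree $t/2$; the structure of this decomposition is exposed by the factorization $tF = (u-v)^2 Q(u,v)$ with $Q$ a nonnegative binary form of degree $t-2$ (which one can exhibit explicitly for small $t$). Substituting $u=f$ and $v=g$ turns this into a degree-$td$ sum-of-squares proof of $\frac{t-1}{t}f^t + \frac 1t g^t - f^{t-1}g \ge 0$.

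Next I would apply the same identity with $u = \lambda f$ and $v = g$ for a parameter $\lambda>0$; dividing through by $\lambda^{t-1}$ yields the degree-$td$ sum-of-squares bound $f^{t-1}g \le \frac{t-1}{t}\lambda f^t + \frac{1}{t\lambda^{t-1}} g^t$. Applying $\pE_\mu$ --- legitimate since $\mu$ has degree $td$ and every summand is a square of a polynomial of degree at most $td/2$ --- and writing $A \coloneqq \pE_\mu[f^t] \ge 0$, $B \coloneqq \pE_\mu[g^t] \ge 0$ (both nonnegative since $t$ is even), we obtain $\pE_\mu[f^{t-1}g] \le \frac{t-1}{t}\lambda A + \frac{1}{t\lambda^{t-1}} B$ for every $\lambda>0$. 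When $A,B>0$, the right-hand side is minimized at $\lambda = (B/A)^{1/t}$, where it equals exactly $A^{(t-1)/t}B^{1/t}$, which is the asserted bound. The degenerate cases are settled by letting $\lambda \to \infty$ (when $A=0$) or $\lambda \to 0$ (when $B=0$), each of which forces $\pE_\mu[f^{t-1}g] \le 0 = A^{(t-1)/t}B^{1/t}$.

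For the ``furthermore'' statement, valid for arbitrary $t$, I would set $h \coloneqq f^2$, a polynomial of degree at most $2d$. The univariate weighted AM--GM inequality $h^{t-1} \le \frac{t-1}{t}\lambda h^t + \frac{1}{t\lambda^{t-1}}$ holds for all $h\ge 0$ and all $t$, with a degree-$t$ sum-of-squares proof from the axiom $h\ge 0$ (a univariate polynomial nonnegative on $[0,\infty)$ can be written $\sigma_0(h) + h\,\sigma_1(h)$ with $\sigma_0,\sigma_1$ sums of squares). Substituting $h = f^2$ --- which is itself a square, so no axiom survives in the end --- produces a sum-of-squares bound $f^{2t-2} \le \frac{t-1}{t}\lambda f^{2t} + \frac{1}{t\lambda^{t-1}}$ of degree at most $2td$ in the variables $x$. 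Taking $\pE_\mu$ and optimizing over $\lambda$ as before yields $\pE_\mu[f^{2t-2}] \le \pE_\mu[f^{2t}]^{(t-1)/t}$, with the case $\pE_\mu[f^{2t}]=0$ handled by observing $\pE_\mu[f^{2t-2}] = \pE_\mu[(f^{t-1})^2] \ge 0$.

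The only genuine content here is the sum-of-squares decomposition of $F$ (equivalently, the sum-of-squares form of weighted AM--GM), together with the bookkeeping that this proof has degree exactly $t$, so that after substituting degree-$d$ polynomials we stay within the degree $td$ (respectively $2td$) of the pseudo-distribution; everything else is the one-parameter optimization and keeping track of which pseudo-expectations are guaranteed to be nonnegative. Alternatively, one may simply invoke the known sum-of-squares AM--GM inequality (e.g.\ from \cite{FKP19}), after which the whole argument is routine.
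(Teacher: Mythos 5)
Your proof is correct. The paper itself does not prove this fact --- it is stated as standard with a pointer to \cite{FKP19} --- so there is no in-paper argument to compare against; the usual derivation in the literature goes by iterating pseudo-expectation Cauchy--Schwarz (e.g.\ $\pE[f^{t-1}g]\le \pE[f^t]^{1/2}\pE[f^{t-2}g^2]^{1/2}$, yielding log-convexity of the moment sequence and then the exponents by telescoping). Your route instead packages the whole inequality into a single nonnegative binary form $F(u,v)=\frac{t-1}{t}u^t+\frac{1}{t}v^t-u^{t-1}v$, certifies it by Hilbert's theorem on binary forms, and recovers the sharp constant by optimizing the scaling $\lambda$; the degree bookkeeping is right (squares of forms of degree $t/2$ become squares of polynomials of degree $\le td/2$, exactly what a degree-$td$ pseudo-distribution handles), and the degenerate cases $A=0$ or $B=0$ are correctly dispatched by the limits $\lambda\to\infty$ or $\lambda\to 0$. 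The one cosmetic slip is in the ``furthermore'' part: when $\pE_\mu[f^{2t}]=0$ the needed conclusion is $\pE_\mu[f^{2t-2}]\le 0$, which follows from letting $\lambda\to\infty$ in your bound (for $t\ge 2$), not from the observation that $\pE_\mu[(f^{t-1})^2]\ge 0$; the latter only confirms equality afterwards. The trade-off between the two approaches is that iterated Cauchy--Schwarz is entirely elementary and needs no appeal to Hilbert's theorem or to representations of polynomials nonnegative on a half-line, whereas your argument generalizes more transparently to other exponent pairs and makes the underlying SoS certificate explicit.
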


\subsection{Basic rounding algorithms}

We state here three standard rounding algorithms for the quadratic case.

\begin{fact}[Grothendieck rounding~\cite{AN06}]
\label{fact:grothendieck}
    There is a rounding algorithm that, given any degree-2 pseudo-distribution $\mu$ over $x,y\in \pmo^n$, outputs $\ol{x}, \ol{y}\in \pmo^n$ such that for any matrix $M\in \R^{n\times n}$,
    $\ol{x}^\top M \ol{y} \geq \frac{1}{K_G} \cdot \pE_{\mu}\left[x^\top M y\right]$,
    where $K_G < 1.783$ is the Grothendieck constant.
\end{fact}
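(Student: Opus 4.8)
The plan is to recover the algorithmic proof of the Grothendieck inequality in the language of pseudo-distributions; a convenient feature is that the rounding itself will be \emph{oblivious} to $M$, with the stated inequality holding for every $M$ in expectation, and we then fix $M$ only to derandomize. First I would extract a vector representation of the degree-$2$ pseudo-moments of $\mu$. Consider the $2n \times 2n$ matrix $N$ with rows and columns indexed by $x_1,\dots,x_n,y_1,\dots,y_n$ and entries $N_{ab} = \pE_\mu[ab]$. For any $w = (a,b) \in \R^n \times \R^n$ we have $w^\top N w = \pE_\mu\big[(\sum_i a_i x_i + \sum_j b_j y_j)^2\big] \ge 0$, since the polynomial inside has degree $1$ and $\mu$ has degree $2$; hence $N \succeq 0$. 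Its diagonal entries are all $1$ because $\mu$ satisfies the hypercube constraints (for degree $2$ this just says $\pE_\mu[x_i^2] = \pE_\mu[y_j^2] = 1$). Writing $N$ as a Gram matrix then yields unit vectors $u_1,\dots,u_n,v_1,\dots,v_n \in \R^{2n}$ with $\pE_\mu[x_i y_j] = \langle u_i, v_j\rangle$, so that $\pE_\mu[x^\top M y] = \sum_{i,j} M_{ij} \langle u_i, v_j\rangle$ for every matrix $M$.

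Next I would apply Krivine's transformation. Set $c = \ln(1+\sqrt 2)$, the unique positive solution of $\sinh(c) = 1$; note $c < \pi/2$. Writing the Taylor series $\sin(ct) = \sum_{k \text{ odd}} b_k t^k$ with $b_k = (-1)^{(k-1)/2} c^k / k!$, we have $\sum_{k \text{ odd}} |b_k| = \sinh(c) = 1$. Define $\hat u_i = \bigoplus_{k \text{ odd}} \sqrt{|b_k|}\, u_i^{\otimes k}$ and $\hat v_j = \bigoplus_{k \text{ odd}} \sign(b_k) \sqrt{|b_k|}\, v_j^{\otimes k}$; these lie in a subspace of dimension at most $2n$ (or one may truncate the series). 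Then $\|\hat u_i\|^2 = \|\hat v_j\|^2 = \sum_{k} |b_k| = 1$, and using $\langle u^{\otimes k}, v^{\otimes k}\rangle = \langle u, v\rangle^k$ we get $\langle \hat u_i, \hat v_j\rangle = \sum_{k \text{ odd}} b_k \langle u_i, v_j\rangle^k = \sin(c \langle u_i, v_j\rangle)$.

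Then I would round with a random hyperplane: draw a standard Gaussian $g$ in the ambient space and set $\ol{x}_i = \sign(\langle \hat u_i, g\rangle)$ and $\ol{y}_j = \sign(\langle \hat v_j, g\rangle)$. By Grothendieck's identity, $\E_g[\sign(\langle a, g\rangle)\sign(\langle b, g\rangle)] = \frac{2}{\pi}\arcsin\langle a, b\rangle$ for unit $a,b$, so $\E_g[\ol{x}_i \ol{y}_j] = \frac{2}{\pi}\arcsin(\sin(c\langle u_i,v_j\rangle)) = \frac{2c}{\pi}\langle u_i, v_j\rangle$, the last step using $|c\langle u_i, v_j\rangle| \le c < \pi/2$. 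Summing against $M$ and using linearity, $\E_g[\ol{x}^\top M \ol{y}] = \frac{2c}{\pi} \sum_{i,j} M_{ij}\langle u_i, v_j\rangle = \frac{2c}{\pi}\, \pE_\mu[x^\top M y]$ for every $M$. Since $\frac{2c}{\pi} = \frac{2\ln(1+\sqrt 2)}{\pi} = 1/K_G$ with $K_G = \frac{\pi}{2\ln(1+\sqrt 2)} < 1.783$, for any fixed $M$ some realization of $g$ gives $\ol{x}^\top M \ol{y} \ge \frac{1}{K_G}\pE_\mu[x^\top M y]$; Alon and Naor show such a realization can be found in polynomial time by derandomization (method of conditional expectations), which yields the stated algorithm.

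I expect the only genuinely nontrivial step to be the Krivine transformation: one must realize $\sin(c\langle u_i, v_j\rangle)$ as an honest inner product of \emph{unit} vectors, which forces the alternating signs of the sine series to be absorbed into one side of the tensor construction and forces the precise choice $\sinh(c) = 1$, so that the vectors are normalized while still $c < \pi/2$ --- the latter being exactly what makes $\arcsin$ undo $\sin$ on the relevant range and what ultimately produces the constant $1/K_G$. The extraction of the Gram vectors, Grothendieck's identity itself, and the derandomization step are all routine.
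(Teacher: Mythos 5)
The paper does not prove this statement---it imports it as a fact from Alon--Naor \cite{AN06}---and your reconstruction is exactly the standard argument from that source (Krivine's embedding plus random-hyperplane rounding), correctly adapted to pseudo-distributions: the positivity of the joint moment matrix, the unit diagonal from the hypercube constraints, the choice $\sinh(c)=1$, and the computation $\tfrac{2c}{\pi}=1/K_G$ are all right, and your parenthetical about realizing the tensorized vectors in dimension $2n$ (their Gram matrix is $\sinh$/$\sin$ of $c$ times the original inner products, so it is explicitly computable) disposes of the only algorithmic subtlety. Your reading of the quantifier---that the rounding is oblivious and the guarantee holds in expectation for every $M$, with $M$ fixed only for derandomization---is also the right way to interpret the Fact as stated, since a single deterministic pair $(\ol x,\ol y)$ cannot satisfy the inequality for all $M$ simultaneously.
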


\begin{lemma}[Lossless rounding on the unit sphere] \label{lem:rounding-sphere}
    Given any degree-2 pseudo-distribution $\mu$ over $x \in \calS^{n-1}$, there is a rounding algorithm that outputs $\ol{x} \in \calS^{n-1}$ such that for any matrix $M\in \R^{n\times n}$,
    $\ol{x}^\top M \ol{x} \geq \pE_{\mu}\left[x^\top M x\right]$.
\end{lemma}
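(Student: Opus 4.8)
The plan is to observe that the degree-2 SoS relaxation of quadratic maximization over the sphere is \emph{exactly} tight, so the ``rounding'' is nothing more than reading off a top eigenvector. Concretely, the algorithm — which, as with Grothendieck rounding above, is allowed to use the objective matrix $M$ — first symmetrizes: since $x^\top M x = x^\top A x$ for $A \coloneqq \tfrac12(M + M^\top)$, it computes $A$, finds a unit vector $\ol x \in \calS^{n-1}$ achieving the largest eigenvalue $\lambda_{\max}(A)$ of $A$, and outputs $\ol x$. Then $\ol x^\top M \ol x = \ol x^\top A \ol x = \lambda_{\max}(A)$, and this step runs in polynomial time.

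Next I would bound the pseudo-distribution's value by $\lambda_{\max}(A)$. Let $X \coloneqq \pE_{\mu}[x x^\top] \in \R^{n \times n}$ be the degree-2 pseudo-moment matrix. I claim (i) $X \succeq 0$ and (ii) $\mathrm{tr}(X) = 1$. For (i), for every $v \in \R^n$ the polynomial $\langle v, x\rangle^2$ has degree $2$, so $v^\top X v = \pE_{\mu}[\langle v, x\rangle^2] \geq 0$. For (ii), $\mathrm{tr}(X) = \pE_{\mu}[\norm{x}_2^2] = \pE_{\mu}[1] = 1$, using the unit-sphere constraint $\pE_{\mu}[\norm{x}_2^2\, p] = \pE_{\mu}[p]$ with $p = 1$ together with $\pE_{\mu}\, 1 = 1$. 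Now write a spectral decomposition $A = \sum_{i=1}^n \lambda_i u_i u_i^\top$ in an orthonormal eigenbasis $u_1,\dots,u_n$. Then $\pE_{\mu}[x^\top M x] = \pE_{\mu}[x^\top A x] = \langle A, X\rangle = \sum_{i=1}^n \lambda_i\, u_i^\top X u_i \leq \lambda_{\max}(A) \sum_{i=1}^n u_i^\top X u_i = \lambda_{\max}(A)\, \mathrm{tr}(X) = \lambda_{\max}(A)$, where the inequality uses $u_i^\top X u_i \geq 0$ from (i). Combining with the previous paragraph, $\pE_{\mu}[x^\top M x] \leq \lambda_{\max}(A) = \ol x^\top M \ol x$, which is the claim.

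There is essentially no hard step here — this is the classical fact that sphere-constrained quadratic maximization has no integrality gap — so I do not expect a genuine obstacle. The only points needing a little care are: (a) handling a possibly non-symmetric $M$ by passing to $A = \tfrac12(M+M^\top)$, so that ``top eigenvector'' is well defined while the quadratic form $x \mapsto x^\top M x$ is unchanged; and (b) verifying that a degree-$2$ pseudo-distribution already suffices, since the only polynomials used are $\langle v, x\rangle^2$ and $\norm{x}_2^2$ (both of degree $\le 2$) and the unit-sphere constraint is invoked only for $p = 1$, which is within the degree budget.
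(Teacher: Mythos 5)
Your proof is correct, but it takes a different (equally standard) route from the paper's. You diagonalize the \emph{objective}: you symmetrize $M$ to $A=\tfrac12(M+M^\top)$, output a top eigenvector of $A$, and then certify $\pE_\mu[x^\top Mx]=\langle A,X\rangle\le \lambda_{\max}(A)\tr(X)=\lambda_{\max}(A)$ using $X=\pE_\mu[xx^\top]\succeq 0$ and $\tr(X)=1$. The paper instead diagonalizes the \emph{moment matrix}: writing $X=\sum_i\lambda_i v_iv_i^\top$ with $\lambda_i\ge 0$ and $\sum_i\lambda_i=1$, it views $\{\lambda_i\}$ as a probability distribution over the eigenvectors $v_i$ and notes that $\sum_i \lambda_i v_i^\top Mv_i=\langle X,M\rangle=\pE_\mu[x^\top Mx]$, so some $v_i$ attains at least the pseudo-expectation. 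Both arguments are the classical "no integrality gap for sphere-constrained quadratics" fact and both need only degree $2$ and the sphere constraint at $p=1$, exactly as you checked. The one substantive difference is that the paper's candidate set $\{v_1,\dots,v_n\}$ is computed from $\mu$ alone and $M$ enters only in selecting the best candidate, whereas your output vector is built directly from $M$; your version gives a single deterministic vector and an explicit upper-bound certificate $\lambda_{\max}(A)$ on the SDP value, while the paper's gives a rounding distribution whose expectation exactly matches the pseudo-expectation. Either is fine for how the lemma is used downstream.
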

\begin{proof}
    Let $X = \pE_{\mu}\left[xx^\top\right]$ and let $X = \sum_{i=1}^n \lambda_i v_i v_i^\top$ be its eigen-decomposition with $\lambda_i \geq 0$ and $v_i \in \calS^{n-1}$ for all $i\in[n]$.
    Since $\mu$ is over the unit sphere, $\tr(X) = \sum_{i=1}^n \lambda_i = 1$, hence $\{\lambda_i\}_{i\in[n]}$ defines a valid probability distribution.
    Now if we sample $\ol{x} = v_i$ with probability $\lambda_i$ for all $i\in [n]$, we get
    $\E[\ol{x}^\top M \ol{x}] = \sum_{i=1}^n \lambda_i v_i^\top M v_i = \iprod{X, M} = \pE_{\mu}\left[x^\top M x\right]$.
    In particular one of $v_1, \ldots, v_n$ must have quadratic form at least $\pE_\mu\left[x^\top Mx\right]$.
\end{proof}

\begin{theorem}[Charikar-Wirth rounding \cite{CW04}]
    \label{thm:charikar-wirth}
    Given any degree-2 pseudo-distribution $\mu$ over $x\in\{\pm 1\}^n$, for any $T > 0$, there is a polynomial-time sampleable distribution $\mathcal D$ supported on $\{\pm 1\}^n$ such that for any matrix $M\in \R^{n\times n}$ with zero diagonal entries,
    \[
        \E_{\ol{x}\sim \calD} \left[\ol{x}^\top M \ol{x}\right]
        \ge \frac{1}{T^2} \cdot \pE_\mu \left[x^\top M x\right] - 8e^{-\frac{T^2}{2}} \sum_{i,j\in[n]} |M_{ij}| \mper
    \]
    In particular, if $\mu$ is an optimal pseudo-distribution for the degree-2 SoS relaxation of $\max_{x\in\pmo^n} x^\top Mx$, then by picking $T=\Theta(\sqrt{\log n})$,
    \[
        \E_{\ol{x}\sim \calD} \left[\ol{x}^\top M \ol{x}\right]\ge \Omega\left(\frac{1}{\log n}\right) \max_{x\in \pmo^n} x^\top Mx\mper
    \]
\end{theorem}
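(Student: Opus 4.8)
The plan is to extract an SDP solution from $\mu$ and run the Charikar--Wirth rounding: a Gaussian projection, a coordinate-wise truncation at level $T$, and an independent randomized rounding to $\pmo$. Since $\ol{x}^\top M \ol{x} = \ol{x}^\top \Paren{\tfrac{M + M^\top}{2}} \ol{x}$ holds pointwise and $\sum_{i,j} \bigl| \tfrac{M_{ij} + M_{ji}}{2} \bigr| \le \sum_{i,j} |M_{ij}|$, we may assume $M$ is symmetric (still with zero diagonal). A degree-$2$ pseudo-distribution over $\pmo^n$ has moment matrix $\Sigma \coloneqq \pE_{\mu}[xx^\top] \succeq 0$ with $\Sigma_{ii} = \pE_{\mu}[x_i^2] = 1$, so $\Sigma_{ij} = \iprod{v_i, v_j}$ for unit vectors $v_1, \dots, v_n$, and because $M$ has zero diagonal, $\pE_{\mu}[x^\top M x] = \sum_{i \neq j} M_{ij}\iprod{v_i, v_j}$.

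For the rounding, sample $g \sim N(0, I_n)$ and set $t_i \coloneqq \iprod{v_i, g}$, so the $t_i$ are jointly Gaussian, each $N(0,1)$, with $\E[t_i t_j] = \iprod{v_i, v_j}$. Define $y_i \coloneqq t_i / T$ if $|t_i| \le T$ and $y_i \coloneqq \sign(t_i)$ otherwise, so $y_i \in [-1, 1]$; then round each coordinate independently to $\ol{x}_i \in \pmo$ with $\E[\ol{x}_i \mid g] = y_i$. This defines a poly-time sampleable distribution $\calD$, and since $M$ has zero diagonal and the $\ol{x}_i$ are conditionally independent given $g$, we get $\E_{\ol{x} \sim \calD}[\ol{x}^\top M \ol{x}] = \sum_{i \neq j} M_{ij}\, \E_g[y_i y_j]$.

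The heart of the argument is the pairwise estimate: for every $i \neq j$, with $\rho \coloneqq \iprod{v_i, v_j} \in [-1,1]$,
\[
    \bigl| \E_g[y_i y_j] - \tfrac{1}{T^2} \rho \bigr| \le 8\, e^{-T^2/2}
\]
(say, for $T \ge 1$; the range $T < 1$ is immaterial since we will take $T = \Theta(\sqrt{\log n})$). I would prove it by writing $y_i = t_i/T + e_i$, where the ``excess'' $e_i$ vanishes unless $|t_i| > T$, and expanding $\E[y_i y_j] = \tfrac{1}{T^2}\E[t_i t_j] + \tfrac1T \E[t_i e_j] + \tfrac1T \E[e_i t_j] + \E[e_i e_j]$; the first term is exactly $\rho/T^2$. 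For a cross term, conditioning gives $\E[t_i \mid t_j] = \rho\, t_j$, hence $\E[t_i e_j] = \rho\, \E[t_j e_j]$, and elementary Gaussian tail integrals give $\E[t_j e_j] = -2(1 - \Phi(T))/T$ and $\E[e_j^2] \le 2\phi(T)/T^3$; bounding $\E[e_i e_j]$ by Cauchy--Schwarz and using $1 - \Phi(T) \le \tfrac12 e^{-T^2/2}$ and $\phi(T) \le \tfrac12 e^{-T^2/2}$ makes the three error terms sum to at most $e^{-T^2/2}(2/T^2 + 1/T^3) \le 8 e^{-T^2/2}$ for $T \ge 1$. Multiplying by $M_{ij}$, summing over $i \neq j$, and using $\sum_{i \neq j} |M_{ij}| \le \sum_{i,j} |M_{ij}|$ gives the displayed inequality of the theorem.

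For the ``in particular'' claim I would set $T = c\sqrt{\log n}$, so $1/T^2 = \Theta(1/\log n)$ and $e^{-T^2/2} = n^{-c^2/2}$. Optimality of $\mu$ gives $\pE_{\mu}[x^\top M x] \ge \mathrm{OPT} \coloneqq \max_{x \in \pmo^n} x^\top M x$; and for any $i \neq j$, fixing $x_i, x_j$ to align the sign of $M_{ij}$ and drawing $x_k$ uniformly for $k \neq i,j$ shows $\mathrm{OPT} \ge 2|M_{ij}|$, whence $\sum_{i,j} |M_{ij}| \le n^2 \max_{i\neq j} |M_{ij}| \le \tfrac{n^2}{2}\,\mathrm{OPT}$. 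Choosing $c$ a large enough absolute constant makes $8 e^{-T^2/2}\sum_{i,j}|M_{ij}| \le \tfrac{1}{2T^2}\,\mathrm{OPT}$, leaving $\E_{\ol{x}\sim\calD}[\ol{x}^\top M\ol{x}] \ge \tfrac{1}{2T^2}\,\mathrm{OPT} = \Omega(1/\log n)\,\mathrm{OPT}$. I expect the pairwise Gaussian estimate to be the only real obstacle: the conditioning identity $\E[t_i e_j] = \rho\, \E[t_j e_j]$ is essential, since a naive Cauchy--Schwarz on $\E[t_i e_j]$ only yields $\sqrt{\E[e_j^2]} \approx \sqrt{\phi(T)}$, which is far too weak, and pinning down the constant needs the exact tail identities rather than crude Gaussian bounds.
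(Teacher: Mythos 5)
Your proposal is correct: it is precisely the Charikar--Wirth rounding argument from \cite{CW04}, which the paper invokes as a black box without reproducing a proof, and your Gaussian tail computations (in particular the conditioning identity $\E[t_i e_j] = \rho\,\E[t_j e_j]$ and the exact value $\E[t_j e_j] = -2(1-\Phi(T))/T$, which are indeed needed because naive Cauchy--Schwarz is too lossy) check out. The only caveat is that your pairwise estimate, and in fact the first displayed inequality of the theorem itself, hold only for $T$ bounded away from $0$ (the statement's ``for any $T>0$'' is an over-claim: as $T\to 0$ the right-hand side can blow up while the left-hand side stays bounded by $\sum_{i,j}|M_{ij}|$); you correctly observe that this is immaterial for the intended application with $T=\Theta(\sqrt{\log n})$.
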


\subsection{Anti-concentration}

We will need the following anti-concentration result that can be deduced from standard hypercontractivity (see e.g. \cite[Lemma 3.2]{AGK04}).

\begin{lemma}
    \label{lem:anti-concentration-hypercontractivity}
    Let $\calD$ be a distribution over $\R^n$ satisfying the following: there exists a constant $B > 0$ such that $\E_{x\sim \calD} [p(x)^4] \leq B^d \cdot \E_{x\sim \calD}\left[p(x)^2\right]^2$ for every degree-$d$ polynomial $p$.
    Then, for any degree-$d$ polynomial $p$,
    \begin{equation*}
        \Pr_{x\sim \calD} \Brac{p(x) > \E_\calD\, p} \geq 2^{-\frac{4}{3}} B^{-d} \mper
    \end{equation*}
\end{lemma}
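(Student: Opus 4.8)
The plan is to prove this by a Paley--Zygmund-type argument on the second and fourth moments of the \emph{centered} polynomial. First I would set $q \coloneqq p - \E_{\calD} p$, so that $q$ is a polynomial of degree at most $d$ with $\E_{\calD} q = 0$, and the target becomes $\Pr_{x\sim\calD}[q(x) > 0] \ge 2^{-4/3} B^{-d}$. (We may assume $p$ is not $\calD$-almost surely constant, i.e.\ $\E_{\calD}[q^2] > 0$, which is necessary for the claim to hold.) Writing $q_+ \coloneqq \max(q,0)$ and $q_- \coloneqq \max(-q,0)$, so that $q = q_+ - q_-$ and $|q| = q_+ + q_-$, the hypothesis $\E_{\calD} q = 0$ is used exactly once, through the identity $\E_{\calD}[q_+] = \E_{\calD}[q_-] = \frac{1}{2}\E_{\calD}[|q|]$.

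The first step bounds $\Pr[q>0]$ from below using H\"older's inequality with exponents $4$ and $\frac43$. Since $q_+$ is supported on the event $\{q>0\}$,
\[
    \tfrac{1}{2}\,\E_{\calD}[|q|] \;=\; \E_{\calD}\bigl[q_+ \cdot \mathbf{1}[q>0]\bigr] \;\le\; \E_{\calD}[q_+^4]^{1/4}\cdot\Pr[q>0]^{3/4} \;\le\; \E_{\calD}[q^4]^{1/4}\cdot\Pr[q>0]^{3/4},
\]
which rearranges to $\Pr[q>0] \ge 2^{-4/3}\,\E_{\calD}[|q|]^{4/3}\,\E_{\calD}[q^4]^{-1/3}$; this is precisely where the constant $2^{-4/3}$ in the statement comes from, and replacing H\"older here by plain Cauchy--Schwarz would give only the weaker constant $2^{-2}$. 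The second step lower-bounds $\E_{\calD}[|q|]$ by interpolating the $L^2$ norm between $L^1$ and $L^4$: writing $|q|^2 = |q|^{2/3}\cdot|q|^{4/3}$ and applying H\"older with exponents $\frac32$ and $3$ gives $\E_{\calD}[q^2] \le \E_{\calD}[|q|]^{2/3}\,\E_{\calD}[q^4]^{1/3}$, i.e.\ $\E_{\calD}[|q|] \ge \E_{\calD}[q^2]^{3/2}\,\E_{\calD}[q^4]^{-1/2}$. Substituting this into the previous inequality yields
\[
    \Pr[q>0] \;\ge\; 2^{-4/3}\,\frac{\E_{\calD}[q^2]^2}{\E_{\calD}[q^4]}.
\]
Finally, since $q$ has degree at most $d$, the hypercontractivity hypothesis applied to $q$ gives $\E_{\calD}[q^4] \le B^d\,\E_{\calD}[q^2]^2$, hence $\Pr[q>0] \ge 2^{-4/3}B^{-d}$, which is the claim.

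I do not expect a real obstacle here: the whole argument is elementary moment juggling. The only points requiring care are (i) choosing the H\"older exponents (the $4$ and $\frac43$ split for the one-sided bound, the $\frac32$ and $3$ split for the interpolation) so that the constants collapse to exactly $2^{-4/3}$ rather than something lossier, and (ii) keeping in mind that a couple of degenerate cases are implicitly excluded — namely $d = 0$, where $q \equiv 0$ and no point exceeds its mean, and distributions supported on the zero set of $q$, where $\E_{\calD}[q^2] = 0$.
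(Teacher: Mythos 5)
Your proof is correct and is essentially the standard fourth-moment, Paley--Zygmund-type argument that the paper itself does not spell out but defers to by citing \cite[Lemma 3.2]{AGK04}; the constant $2^{-4/3}$ arises exactly as you derive it from the H\"older split with exponents $4$ and $\tfrac43$ followed by the $L^2$-between-$L^1$-and-$L^4$ interpolation. Your caveat about the degenerate cases ($p$ constant, or $\E_{\calD}[q^2]=0$) is also apt, since the lemma as stated silently excludes them.
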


Relevant special cases for what follows are when $\calD$ is the uniform distribution over $\pmo^n$ or the standard Gaussian distribution $\calN(0,I_n)$, which both satisfy the assumption with $B = 9$ (see e.g.\ for reference \cite[Theorem 9.21]{O14} and \cite[Theorem 1.6.2]{Bog98}).

\subsection{Decoupling}

A standard technique for polynomial optimization problems is  \emph{decoupling}, which relates the optimum of $\iprod{T, x^{\ot 3}}$ (the ``coupled'' polynomial) to the optimum of $\iprod{T, x \otimes y \otimes z}$ (the ``decoupled'' polynomial). For polynomial optimization over the $n$-dimensional hypercube or the unit sphere, these two quantities are within a constant factor from each other when the polynomial has \textit{degree~3}.
Thus, in the rest of the paper, we will assume that the given polynomial is already decoupled.

\begin{lemma}[Decoupling \cite{KN08,HLZ10}]
    \label{lem:decoupling}
    Let $\Omega$ be either $\{-1,1\}^n$ or $\calS^{n-1}$.
    Consider a multilinear homogeneous degree-3 polynomial in $n$ variables $f(x) = \sum_{i,j,k=1}^n T_{ijk} x_i x_j x_k$ (where $T$ is a symmetric $3$-tensor). Consider also  the decoupled version of $f$: $\wt{f}(x,y,z) = \sum_{i,j,k=1}^n T_{ijk} x_i y_j z_k$.
    Then,
    \[
        \max_{x\in \Omega} f(x)
        \geq \frac{2}{9} \cdot \max_{x,y,z\in \Omega} \wt f(x,y,z) \mper
    \]
\end{lemma}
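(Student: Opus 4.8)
The plan is to use the standard randomization trick: introduce independent random signs (Rademacher variables) $\varepsilon_1,\varepsilon_2,\varepsilon_3$ and apply them coordinate-wise to a fixed optimal point $x^\ast$ of the decoupled objective, then argue that the ``diagonal'' collisions are controlled. Concretely, fix $x^\ast,y^\ast,z^\ast \in \Omega$ achieving $M \coloneqq \max_{x,y,z\in\Omega}\wt f(x,y,z)$. For each coordinate $i\in[n]$ draw an independent uniform triple $(\varepsilon^{(1)}_i,\varepsilon^{(2)}_i,\varepsilon^{(3)}_i)\in\{-1,1\}^3$ conditioned on $\varepsilon^{(1)}_i+\varepsilon^{(2)}_i+\varepsilon^{(3)}_i \in \{-1,1\}$ — equivalently, not all three equal — which is the gadget that makes the off-diagonal terms survive in expectation and the on-diagonal terms vanish. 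Define the random point $u_i \coloneqq \tfrac13(\varepsilon^{(1)}_i x^\ast_i + \varepsilon^{(2)}_i y^\ast_i + \varepsilon^{(3)}_i z^\ast_i)$ and set $x \coloneqq u$, $y\coloneqq u$, $z \coloneqq u$ after an appropriate rescaling. The key computation is $\E[\varepsilon^{(1)}_i\varepsilon^{(1)}_j\varepsilon^{(1)}_k]$-type moments: for $i,j,k$ distinct the three factors are independent across coordinates so the contribution to $\E[f(u)]$ of the $T_{ijk}x^\ast_iy^\ast_jz^\ast_k$ term (and its $3!$ symmetric copies) is a fixed positive constant times $T_{ijk}x^\ast_iy^\ast_jz^\ast_k$; for indices with a repeat, the conditioning forces the relevant mixed moment to be zero because within a coordinate $\E[\varepsilon^{(a)}_i\varepsilon^{(b)}_i]=0$ for $a\neq b$ under the ``not all equal'' law, and $\E[(\varepsilon^{(a)}_i)^2\varepsilon^{(b)}_i]=0$ as well. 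Hence $\E[f(u)] = c\cdot \wt f(x^\ast,y^\ast,z^\ast) = cM$ for an explicit constant $c$, and by averaging there is a realization of the signs with $f(u)\ge cM$.

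The remaining issue is feasibility: we need the realized point to lie in $\Omega$ up to rescaling, and we need the constant $c$ to come out to $2/9$ after that rescaling. Over the hypercube $\{-1,1\}^n$ each $u_i$ is a $\pm$-average of three signs and lies in $[-1,1]$; a standard additional randomization (independently rounding each $u_i$ to $\pm1$ with the right bias, or observing $f$ is multilinear hence affine in each coordinate so an extreme point of the box does at least as well) lets us move to a genuine hypercube point without decreasing $\E[f]$ — this is where multilinearity of $f$ is essential. Over the sphere $\calS^{n-1}$ one normalizes $u/\|u\|_2$; since $\|u\|_2 \le 1$ pointwise (each $|u_i|\le \tfrac13(|x^\ast_i|+|y^\ast_i|+|z^\ast_i|)$, and one bounds $\sum u_i^2$ using $\|x^\ast\|=\|y^\ast\|=\|z^\ast\|=1$), normalizing only helps, i.e. $f(u/\|u\|_2) = f(u)/\|u\|_2^3 \ge f(u)$. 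Tracking the constant $c$ through the conditioned Rademacher moments and this normalization is the routine-but-careful part; the factor $2/9$ is exactly what the ``not-all-equal'' conditioning produces (the relevant per-coordinate cross moment is $-\tfrac13$ in magnitude in the naive scheme, and the conditioning plus the $\tfrac1{27}$ from $(\tfrac13)^3$ combine to $\tfrac{2}{9}$ after collecting the six symmetric monomials).

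The main obstacle I expect is not the probabilistic identity but making the feasibility step fully rigorous in a unified way for both $\Omega = \{-1,1\}^n$ and $\Omega = \calS^{n-1}$, and in particular justifying the hypercube rounding: one must use that $f$ is \emph{multilinear} (no $x_i^2$ terms) so that restricting to one coordinate at a time is affine and an endpoint $\pm1$ is optimal, which is exactly why the lemma is stated for multilinear $f$. A clean alternative that sidesteps case analysis is to phrase everything via a single symmetrization inequality: $\wt f(x,y,z) \le$ (const)$\cdot \E_{\varepsilon}\big[ f(\varepsilon^{(1)}\odot x + \varepsilon^{(2)}\odot y + \varepsilon^{(3)}\odot z)\big]$ with the conditioned signs, valid as an identity of polynomials, and then invoke the appropriate rounding/normalization lemma for each domain. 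I would present the identity first, verify the constant on a single monomial $T_{ijk}x_iy_jz_k$ (by symmetry this suffices), and then dispatch the two domains in two short paragraphs.
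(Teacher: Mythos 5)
Your high-level plan (symmetrize with correlated signs, use multilinearity to kill the rounding loss) is the right one, but the probabilistic gadget at the heart of your argument is broken in two independent ways. First, conditioning a uniform triple of signs on ``not all three equal'' makes the moment you need \emph{vanish}: of the six surviving triples, exactly three have $\varepsilon^{(1)}\varepsilon^{(2)}\varepsilon^{(3)}=+1$ and three have product $-1$, so $\E[\varepsilon^{(1)}\varepsilon^{(2)}\varepsilon^{(3)}]=0$. Since the cross terms $T_{ijk}x^*_iy^*_jz^*_k$ in $\E[f(u)]$ are weighted precisely by this third mixed moment, your expectation is $0$, not $cM$. The correct gadget (used in the paper's recoupling lemma, \Cref{lem:recoupling-improved}, and its high-degree analogue \Cref{lem:decoupling-hd}) is to take $b_1,b_2$ i.i.d.\ uniform and set $b_3=b_1b_2$, i.e.\ condition on the \emph{product} being $+1$: the triple is then pairwise independent with $\E[b_ab_bb_c]=1$ iff $\{a,b,c\}=\{1,2,3\}$ and $=0$ otherwise, which is exactly what produces $\frac{1}{27}\cdot 3! = \frac{2}{9}$. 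Second, you draw the sign triple \emph{independently for each coordinate $i$}. Since each marginal $\varepsilon^{(a)}_i$ has mean zero, each $u_i$ has mean zero, and independence across coordinates then forces $\E[u_iu_ju_k]=\E[u_i]\E[u_j]\E[u_k]=0$ for distinct $i,j,k$ — regardless of which gadget you use. The three signs must be global (one triple $(b_1,b_2,b_3)$ shared by all coordinates) so that the cross-coordinate correlation $\E[b_1b_2b_3]$ survives. (Note also that the diagonal collisions you worry about are a non-issue: $f$ is multilinear, so $T_{ijk}=0$ unless $i,j,k$ are distinct.)

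Your feasibility discussion is essentially fine once the identity is repaired: on the sphere, $\|u\|_2\le\frac13(\|x^*\|_2+\|y^*\|_2+\|z^*\|_2)=1$ and normalizing only increases the (positive) value; on the hypercube, coordinate-wise multilinearity lets you round $u\in[-1,1]^n$ to $\pmo^n$ without loss. The paper sidesteps even this last step with a cleaner device: instead of forming the average $\frac13(b_1x_i+b_2y_i+b_3z_i)$ and rounding, it samples $x_i'$ uniformly from the multiset $\{b_1x_i,\,b_2y_i,\,b_3z_i\}$, independently per coordinate conditioned on $b$. Each $x_i'$ is then automatically a valid $\pm1$ value, its conditional mean is the average you wanted, and conditional independence across coordinates plus multilinearity yields the product formula directly. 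You may want to adopt that formulation when you write this up.
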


In \Cref{sec:high-degree}, we prove a generalization of \Cref{lem:decoupling} to all \textit{odd-degree} polynomials.

\subsection{The algorithm of Khot and Naor}
\label{sec:khotnaor}

In this section, we describe the algorithm of Khot and Naor~\cite{KN08} yielding an approximation of $O(\sqrt{n/\log n})$ to the maximum of $f(x,y,z)=\sum_{i,j,k=1}^n T_{ijk} x_i y_j z_k$ over $x,y,z\in\pmo^n$.
The original algorithm was done via a reduction to estimating the $L_1$-diameter of some convex body.
We reformulate the algorithm in a more direct way, which also serves as inspiration for our algorithms in \Cref{sec:rounding-main}:

\begin{enumerate}
    \item Sample $\ol{x} \sim \pmo^n$. Let $f_{\ol{x}}(y,z) \coloneqq \iprod{T, \ol{x} \ot y \ot z} = \sum_{i=1}^n \ol{x}_i (y^\top T_i z)$ be the (decoupled) degree-2 polynomial in variables $y$ and $z$. Here $T_i$ is a slice of the tensor $T$, that is, an $n\times n$ matrix.
    \item Using Grothendieck rounding (\Cref{fact:grothendieck}), find $\ol{y},\ol{z} \in \pmo^n$ which achieve a constant factor approximation for $\max_{y,z\in \pmo^n} f_{\ol{x}}(y,z)$.
    \item Finally, repeat steps 1 and 2 $\poly(n)$ times and output the best solution $(\ol{x}, \ol{y}, \ol{z})$ obtained.
\end{enumerate}

The key lemma to analyze this algorithm is the following anti-concentration inequality.
\begin{lemma}[Lemma 3.2 of \cite{KN08}] \label{lem:KN-anticoncentration}
    For any $\delta \in (0,1/2)$, there is a constant $c(\delta) > 0$ such that for any $a \in \R^n$, if $\eps_1,\dots,\eps_n$ are i.i.d.\ $\{\pm 1\}$ random variables, then
    \begin{align*}
        \Pr\Brac{\sum_{i=1}^n a_i \eps_i \geq \sqrt{\frac{\delta \log n}{n}} \cdot \|a\|_1} \geq \frac{c(\delta)}{n^{\delta}} \mper
    \end{align*}
\end{lemma}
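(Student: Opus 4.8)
I would first normalize. Replacing $\epsilon_i$ by $\sign(a_i)\,\epsilon_i$ leaves the law of $(\epsilon_i)_i$ unchanged and turns $\sum_i a_i\epsilon_i$ into $\sum_i|a_i|\epsilon_i$, so we may assume $a_i\ge 0$; by homogeneity we may also assume $\|a\|_1=\sum_i a_i=1$. Write $t:=\sqrt{\delta\log n/n}$ and $S:=\sum_i a_i\epsilon_i$. The one structural fact I use throughout is Cauchy--Schwarz in the form $\|a\|_2\ge\|a\|_1/\sqrt n=1/\sqrt n$: it says the target threshold $t$ is at most $\sqrt{\delta\log n}$ standard deviations of the mean-zero variable $S$. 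Since a Gaussian has tail $e^{-\lambda^2/2}$ at $\lambda$ standard deviations, at $\lambda=\sqrt{\delta\log n}$ this is $n^{-\delta/2}$, which beats the required $n^{-\delta}$ by a full factor of two in the exponent; this slack is why $\sqrt{\log n/n}$ is the right scaling, and it will absorb all lower-order error terms. The plan is then to reduce, by two short conditioning steps, to a single sharp moderate-deviation lower bound for a Rademacher sum all of whose coefficients are tiny compared with its standard deviation.

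Sort $a_1\ge a_2\ge\cdots\ge0$ and fix $L:=\lfloor\delta\log_2 n\rfloor$, so that $2^{-L}\ge n^{-\delta}$. \emph{Case A: $a_1\ge t$.} Conditioning on $\epsilon_1=+1$ and using that $\sum_{i\ge2}a_i\epsilon_i$ is symmetric --- hence nonnegative with probability at least $\tfrac12$, independently of $\epsilon_1$ --- gives $\Pr[S\ge t]\ge\tfrac14$. \emph{Case B: $a_1<t$ but $\sum_{i\le L}a_i\ge t$.} Let $\ell\le L$ be least with $\sum_{i\le\ell}a_i\ge t$; conditioning on $\epsilon_1=\cdots=\epsilon_\ell=+1$ and on $\sum_{i>\ell}a_i\epsilon_i\ge0$ gives $\Pr[S\ge t]\ge 2^{-\ell-1}\ge 2^{-L-1}\ge\tfrac12 n^{-\delta}$. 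Both cases are done (for $n$ above a threshold depending on $\delta$; the finitely many smaller $n$ only force $c(\delta)$ to be smaller).

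The remaining and main case is \emph{Case C: $a_1<t$ and $\sum_{i\le L}a_i<t$.} Then the coefficients past rank $L$ are genuinely tiny: $a_i\le a_L\le\tfrac1L\sum_{j\le L}a_j<t/L$ for all $i>L$. Split $S=S_{\le L}+S_{>L}$ along the top $L$ coordinates; these are independent and $S_{\le L}$ is symmetric, so conditioning on $S_{\le L}\ge0$ (probability $\ge\tfrac12$) reduces the task to lower bounding $\Pr[S_{>L}\ge t]$ --- crucially at the cost of only a constant factor, and \emph{without} raising the threshold. Now $b:=(a_i)_{i>L}$ has $\|b\|_1\ge1-t$, hence $\tau:=\|b\|_2\ge(1-t)/\sqrt n$, so the deviation in standard deviations is $\lambda:=t/\tau\le(1+o(1))\sqrt{\delta\log n}$, while $\|b\|_\infty<t/L$ forces $\|b\|_\infty/\tau=O(1/\sqrt{\log n})$. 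For such a Rademacher sum one has a Gaussian-rate tail lower bound $\Pr[S_{>L}\ge t]\ge c\,e^{-C'\lambda^2}$ with an absolute constant $C'<1$, and the cleanest way to see it is by exponential tilting: pick $\theta>0$ so that under $d\tilde{\mathbb P}\propto e^{\theta S_{>L}}\, d\mathbb P$ the mean of $S_{>L}$ equals $t$, i.e.\ $\sum_{i>L}a_i\tanh(\theta a_i)=t$ --- this forces $\theta\tau^2=\Theta(t)$ and $\theta\|b\|_\infty$ below an absolute constant $<1$. Then $\log\E e^{\theta S_{>L}}=\sum_{i>L}\log\cosh(\theta a_i)\ge\tfrac12\theta^2\tau^2(1-O((\theta\|b\|_\infty)^2))$, while under $\tilde{\mathbb P}$ the variable $S_{>L}$ has mean $t$, variance $\Theta(\tau^2)$, and independent summands still bounded by $\|b\|_\infty\ll\tau$, so $\tilde{\mathbb P}[t\le S_{>L}\le t+10\tau]\ge c$; plugging these into the change-of-measure identity $\Pr[S_{>L}\ge t]\ge e^{\log\E e^{\theta S_{>L}}-\theta(t+10\tau)}\cdot\tilde{\mathbb P}[t\le S_{>L}\le t+10\tau]$ gives $\Pr[S_{>L}\ge t]\ge c\,e^{-C'\lambda^2-O(\lambda)}$ with $C'<1$. (Alternatively one may quote a standard moderate-deviations theorem in place of this computation.) Since $\lambda^2\le(1+o(1))\delta\log n$, this is $n^{-C'\delta-o(1)}\ge c(\delta)n^{-\delta}$ for $n$ large, and the factor $\tfrac12$ from conditioning on $S_{\le L}\ge0$ completes Case C.

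The hard part is exactly this last estimate: obtaining the Rademacher tail at $\lambda\le(1+o(1))\sqrt{\delta\log n}$ standard deviations with a constant $C'<1$ in the exponent $e^{-C'\lambda^2}$, i.e.\ losing strictly less than a factor of two over the Gaussian rate. A generic second-moment / hypercontractivity argument --- for instance splitting $S_{>L}$ into $\Theta(\lambda^2)$ independent blocks and applying Paley--Zygmund, or \Cref{lem:anti-concentration-hypercontractivity}, to each block --- only produces $e^{-\Theta(\lambda^2)}$ with an uncontrolled constant $\gg1$ in the exponent and is \emph{not} good enough; the sharp constant genuinely needs the tilting computation (or a cited moderate-deviations result). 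Everything else --- choosing $L=\Theta(\delta\log n)$ in the narrow window where Case B and Case C are simultaneously satisfiable, and conditioning the peeled part to be nonnegative rather than doubling the threshold --- is routine bookkeeping.
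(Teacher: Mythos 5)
The paper does not actually prove this lemma --- it is quoted verbatim as Lemma 3.2 of \cite{KN08} --- so your proposal is being measured against the original Khot--Naor argument rather than anything in this text. Your proof is correct and is structurally the same kind of argument: peel off the few large coordinates, and apply a sharp moderate-deviation lower bound to the remaining sum whose coefficients are all small relative to its standard deviation. The case analysis (A: one coefficient exceeds $t$; B: the top $L=\lfloor \delta\log_2 n\rfloor$ coefficients sum to at least $t$, forced positive at cost $2^{-L}\ge n^{-\delta}$; C: the tail $b$ has $\|b\|_1\ge 1-t$ and $\|b\|_\infty/\|b\|_2=O(1/\sqrt{\log n})$) is airtight, and you correctly identify the crux: one must beat the generic $e^{-\Theta(\lambda^2)}$ bound and get $e^{-C'\lambda^2}$ with $C'<1$ at $\lambda=(1+o(1))\sqrt{\delta\log n}$, which Paley--Zygmund/hypercontractivity alone cannot deliver. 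The tilting computation does give this: with your choice of $L$ one gets $\theta\|b\|_\infty\approx \ln 2$, the relative corrections to $\log\cosh$ and $\tanh$ are then a few percent, and the exponent comes out to roughly $-0.55\,\lambda^2-O(\lambda)$, comfortably inside $C'<1$; the localization $\tilde{\Pr}[\,t\le S_{>L}\le t+10\tau\,]\ge c$ follows from Chebyshev on one side and Berry--Esseen (third-moment ratio $O(\|b\|_\infty/\tau)=o(1)$) on the other. Two spots deserve an explicit line in a written-up version: (i) the claim that $\theta\|b\|_\infty$ is bounded by an absolute constant strictly below $1$ is true but only by a modest margin ($\approx\ln 2$), and it is exactly what pins $L$ to the narrow window where Case B still gives $n^{-\delta}$; (ii) the tilted mean equation $\sum a_i\tanh(\theta a_i)=t$ has a mild circularity (bounding $\theta$ requires bounding $\theta a_i$), which is easiest to dissolve by simply setting $\theta:=t/\tau^2$ and accepting a tilted mean close to, rather than equal to, $t$. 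Neither affects correctness. Khot and Naor's own proof reaches the same moderate-deviation estimate via the Berry--Esseen theorem and Gaussian tail lower bounds rather than exponential tilting, but the two routes are interchangeable here.
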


The proof of Khot and Naor's result  easily follows  from \Cref{lem:KN-anticoncentration}.
Indeed, let $x^*, y^*, z^* \in \pmo^n$ be an optimal solution.
First observe that it is always optimal to set $x^*_i \coloneqq \sgn(\iprod{T_i, y^* \ot z^*})$, so that $\OPT\coloneqq f(x^*,y^*,z^*) = \sum_{i=1}^n |\iprod{T_i, y^* \ot z^*}|$. Then, for any $\ol{x}\in\pmo^n$, the algorithm outputs $(\ol{x}, \ol{y}, \ol{z})$ such that
\begin{align*}
    f(\ol{x}, \ol{y}, \ol{z}) \geq \Omega(1) \cdot \max_{y,z\in \pmo^n} f(\ol{x}, y,z)
    \geq \Omega(1) \cdot f(\ol{x}, y^*, z^*) \mper
\end{align*}
However, by \Cref{lem:KN-anticoncentration}, with at least inverse polynomial probability, a random $\ol{x}$ satisfies 
\[
    f(\ol{x}, y^*, z^*) = \sum_{i=1}^n \ol{x}_i \iprod{T_i, y^* \ot z^*} \geq \Omega(1)\cdot\sqrt{\frac{\log n}{n}} \cdot \sum_{i=1}^n |\iprod{T_i, y^* \ot z^*}| = \Omega(1)\cdot\sqrt{\frac{\log n}{n}} \cdot \OPT\mper
\]
Thus, by repeating $\poly(n)$ times, with high probability the algorithm outputs an assignment that has value $\Omega\Bigparen{\sqrt{\frac{\log n}{n}}} \cdot \OPT$.

\section{Rounding SoS relaxations for cubic optimization}
\label{sec:rounding-main}

In this section, we present our rounding algorithms for the canonical degree-$k$ SoS relaxation of cubic optimization over the hypercube. 

\begin{itemize}
    \item In \Cref{sec:sqrtn}, we show how to achieve approximation $O(\sqrt{n})$ by rounding the canonical degree-$6$ SoS relaxation via polynomial reweightings.
    
    \item In \Cref{sec:approx-higher}, we extend the previous argument to achieve approximation $O(\sqrt{\frac{n}{k}})$ by rounding the canonical degree-($6k$) SoS relaxation. The additional key idea is a sum-of-squares proof of a certain anti-concentration argument.
\end{itemize}

\subsection{An \texorpdfstring{$O(\sqrt{n})$}{O(sqrt(n))}-factor approximation}
\label{sec:sqrtn}

We start by giving a simple polynomial-time certification and rounding algorithm via constant-degree SoS that achieves approximation $O(\sqrt{n})$ for cubic optimization over the hypercube. Our key technical ingredient is a new use of polynomial reweightings of pseudo-distributions (see \Cref{lem:scalar-fixing}).

\begin{theorem} \label{thm:deg6-sos-approx}
    For any decoupled homogeneous degree-3 polynomial $f(x,y,z) = \sum_{i,j,k=1}^n T_{ijk} x_i y_j z_k$, the degree-6 SoS relaxation of $\max_{x,y,z\in\{\pm 1\}^n} f(x,y,z)$
    has integrality gap at most $O(\sqrt{n})$.
    
    Furthermore, there is a polynomial-time rounding algorithm that, given a degree-6 pseudo-distribution $\mu$ with $\SOS \coloneqq \pE_{\mu} f(x,y,z)>0$, outputs a solution $\ol x,\ol y,\ol z\in \pmo^n$ with value $f(\ol x,\ol y,\ol z)\ge \Omega\Paren{\frac{\SOS}{\sqrt{n}}}$.
\end{theorem}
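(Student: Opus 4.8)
The plan is to leverage the Khot–Naor-style observation as the backbone, but replace the random choice of $\overline{x}$ with a pseudo-distribution manipulation. Write $f(x,y,z) = \sum_{i=1}^n x_i q_i(y,z)$ where $q_i(y,z) = y^\top T_i z$. Given the degree-6 pseudo-distribution $\mu$ with $\pE_\mu f = \SOS$, the key quantity to control is $\sum_{i=1}^n |\pE_\mu[x_i q_i]|$ — if this were $\Omega(\SOS)$ we would be essentially done, because then fixing the sign pattern of $x$ appropriately and running Grothendieck rounding (Fact~\ref{fact:grothendieck}) on the degree-2 marginal in $(y,z)$ recovers a constant fraction of it. But there is no reason $\pE_\mu[x_i q_i]$ has a consistent sign structure, and in the worst case $\sum_i |\pE_\mu[x_i q_i]|$ could be as small as $\SOS$ itself while each individual term has wildly varying magnitude, so we need a transformation of $\mu$.

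**The reweighting step**

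The heart of the argument is a lemma (the "scalar-fixing" lemma, Lemma~\ref{lem:scalar-fixing}) showing how to reweight $\mu$ by a sum-of-squares polynomial $r(x)$ so that in the reweighted pseudo-distribution $\mu'$, the objective value is still $\Omega(\SOS)$ while simultaneously $\sum_i |\pE_{\mu'}[x_i q_i]|$ is within an $O(1/\sqrt{n})$ factor of the objective. The natural choice is $r(x) = (\langle g, x\rangle)^2$ for an appropriate (possibly random) direction $g$, or more robustly a degree-2 SoS polynomial extracted from the SDP solution itself; reweighting by a degree-2 polynomial drops the degree from 6 to 4, which is still enough to run Grothendieck on the $(y,z)$ part and to make sense of $\pE_{\mu'}[x_iq_i]$. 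The point is that $\pE_{\mu'}[f] = \pE_\mu[f \cdot r]/\pE_\mu[r]$, and by choosing $r$ to correlate with the "best direction'' in $x$-space for the bilinear form $\sum_i q_i(y,z)\, x_i$, one pushes mass onto $x$'s that align with the $q_i$'s. Concretely, one picks $g$ so that $\pE_\mu[\langle g,x\rangle \cdot f]$ is large relative to $\sqrt{\pE_\mu[\langle g,x\rangle^2]} \cdot \sqrt{\pE_\mu[f^2/\|q\|\text{-type normalization}]}$; Cauchy–Schwarz for pseudo-distributions (Fact~\ref{fact:sos-cauchy-schwarz}) and Hölder (Fact~\ref{fact:sos-holders}) are the tools that convert the $\ell_2\to\ell_1$ loss into exactly the $\sqrt n$ factor. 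I expect the precise form of $r$ and the bookkeeping of which pseudo-expectation moments are needed (and hence the exact degree 6) to be the fussiest part, but conceptually it is: reweight to boost correlation, then round the quadratic remainder.

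**Assembling the rounding algorithm and the integrality gap**

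Given the reweighted $\mu'$, the algorithm is: (1) extract the degree-2 marginal of $\mu'$ on $(y,z)$; (2) apply Grothendieck rounding to obtain $\overline{y}, \overline{z} \in \pmo^n$ with $\sum_i \overline{y}^\top T_i \overline{z} \cdot \sign(\pE_{\mu'}[x_i]) \cdots$ — more precisely, to obtain a good value for the matrix $M = \sum_i \sigma_i T_i$ where $\sigma_i$ records the sign information; (3) set $\overline{x}_i = \sign(q_i(\overline{y},\overline{z}))$, which is the optimal choice of $x$ given $(\overline{y},\overline{z})$ and only increases the value. Chaining the inequalities gives $f(\overline{x},\overline{y},\overline{z}) \geq \Omega(1) \cdot \sum_i |\pE_{\mu'}[x_i q_i]| \geq \Omega(\SOS/\sqrt n)$. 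Since this holds for the optimal pseudo-distribution and $\SOS \geq \max_{x,y,z} f(x,y,z)$, the integrality gap bound of $O(\sqrt n)$ follows immediately, and because every step (solving the SDP to additive accuracy, choosing $g$, reweighting, Grothendieck rounding) runs in polynomial time, we get the claimed rounding algorithm. The main obstacle, as noted, is proving the scalar-fixing lemma — establishing that a single reweighting simultaneously preserves the objective and forces the $\ell_1$-mass of the $\pE[x_i q_i]$ to be large; everything downstream is a routine composition of the standard quadratic rounding facts already stated in the preliminaries.
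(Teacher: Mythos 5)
Your proposal assembles the right toolkit (decoupled form, a reweighting lemma, Grothendieck rounding, a Cauchy--Schwarz loss of $\sqrt n$), but the central quantity you choose to control is the wrong one, and this breaks the argument. You aim to make $\sum_{i}\Abs{\pE_{\mu'}[x_i q_i]}$ an $\Omega(1/\sqrt n)$-fraction of the objective. Note first that this quantity is \emph{trivially} at least $\SOS$ already for the original $\mu$, since $\SOS=\sum_i\pE_\mu[x_iq_i]\le\sum_i\Abs{\pE_\mu[x_iq_i]}$, so by your own criterion (``if this were $\Omega(\SOS)$ we would be essentially done'') no reweighting is needed --- a sign that the quantity cannot be the right target. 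The real obstruction is that it cannot be converted into a rounded solution: Grothendieck rounding acts on the degree-2 marginal of $\mu'$ over $(y,z)$ and, for a fixed sign vector $\sigma$, only certifies $\ol y^\top\bigl(\sum_i\sigma_iT_i\bigr)\ol z\ge\frac{1}{K_G}\sum_i\sigma_i\pE_{\mu'}[q_i]$. The pseudo-expectations appearing here are of $q_i$ alone, with $x$ integrated out, not of $x_iq_i$; choosing $\sigma_i=\sign(\pE_{\mu'}[x_iq_i])$ gives no lower bound on $\sum_i\sigma_i\pE_{\mu'}[q_i]$, because inside the pseudo-expectation $x_i$ is entangled with $(y,z)$ and its ``sign'' cannot be extracted as a number. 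Hence the chain $f(\ol x,\ol y,\ol z)\ge\Omega(1)\cdot\sum_i\Abs{\pE_{\mu'}[x_iq_i]}$ is unjustified. What must be made large is $\Abs{\pE_{\mu'}\langle q,h\rangle}=\Abs{\sum_ih_i\pE_{\mu'}[q_i]}$ for an \emph{explicit} vector $h\in\pmo^n$ that then serves as $\ol x$.

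The paper supplies exactly the two ingredients your sketch is missing. First, the $\sqrt n$ factor comes from $\SOS^2\le n\cdot\pE_\mu\|q\|_2^2=n\cdot\E_{h\sim\pmo^n}\pE_\mu\langle q,h\rangle^2$, after which Paley--Zygmund in the variable $h$ shows that a uniformly random $h$ satisfies $\pE_\mu\langle q,h\rangle^2\ge\SOS^2/n$ with constant probability; this random $h$ is taken to be the assignment $\ol x$. Second, the scalar-fixing lemma is applied to the single degree-2 polynomial $p=\langle q,h\rangle$ in the variables $(y,z)$, reweighting by $p^2$ or $(p+m)^2$ (degree-4 polynomials in $(y,z)$, which is why degree 6 suffices) to obtain $\Abs{\pE_{\mu'}\langle q,h\rangle}\ge\frac13\sqrt{\pE_\mu\langle q,h\rangle^2}$. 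Your proposed reweighting by $r(x)=\langle g,x\rangle^2$ in the $x$-variables is never specified (how is $g$ chosen, and what does the reweighting provably achieve?) and is not what the lemma does. Setting $\ol x_i=\sign(q_i(\ol y,\ol z))$ at the end is a legitimate improvement step, but it does not repair the missing lower bound on what Grothendieck rounding delivers.
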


We describe another algorithm outputting a certificate with similar guarantees in \Cref{sec:warmup}. In comparison, the proof in this section will come together with a rounding and will allow us to build up towards a more general tradeoff between time and approximation in \Cref{sec:approx-higher}.

Recall from \Cref{sec:khotnaor} that the strategy of \cite{KN08} is to first sample $\ol x\sim \pmo^n$ and then solve for $\ol y$ and $\ol z$ using Grothendieck rounding.
One might expect that a similar strategy works to round an optimal SoS solution.
However, given an optimal pseudo-distribution $\mu$, it is not clear how $\pE_{\mu} \sum_{i=1}^n \ol x_i (y^\top T_i z)$ relates to the SoS value $\pE_{\mu} \sum_{i=1}^n x_i (y^\top T_i z)$.
In fact, the former can be much smaller than the latter or even zero.

Denote $q_i(y,z) \seteq y^\top T_i z$ for convenience.
Our key idea is that even though $\pE_{\mu} \iprod{\ol{x}, q}$ may be small, we can reweight the pseudo-distribution $\mu$ and get another pseudo-distribution $\mu'$ such that $\pE_{\mu'} \iprod{\ol{x}, q} \gtrsim (\pE_{\mu} \iprod{\ol{x}, q}^2)^{1/2}$. Furthermore, the quantity on the right-hand side can be related to the SoS value (for a typical $\ol{x}$).
One may view this procedure as raising the (pseudo-) expectation of a random variable to be close to its (pseudo-) standard deviation, which is reminiscent of the scalar fixing lemma of \cite{BKS17}.

We capture this idea in the following lemma:

\begin{lemma} \label{lem:scalar-fixing}
    Let $p(x_1,\ldots,x_n)$ be a degree-$t$ polynomial and let $\mu$ be a degree-$3t$ pseudo-distribution over $(x_1,\ldots,x_n)$.
    There is reweighting of $\mu$ by a degree-$2t$ polynomial such that the resulting degree-$t$ pseudo-distribution $\mu'$ satisfies
    \begin{equation*}
        \Abs{\pE_{\mu'}[p]} \geq \frac{1}{3} \cdot \sqrt{\pE_{\mu} \left[p^2\right]} \mper
    \end{equation*}
\end{lemma}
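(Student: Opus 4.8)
The plan is to use a single reweighting by a carefully chosen square polynomial of the form $r(x) = (c + p(x))^2$ for a suitable scalar $c$, and then analyze the resulting pseudo-expectation of $p$ via the identity $\pE_{\mu'}[p] = \pE_\mu[p(c+p)^2]/\pE_\mu[(c+p)^2]$. Write $m_j \coloneqq \pE_\mu[p^j]$ for $j=1,2,3$; these are well-defined since $\mu$ has degree $3t \geq 3\deg(p)$. Expanding, the numerator is $c^2 m_1 + 2c m_2 + m_3$ and the denominator is $c^2 + 2c m_1 + m_2$ (using $\pE_\mu 1 = 1$). The denominator is $\pE_\mu[(c+p)^2] \geq 0$ automatically since $\mu$ has degree $\geq 2t$, and we will choose $c$ large enough (relative to $\sqrt{m_2}$) that it is strictly positive, so the reweighting is legitimate and $\mu'$ has degree $\geq 3t - 2t = t$.

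The key step is choosing $c$. Set $s \coloneqq \sqrt{m_2} = \sqrt{\pE_\mu[p^2]}$ (if $s = 0$ the statement is trivial, since then $|\pE_{\mu'}[p]|\ge 0 = \frac13 s$). I would try $c = \pm \lambda s$ for a constant $\lambda$ to be fixed, with the sign chosen advantageously, and aim to show that for at least one of the two sign choices the ratio has absolute value $\geq s/3$. The controlling inequalities are: $|m_1| \leq s$ by Cauchy--Schwarz for pseudo-distributions (Fact~\ref{fact:sos-cauchy-schwarz}), and $|m_3| \leq$ (something)$\cdot m_2^{?}$ — here one must be careful, because $m_3 = \pE_\mu[p^3]$ is an odd moment and is not directly bounded by Cauchy--Schwarz alone. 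The trick is that the $m_3$ term is multiplied only by $1$ in the numerator while the $c^2 m_1$ and denominator-$c^2$ terms scale like $\lambda^2 s^2$; so by taking $\lambda$ a large enough absolute constant, the $m_3$ contribution becomes a lower-order perturbation \emph{provided} we can bound $|m_3|$ by $O(s^2 \cdot c) = O(\lambda s^3)$ — i.e. by $O(s^3)$ up to the constant. To get $|m_3| \leq C s^3$ I would apply Hölder's inequality for pseudo-distributions (Fact~\ref{fact:sos-holders}) with $t=4$ to write $\pE_\mu[p^3] = \pE_\mu[p^{3}\cdot 1] \le \pE_\mu[p^4]^{3/4}$, and then bound $\pE_\mu[p^4] \le (\pE_\mu[p^6])^{2/3}$ — but this introduces $m_6$, an uncontrolled quantity. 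So the actually robust route is to avoid needing a bound on $m_3$ at all.

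The clean way to avoid bounding $m_3$: compute the numerator as $c^2 m_1 + 2c m_2 + m_3$ and the denominator as $c^2 + 2c m_1 + m_2$, and consider the \emph{sum} of the values obtained from $c = +\lambda s$ and $c = -\lambda s$; the $m_3$ and the $2cm_2$, $c^2m_1$ terms that are odd in $c$ will partially cancel or can be paired so that the troublesome odd-moment term cancels. Concretely, with $c_\pm = \pm\lambda s$, the two numerators are $\lambda^2 s^2 m_1 \pm 2\lambda s m_2 + m_3$ and the two denominators are $\lambda^2 s^2 \pm 2\lambda s m_1 + m_2$; the difference of numerators is $4\lambda s m_2 = 4\lambda s^3$, which is large, while both denominators lie in $[\lambda^2 s^2 - 2\lambda s^2 + s^2, \lambda^2 s^2 + 2\lambda s^2 + s^2] = [(\lambda-1)^2 s^2, (\lambda+1)^2 s^2]$ using $|m_1|\le s$ and $m_2 = s^2$, hence are positive and $\Theta(\lambda^2 s^2)$ for $\lambda \ge 2$. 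Therefore $\pE_{\mu'_+}[p] - \pE_{\mu'_-}[p]$ has a numerator-of-difference at least $4\lambda s^3 - (\text{cross terms})$ divided by something $\Theta(\lambda^2 s^2)$, which after choosing $\lambda$ an absolute constant is $\ge \tfrac{2}{3}s$ in absolute value; consequently at least one of $|\pE_{\mu'_+}[p]|, |\pE_{\mu'_-}[p]|$ is $\ge \tfrac13 s$, and we take that reweighting. The reweighting polynomial $(c\pm p)^2$ has degree $2t$, as required, so $\mu'$ has degree $\ge t$, completing the argument.

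The main obstacle, as the discussion above shows, is precisely the odd third moment $m_3 = \pE_\mu[p^3]$: pseudo-distributions give no one-sided control over it, so the naive ``take $c$ huge'' estimate does not immediately close. The resolution is the pairing of the two sign choices $c = \pm\lambda s$, which turns the bound into a statement about the \emph{difference} of two pseudo-expectations where the uncontrolled term either cancels or is dominated; I would expect the bulk of the write-up to be the bookkeeping of the cross terms in that difference and the verification that the chosen absolute constant $\lambda$ (something like $\lambda = 5$) makes every inequality go through with the stated constant $\tfrac13$.
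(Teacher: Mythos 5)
Your overall strategy --- reweight by $(p+c)^2$ for a constant $c$ proportional to $s:=\sqrt{\pE_\mu[p^2]}$ and argue that one of the two sign choices must succeed --- is viable and close in spirit to the paper's proof (which reweights by $(p+m)^2$ with $m=\sqrt{\pE_\mu[p^2]}$, after first splitting off the case where $|\pE_\mu[p^3]|$ is large and handling it by reweighting by $p^2$). However, your key step does not go through as stated. You claim that the troublesome third moment $m_3=\pE_\mu[p^3]$ ``cancels or is dominated'' once you pass to the difference $\pE_{\mu'_+}[p]-\pE_{\mu'_-}[p]$, and that this difference is at least $\tfrac23 s$ in absolute value. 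But $m_3$ cancels only in the difference of the \emph{numerators}, $N_+-N_-=4\lambda s^3$; since the two denominators $D_\pm=\lambda^2s^2\pm 2\lambda s m_1+s^2$ are different, the actual difference of the ratios is $\frac{N_+D_--N_-D_+}{D_+D_-}$, whose numerator works out to $4\lambda s\left[(\lambda^2+1)s^4-\lambda^2s^2m_1^2-m_1m_3\right]$: the uncontrolled $m_3$ reappears through the cross term $m_1m_3$. Indeed, the claimed lower bound on the difference is simply false in general: when $|m_3|\gg\lambda^2 s^3$, both $\pE_{\mu'_+}[p]$ and $\pE_{\mu'_-}[p]$ are approximately $m_3/(\lambda^2s^2)$, so their difference can be arbitrarily small (each is individually large in that regime, which is why the lemma still holds, but your stated reasoning does not detect this).

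The fix is short: argue by contradiction using only the quantities in which the odd-in-$c$ and odd-in-$p$ terms genuinely cancel. If both $|\pE_{\mu'_\pm}[p]|<s/3$, then $|N_\pm|<\tfrac{s}{3}D_\pm$, hence $4\lambda s^3=|N_+-N_-|\le|N_+|+|N_-|<\tfrac{s}{3}(D_++D_-)=\tfrac{2(\lambda^2+1)}{3}s^3$, i.e.\ $6\lambda<\lambda^2+1$, which fails for, say, $\lambda=2$ or $\lambda=3$ (and any $\lambda>1$ keeps $D_\pm\ge(\lambda-1)^2s^2>0$, so both reweightings are well defined and have degree $2t$ as required). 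With that repair your argument is correct and avoids the explicit case split the paper performs on whether $|\pE_\mu[p^3]|\ge m^3/3$.
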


\begin{proof}
    Let $m \seteq \sqrt{\pE_{\mu}\left[p^2\right]} > 0$. We can assume that $\left|\pE_\mu [p]\right| < \frac{m}{3}$, otherwise we are done without any reweighting.
    
    First, suppose that $\left|\pE_\mu \left[p^3\right]\right|\ge \frac{m^3}{3}$. Reweight $\mu$ by the degree-$2t$ SoS polynomial $p^2$ and let $\mu'$ be the resulting pseudo-distribution. Then we have $\left|\pE_{\mu'}[p]\right| = \left|\frac{\pE_\mu\left[p^3\right]}{\pE_\mu\left[p^2\right]}\right|\ge \frac{1}{3}\cdot \sqrt{\pE_\mu\left[p^2\right]}$.
    
    Now suppose that $\left|\pE_\mu \left[p^3\right]\right|< \frac{m^3}{3}$. Reweight $\mu$ by the degree-$2t$ SoS polynomial $(p+m)^2$ and let $\mu'$ be the resulting pseudo-distribution. Note that:
    \[
          \pE_{\mu}\left[(p+m)^2\right]=2m^2+2m\cdot \pE_{\mu}\left[p\right]\in \left[\frac{4m^2}{3},\frac{8m^2}{3}\right]\mcom
    \]
    where we used $\left|\pE_\mu [p] \right| < \frac{m}{3}$. In particular, we are reweighting by a polynomial with non-zero pseudo-expectation, so this is a well-defined operation. Moreover,
    \[
        \pE_{\mu} \left[(p+m)^2 p\right] \ge  2m\cdot\pE_{\mu}\left[p^2\right] - \Abs{\pE_{\mu}\Brac{p^3}} - m^2\cdot \Abs{ \pE_\mu[p] } \geq \frac{4m^3}{3} \mper
    \]
    Putting everything together, we obtain $\Abs{\pE_{\mu'} [p]}\ge \frac{1}{2}\cdot \sqrt{\pE_\mu\left[p^2\right]}$. 
    
    Thus, we get the desired result in both cases.
\end{proof}

We are now ready to prove \Cref{thm:deg6-sos-approx}.

\begin{proof}[Proof of \Cref{thm:deg6-sos-approx}]
    Let $q_i(y,z) \coloneqq y^\top T_i z$ for each $i\in [n]$.
    For simplicity of notation, we will drop the dependence on $y,z$ and denote $q = (q_1,\dots,q_n)$.
    Then, we have
    \begin{equation*}
        \SOS = \sum_{i=1}^n \pE_{\mu}[x_i q_i]
        \leq \sum_{i=1}^n \sqrt{\pE_{\mu}[q_i^2]}
        \leq \sqrt{n \cdot \sum_{i=1}^n \pE_{\mu}[q_i^2]}
        = \sqrt{n\cdot \pE_{\mu} \|q\|_2^2}
        \mcom
    \end{equation*}
    by Cauchy-Schwarz and its pseudo-expectation version (\Cref{fact:sos-cauchy-schwarz}).
    Next, since $\E_{h\sim \pmo^n} \iprod{v, h}^2 = \|v\|_2^2$ is a polynomial identity,
    \begin{equation*} \label{eq:sos-squared}
        \SOS^2 \leq n \cdot \E_{h\sim\pmo^n} \, \pE_{\mu}\, \iprod{q, h}^2 \mcom
        \numberthis
    \end{equation*}
    where we recall that $q = (q_1,\dots,q_n)$ are degree-2 polynomials in $y,z$.
    We now describe the rounding algorithm.
    \begin{enumerate}
        \item Sample $h\sim \pmo^n$, and set $\ol x \coloneqq h$.

        \item Reweight the pseudo-distribution $\mu$ via \Cref{lem:scalar-fixing} to get a degree-2 pseudo-distribution $\mu'$ such that $\left|\pE_{\mu'} \iprod{q, h}\right| \geq \frac{1}{3} \sqrt{\pE_{\mu} \iprod{q,h}^2}$.

        \item Use Grothendieck rounding (\Cref{fact:grothendieck}) on $\mu'$ to obtain solutions $\ol y,\ol z\in \pmo^n$ satisfying $\ol y^\top (\sum_{i=1}^n h_i T_i) \ol z\ge \frac{1}{K_G}\cdot \left|\pE_{\mu'} \iprod{q, h}\right|$ (we can get the guarantees with the absolute value by flipping the sign of $h$).
    \end{enumerate}
    First, note that $\pE_{\mu} \iprod{q,h}^2$ is a degree-2 polynomial in $h$, so by Paley-Zygmund inequality,
    \begin{equation*}
        \Pr_{h\sim \pmo^n} \Brac{ \pE_{\mu}\iprod{q,h}^2 \geq \frac{\SOS^2}{n} } \geq \Omega(1) \mcom
    \end{equation*}
    meaning that  we get a ``good'' $h$ with constant probability. For a good $h$, it also holds that
    \[
        f(\ol x,\ol y,\ol z)\ge \frac{1}{K_G} \left|\pE_{\mu'} \iprod{q, h}\right|\ge \frac{1}{3K_G} \sqrt{\pE_{\mu} \iprod{q,h}^2}\ge \Omega\left(\frac{\SOS}{ \sqrt{n}}\right)\mper
    \]
    Thus, repeating the above $\poly(n)$ times, we can obtain a solution with value $\Omega\Paren{\frac{\SOS}{\sqrt{n}}}$ with high probability.
    This completes the proof.
\end{proof}

\subsection{Going beyond \texorpdfstring{$O(\sqrt{n})$}{O(sqrt(n))}-approximation via higher-degree SoS}
\label{sec:approx-higher}

We now switch to a general time/approximation tradeoff for the problem by rounding higher levels of the SoS hierarchy.

\begin{theorem} \label{thm:higher-degree-sos}
    Let $k,n$ be integers such that $2\le k\le n$.
    For any decoupled homogeneous degree-3 polynomial $f(x,y,z) = \sum_{i,j,k=1}^n T_{ijk} x_i y_j z_k$, the canonical degree-$(6k)$ SoS relaxation of $\max_{x,y,z\in\pmo^n} f(x,y,z)$ has integrality gap at most $O \Bigparen{\sqrt{\frac{n}{k}}}$.
    
    Furthermore, there is an $n^{O(k)}$-time rounding algorithm that, given a degree-$(6k)$ pseudo-distribution with $\SOS \coloneqq \pE_{\mu} f>0$, outputs a solution $\ol{x},\ol{y},\ol{z}\in \pmo^n$ with value $f(\ol{x}, \ol{y}, \ol{z}) \geq \Omega \Bigparen{\sqrt{\frac{k}{n}}} \cdot \SOS$.
\end{theorem}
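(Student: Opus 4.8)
The plan is to run the template of the proof of \Cref{thm:deg6-sos-approx}, replacing the second moment by a $(2k)$-th moment throughout; the new ingredient is a sum-of-squares version of the Khot--Naor anti-concentration inequality \Cref{lem:KN-anticoncentration}. As before, write $q_i \coloneqq y^\top T_i z$ and $q = (q_1,\dots,q_n)$, so that $\SOS = \pE_\mu \iprod{x, q}$. The key lemma I would prove is: for indeterminates $a_1,\dots,a_n$, there is a degree-$O(k)$ sum-of-squares proof (with no hypotheses on the $a_i$) of
\begin{equation*}
  \E_{h\sim\pmo^n}\Brac{\Paren{\sum_{i=1}^n a_i h_i}^{2k}} \;\geq\; \Paren{\frac{k}{Cn}}^{k}\Paren{\sum_{i=1}^n a_i}^{2k}
\end{equation*}
for an absolute constant $C$. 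Expanding the left side as $\sum_{|\beta|=k}\binom{2k}{2\beta}\,a^{2\beta}$ (only even multi-indices survive the average over $h$) and the right side as $\sum_{|\alpha|=2k}\binom{2k}{\alpha}\,a^{\alpha}$, this should follow by an AM--GM style \emph{mass transport} on multi-indices: for a pair of odd coordinates $i,j$ of $\alpha$, the identity $\frac12\big(a_i^{\alpha_i-1}a_j^{\alpha_j+1}+a_i^{\alpha_i+1}a_j^{\alpha_j-1}\big) - a_i^{\alpha_i}a_j^{\alpha_j} = \frac12\big(a_i^{(\alpha_i-1)/2}a_j^{(\alpha_j-1)/2}(a_i-a_j)\big)^2$ is an explicit SoS certificate of degree $\alpha_i+\alpha_j$, so iterating it over pairs of odd coordinates rewrites every $a^\alpha$ as a convex combination of even degree-$2k$ monomials $a^{2\beta}$; one then checks that the coefficient accumulated on each $a^{2\beta}$ is at most $(Cn/k)^k\binom{2k}{2\beta}$. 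I expect this last verification to be the main obstacle: bounding the transported mass \emph{uniformly over $\beta$} requires choosing the transport carefully and estimating, for each $\beta$, the number of multi-indices $\alpha$ that can route mass to it (the point being that this count is $n^{O(1)}\cdot(n/k)^{O(k)}$, not $n^{\Omega(k)}$), all while keeping $C$ absolute.

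Granting the lemma, the argument proceeds as follows. Using that $\mu$ satisfies the hypercube constraints, together with \Cref{fact:sos-cauchy-schwarz,fact:sos-holders} (which give $\pE_\mu[Z]^{2k}\leq \pE_\mu[Z^{2k}]$ by chaining Cauchy--Schwarz and the power-mean inequality),
\begin{equation*}
  \SOS^{2k} = \pE_\mu\Brac{\iprod{x, q}}^{2k} \;\leq\; \pE_\mu\Brac{\iprod{x, q}^{2k}} \;\leq\; \Paren{\frac{Cn}{k}}^{k}\;\E_{h\sim\pmo^n}\pE_\mu\Brac{\iprod{q, h}^{2k}},
\end{equation*}
where the last step applies the key lemma with $a_i = x_i q_i$ inside $\pE_\mu$ (legitimate since the resulting SoS certificate has degree $O(k)\leq 6k$) and then uses the identity $\E_{h\sim\pmo^n}\big[(\sum_i x_iq_ih_i)^{2k}\big]\equiv\E_{h\sim\pmo^n}\big[(\sum_i q_ih_i)^{2k}\big]$ modulo $\{x_i^2=1\}$ to re-randomize the signs. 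Now sample $h\sim\pmo^n$; since $P(h)\coloneqq\pE_\mu\iprod{q,h}^{2k}$ is a nonnegative degree-$2k$ polynomial in $h$, \Cref{lem:anti-concentration-hypercontractivity} (with $B=9$, $d=2k$) guarantees that with probability $2^{-\Theta(k)}$ we obtain a ``good'' $h$ with $P(h)\geq \E_h P(h)\geq (k/Cn)^k\,\SOS^{2k}$.

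It remains to round a good $h$. Let $p \coloneqq \iprod{q,h}$, a degree-$2$ polynomial in $y,z$. Since $\pE_\mu[p^2]>0$ (otherwise $\pE_\mu[p^{2k}]=0$, contradicting $P(h)>0$), reweighting $\mu$ by the degree-$(4k-4)$ SoS polynomial $p^{2k-2}$ is well-defined and produces a pseudo-distribution $\mu_1$ with
\begin{equation*}
  \pE_{\mu_1}[p^2] \;=\; \frac{\pE_\mu[p^{2k}]}{\pE_\mu[p^{2k-2}]} \;\geq\; \pE_\mu[p^{2k}]^{1/k} \;=\; P(h)^{1/k} \;\geq\; \frac{k}{Cn}\,\SOS^2,
\end{equation*}
using $\pE_\mu[p^{2k-2}]\leq\pE_\mu[p^{2k}]^{(k-1)/k}$ from \Cref{fact:sos-holders}. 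Applying \Cref{lem:scalar-fixing} to $\mu_1$ and $p$ then yields $\mu'$ with $\Abs{\pE_{\mu'}[p]}\geq \frac13\sqrt{\pE_{\mu_1}[p^2]}\geq \frac{1}{3\sqrt C}\sqrt{k/n}\cdot\SOS$. Running Grothendieck rounding (\Cref{fact:grothendieck}) on the degree-$\geq 2$ pseudo-distribution $\mu'$ gives $\ol y,\ol z\in\pmo^n$ with $\ol y^\top\big(\sum_i h_i T_i\big)\ol z \geq \frac{1}{K_G}\Abs{\pE_{\mu'}\iprod{q,h}}$ (flipping the sign of $h$ if needed); setting $\ol x\coloneqq h$, we get $f(\ol x,\ol y,\ol z) = \ol y^\top\big(\sum_i h_i T_i\big)\ol z \geq \Omega(\sqrt{k/n})\cdot\SOS$. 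Repeating the sampling of $h$ together with the subsequent reweightings and rounding $2^{O(k)}\poly(n)$ times succeeds with high probability, and since solving the degree-$6k$ SoS SDP already costs $n^{O(k)}$ time the total running time is $n^{O(k)}$. Finally, $(\ol x,\ol y,\ol z)$ is a feasible point, so $\OPT\geq f(\ol x,\ol y,\ol z)\geq \Omega(\sqrt{k/n})\,\SOS$, i.e.\ the integrality gap is $O(\sqrt{n/k})$.
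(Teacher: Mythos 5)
Your overall architecture is the same as the paper's: bound $\SOS^{2k}\le \pE_\mu\langle x,q\rangle^{2k}\le O(n/k)^k\,\E_h\pE_\mu\langle q,h\rangle^{2k}$ via an SoS moment-comparison lemma, pick a good $h$ by hypercontractive anti-concentration (\Cref{lem:anti-concentration-hypercontractivity}), reweight to pull the pseudo-expectation of $\langle q,h\rangle$ up to its $L^{2k}$ pseudo-norm, and finish with Grothendieck. Your reweighting step is a valid alternative to the paper's \Cref{lem:scalar-fixing-high-degree}: composing the reweighting by $p^{2k-2}$ with \Cref{lem:scalar-fixing} and invoking $\pE_\mu[p^{2k-2}]\le\pE_\mu[p^{2k}]^{(k-1)/k}$ from \Cref{fact:sos-holders} reproduces exactly the conclusion $\abs{\pE_{\mu'}[p]}\ge\frac13\pE_\mu[p^{2k}]^{1/2k}$, and the degree bookkeeping checks out (the only nit is that you should justify $\pE_\mu[p^{2k-2}]>0$, not $\pE_\mu[p^2]>0$, for the first reweighting to be well-defined; Cauchy--Schwarz $\pE[p^{2k}]^2\le\pE[p^{2k-2}]\pE[p^{2k+2}]$ does this).

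The genuine gap is the key lemma, which is the technical heart of the theorem and which you leave unproved. Your proposed mass-transport argument is not obviously completable with an absolute constant $C$: the required bound $\sum_\alpha\binom{2k}{\alpha}w_{\alpha\beta}\le (Cn/k)^k\binom{2k}{2\beta}$ must hold \emph{uniformly} in $\beta$, and for extreme targets such as $\beta=k e_1$ (where $\binom{2k}{2\beta}=1$ while the total mass $\sum_\alpha\binom{2k}{\alpha}=n^{2k}$) one must show that the transport delivers only a $\approx(k/Cn)^k$ fraction of the mass there; this forces a careful global choice of pairings and is exactly the step you flag as unresolved. The lemma you state is nonetheless true and has a short degree-$2k$ SoS proof by a different route, which is the paper's \Cref{lem:h-v-moment}: partition $[n]$ into $k$ blocks of size at most $\lceil n/k\rceil$, symmetrize over signs $\eps\in\pmo^k$ of the blocks so that only products of even block-powers survive, apply the degree-$2$ SoS Cauchy--Schwarz within each block (losing only a factor $2n/k$ per block, which is where $x_i^2=1$ --- or, in your hypothesis-free formulation, $x=\mathbf 1$ --- enters), convert $\|v^{(t)}\|_2^2$ into $\E_{h^{(t)}}\langle h^{(t)},v^{(t)}\rangle^2$, and recombine with the generalized triangle inequality (\Cref{fact:sos-triangle}). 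This block decomposition entirely avoids the delicate per-$\beta$ coefficient accounting; without it (or a completed transport analysis), your proof of the theorem is incomplete.
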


Recall that the $\sqrt{n}$ approximation factor in the previous section was coming from relating the SoS value $\pE_\mu \iprod{x,q}$ to a quantity of the form $\E_{h\sim \{\pm 1\}^n} \pE_\mu \iprod{h,q}^2$. To make use of higher levels of the SoS hierarchy, we will now connect the SoS value to higher moments of the form $\E_{h\sim \{\pm 1\}^n} \pE_\mu \iprod{h,q}^{2k}$. The proof of \Cref{thm:higher-degree-sos} will then follow from a high-degree version of the  polynomial reweighting from \Cref{lem:scalar-fixing}.

One can interpret the inequality from the previous section $\pE_\mu \iprod{x,q}^2\le n \cdot \E_{h\sim \{\pm 1\}^n} \pE_\mu \iprod{h,q}^2$ as the SoS analog of the inequality $\|q\|_1^2\le n\cdot \|q\|_2^2 = n\cdot \E_{h\sim \{\pm 1\}^n} \iprod{h,q}^2$ that holds for any $q\in\R^n$ by Cauchy-Schwarz and an explicit variance equality. Our higher-level proof also has a classical analog, namely:
\begin{equation}
    \|q\|_1\le O(1)\cdot \sqrt{\frac{n}{k}} \cdot \left(\E_{h\sim \{\pm 1\}^n} \iprod{h,q}^{2k}\right)^{\frac{1}{2k}}\mper\label{eq:high-degree-poly-ineq}
\end{equation}
Such an inequality holds for any $q\in\R^n$ \cite{Mon90}. To see that,  decompose $q$ and $h$ into $k$ (arbitrary) blocks $q^{(1)},\ldots,q^{(k)}$ and $h^{(1)},\ldots,h^{(t)}$ of size roughly $\frac{n}{k}$. By Paley-Zygmund inequality, we get that $\Abs{\iprod{q^{(i)},h^{(i)}}}\ge \Omega(1)\cdot \left\|q^{(i)}\right\|_2\ge \Omega(1)\cdot \sqrt{\frac{k}{n}}\left\|q^{(i)}\right\|_1$ holds with at least constant probability for any fixed $i\in [k]$. So with probability at least $2^{-O(k)}$ we have $\Abs{\iprod{q,h}}\ge \Omega(1)\cdot \sqrt{\frac{k}{n}} \|q\|_1$, which in turn implies \Cref{eq:high-degree-poly-ineq}.

Although this proof is streamlined, the part using Paley-Zygmund and independence across the $k$ blocks does not directly translate into a sum-of-squares proof. We now give a different and degree-$O(k)$ sum-of-squares proof of the inequality.

\begin{lemma} \label{lem:h-v-moment}
    Let $k < n \in \N$, and let $x= (x_1,\dots,x_n)$ be indeterminates and $v = (v_1,\dots,v_n)$ be such that each $v_i$ is a polynomial of degree $\leq t$. Then,
    \begin{equation*}
        \Set{x_i^2=1,\ \forall i \in [n]} \ \sststile{2(t+1)k}{x,v} \
        \E_{h\sim \pmo^n} \Brac{\iprod{h, v}^{2k}} \geq \Paren{\frac{k}{4n}}^k \iprod{x,v}^{2k} \mper
    \end{equation*}
\end{lemma}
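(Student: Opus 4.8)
\emph{Proof idea.}
The plan is to mimic the classical block-decomposition proof of \Cref{eq:high-degree-poly-ineq}, but to carry it out using only polynomial identities and the SoS Cauchy--Schwarz inequality, both of which have low-degree SoS certificates. Fix a partition of $[n]$ into $k$ nonempty blocks $B_1,\dots,B_k$, each of size at most $\lceil n/k\rceil\le 2n/k$ (possible since $k<n$), and for a length-$n$ tuple $u$ write $u^{(j)}$ for its restriction to $B_j$, so that $\langle u,u'\rangle=\sum_{j=1}^k\langle u^{(j)},(u')^{(j)}\rangle$.

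The first step is the \emph{polynomial identity} obtained by splitting $h$ across the blocks, expanding $\langle h,v\rangle^{2k}=\big(\sum_j\langle h^{(j)},v^{(j)}\rangle\big)^{2k}$ multinomially, and noting that terms with an odd exponent on some block vanish in expectation over $h$:
\[
    \E_{h\sim\pmo^n}\!\big[\langle h,v\rangle^{2k}\big]
    =\sum_{\substack{a_1,\dots,a_k\ge 0\text{ even}\\ a_1+\cdots+a_k=2k}}
    \binom{2k}{a_1,\dots,a_k}\prod_{j=1}^k\E_{h^{(j)}}\!\big[\langle h^{(j)},v^{(j)}\rangle^{a_j}\big].
\]
The second step bounds each block factor from below. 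I would first establish the SoS inequality $\E_{h^{(j)}}[\langle h^{(j)},v^{(j)}\rangle^{2r}]\ge\big(\|v^{(j)}\|_2^2\big)^r$ for every $r$; expanding both sides in monomials this reduces to the coefficient bound $\binom{2r}{2\gamma}\ge\binom{r}{\gamma}$ for $|\gamma|=r$, which one reads off by comparing the coefficient of $x^{2\gamma}$ on the two sides of $(\sum_i x_i)^{2r}=\big((\sum_i x_i)^r\big)^2$. Combining this with SoS Cauchy--Schwarz together with the constraints $x_i^2=1$ (which give $\|v^{(j)}\|_2^2\ge\tfrac1{|B_j|}\langle x^{(j)},v^{(j)}\rangle^2\ge\tfrac{k}{2n}\langle x^{(j)},v^{(j)}\rangle^2$), raising to the power $r=a_j/2$, and multiplying over $j$ --- legitimate for SoS since all quantities involved are manifestly nonnegative polynomials --- each summand above is bounded below, and summing with the positive weights $\binom{2k}{a_1,\dots,a_k}$ yields
\[
    \{x_i^2=1\}\ \sststile{2(t+1)k}{x,v}\ \E_h\!\big[\langle h,v\rangle^{2k}\big]
    \ge\Big(\tfrac{k}{2n}\Big)^{\!k}\!\!\sum_{\substack{a_j\ge 0\text{ even}\\ \sum_j a_j=2k}}\!\!\binom{2k}{a_1,\dots,a_k}\prod_{j=1}^k\langle x^{(j)},v^{(j)}\rangle^{a_j}.
\]

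The step I expect to be the real obstacle is the last one. A naive block argument would try to lower-bound the surviving sum by the single term $\prod_j\langle x^{(j)},v^{(j)}\rangle^2$, which is hopeless because it vanishes as soon as $v$ is supported on fewer than $k$ blocks. The fix is to recognize the surviving sum as the Rademacher average $\E_{\epsilon\sim\pmo^k}\big[\big(\sum_j\epsilon_j\langle x^{(j)},v^{(j)}\rangle\big)^{2k}\big]$, and then to note that this average dominates $\tfrac1{2^k}\big(\sum_j\langle x^{(j)},v^{(j)}\rangle\big)^{2k}=\tfrac1{2^k}\langle x,v\rangle^{2k}$ simply by keeping the $\epsilon=\mathbf 1$ summand and discarding the remaining $2^k-1$ terms, each of which is a $2k$-th power of a polynomial, hence a square --- so the inequality is SoS. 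This is the sum-of-squares substitute for the probabilistic ``Paley--Zygmund plus independence across blocks'' argument, and it is exactly what preserves the $k^k$ gain that improves $O(\sqrt n)$ to $O(\sqrt{n/k})$. Chaining the two displayed bounds gives $\E_h[\langle h,v\rangle^{2k}]\ge(\tfrac{k}{2n})^k\cdot\tfrac1{2^k}\langle x,v\rangle^{2k}=(\tfrac{k}{4n})^k\langle x,v\rangle^{2k}$; a routine inspection of degrees (each $\langle x^{(j)},v^{(j)}\rangle$ has degree $\le t+1$, and every intermediate certificate has $v$-degree $\le 2kt$) confirms the SoS proof has degree $2(t+1)k$.
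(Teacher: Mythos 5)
Your proof is correct and is essentially the paper's proof run in the reverse direction: the same partition into $k$ blocks of size at most $2n/k$, the same Rademacher/multinomial identity, per-block SoS Cauchy--Schwarz with $x_i^2=1$, the per-block power-mean inequality $\E_{h^{(j)}}\langle h^{(j)},v^{(j)}\rangle^{2r}\ge(\|v^{(j)}\|_2^2)^r$ (which you reprove by comparing multinomial coefficients, where the paper invokes its generalized triangle inequality), and the same final loss of $2^{-k}$ from keeping only the $\epsilon=\mathbf 1$ term and discarding the remaining $2k$-th powers as squares. The constant $(k/4n)^k$ and the degree bookkeeping match the paper's.
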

\begin{proof}
    We divide $[n]$ into $k$ blocks, each of size at most $\ceil{\frac{n}{k}}$.
    For $t\in [k]$, let $x^{(t)}, v^{(t)}$ be the vectors $x,v$ restricted to the $t$-th block. Then,
    \begin{equation*}
        \sststile{2(t+1)k}{x,v}\ \iprod{x,v}^{2k} = \Paren{\sum_{t=1}^k \iprod{x^{(t)}, v^{(t)}} }^{2k}
        \leq 2^k \cdot \E_{\eps \sim \pmo^k} \Paren{ \sum_{t=1}^k \eps_t \iprod{x^{(t)}, v^{(t)}} }^{2k} \mcom
        \numberthis \label{eq:x-v-2k}
    \end{equation*}
    since $\paren{ \sum_{t=1}^k \eps_t \iprod{x^{(t)}, v^{(t)}} }^{2k}$ is a square for each $\eps \in \pmo^k$.
    Expanding the above and using the fact that all odd moments of $\eps \sim \pmo^k$ vanish, we get
    \begin{equation*}
        \E_{\eps \sim \pmo^k} \Paren{ \sum_{t=1}^k \eps_t \iprod{x^{(t)}, v^{(t)}} }^{2k} = \sum_{\gamma\in \N^k: |\gamma| = k} c_{\gamma} \prod_{t=1}^k \iprod{x^{(t)}, v^{(t)}}^{2\gamma_t} \mcom
        \numberthis \label{eq:expand-E-h}
    \end{equation*}
    where $c_{\gamma} \seteq \frac{(2k)!}{\prod_{t=1}^k (2\gamma_t)!}$.
    Here $|\gamma| = \sum_{t=1}^k \gamma_t$ and $\gamma$ represents a multiset of $[k]$ of size $|\gamma|$.
    Next, by SoS Cauchy-Schwarz (\Cref{fact:sos-triangle}), we have that 
    \begin{equation*}
        \Set{x_i^2=1,\ \forall i \in [n]} \ \sststile{2(t+1)}{x,v}\ 
        \iprod{x^{(t)}, v^{(t)}}^2
        \leq \|x^{(t)}\|_2^2 \cdot \|v^{(t)}\|_2^2
        \leq \Paren{\frac{2n}{k}} \cdot \|v^{(t)}\|_2^2\mcom
    \end{equation*}
    since $x^{(t)}$ has dimension at most $\ceil{\frac{n}{k}} \leq \frac{2n}{k}$.
    Next, using the identity $\|v^{(t)}\|_2^2 = \E_{h^{(t)}}\iprod{h^{(t)}, v^{(t)}}^2$ where $h^{(t)} \sim \pmo^{\dim(x^{(t)})}$,
    \begin{align*}
        \Set{x_i^2=1,\ \forall i \in [n]} \ \sststile{2(t+1)k}{x,v}\ 
        \prod_{t=1}^k \iprod{x^{(t)}, v^{(t)}}^{2\gamma_t}
        &\leq \prod_{t=1}^k \Paren{\frac{2n}{k}}^{\gamma_t} \|v^{(t)}\|_2^{2\gamma_t}\\
        &= \Paren{\frac{2n}{k}}^k \prod_{t=1}^k \Paren{\E_{h^{(t)}} \iprod{h^{(t)}, v^{(t)}}^2}^{\gamma_t} \\
        &\leq \Paren{\frac{2n}{k}}^k \prod_{t=1}^k \E_{h^{(t)}} \iprod{h^{(t)}, v^{(t)}}^{2\gamma_t} \mcom
    \end{align*}
    where the last inequality uses \Cref{fact:sos-triangle}.
    Combining the above with \Cref{eq:x-v-2k,eq:expand-E-h}, we have
    \begin{equation*}
        \Set{x_i^2=1,\ \forall i \in [n]} \ \sststile{2(t+1)k}{x,v}\
        \iprod{x,v}^{2k} \leq \Paren{\frac{4n}{k}}^k \cdot \E_{h\sim \pmo^n} \sum_{\gamma\in \N^k: |\gamma|=k} c_{\gamma} \prod_{t=1}^k \iprod{h^{(t)}, v^{(t)}}^{2\gamma_t} \mper
    \end{equation*}
    Finally, since $h^{(1)},\dots,h^{(k)}$ are uniformly random Boolean vectors, multiplying $h^{(t)}$ by $\eps_t \sim \pmo$ does not change the distribution.
    Thus, applying \Cref{eq:expand-E-h} again, we get
    \begin{align*}
        \E_{h} \sum_{\gamma\in \N^k: |\gamma|=k} c_{\gamma} \prod_{t=1}^k \iprod{h^{(t)}, v^{(t)}}^{2\gamma_t}
        = \E_h \E_{\eps} \Paren{\sum_{t=1}^k \eps_t \iprod{h^{(t)}, v^{(t)}}}^{2k}
        = \E_h \Paren{\sum_{t=1}^k \iprod{h^{(t)}, v^{(t)}}}^{2k}
        = \E_h \iprod{h, v}^{2k} \mper
    \end{align*}
    This completes the proof.
\end{proof}

Our second key ingredient is the analog of \Cref{lem:scalar-fixing} for high moments.

\begin{lemma} \label{lem:scalar-fixing-high-degree}
    Let $k\in \N$.
    Let $p$ be a degree-$t$ polynomial in variables $x\in \R^n$, and let $\mu$ be a degree-$(2k+2)t$ pseudo-distribution.
    There is a degree-$2kt$ reweighting of $\mu$ such that the resulting pseudo-distribution $\mu'$ satisfies
    \begin{equation*}
        \Abs{\pE_{\mu'}p} \geq \frac{1}{3} \cdot \left(\pE_{\mu} \left[p^{2k}\right]\right)^{\frac{1}{2k}} \mper
    \end{equation*}
\end{lemma}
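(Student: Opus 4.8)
The plan is to prove the inequality with $1$ in place of $\tfrac13$, which is of course more than enough. Assume $\pE_{\mu}[p^{2k}]>0$ (otherwise the claim is trivial) and set $m:=(\pE_{\mu}[p^{2k}])^{1/(2k)}$. I will exhibit a univariate polynomial $s$ of degree at most $k$ with $\pE_{\mu}[s(p)^2]>0$ and $\lvert\pE_{\mu}[p\,s(p)^2]\rvert\ge m\cdot\pE_{\mu}[s(p)^2]$. The SoS polynomial $s(p)^2$ has degree at most $2kt$, so reweighting $\mu$ by it yields a pseudo-distribution $\mu'$ with $\lvert\pE_{\mu'}[p]\rvert=\lvert\pE_{\mu}[p\,s(p)^2]\rvert/\pE_{\mu}[s(p)^2]\ge m$.

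To produce $s$, I would work in the $(k+1)$-dimensional space $V$ of univariate polynomials of degree at most $k$, equipped with the symmetric bilinear forms $\langle r,s\rangle_0:=\pE_{\mu}[r(p)s(p)]$ and $\langle r,s\rangle_1:=\pE_{\mu}[p\,r(p)s(p)]$; these are well defined because the polynomials inside have degree at most $(2k+1)t\le(2k+2)t$, and $\langle\cdot,\cdot\rangle_0$ is positive semidefinite. Passing to the quotient $\bar V:=V/\mathrm{rad}\langle\cdot,\cdot\rangle_0$ makes $\langle\cdot,\cdot\rangle_0$ a genuine inner product. The key step is to check that $\langle\cdot,\cdot\rangle_1$ descends to $\bar V$ and hence defines a (necessarily self-adjoint) operator $M$ on $\bar V$ via $\langle Mr,s\rangle_0=\langle r,s\rangle_1$; this follows from Cauchy--Schwarz for pseudo-distributions (\Cref{fact:sos-cauchy-schwarz}) applied to the pair $p\,r(p)$ (degree $\le(k+1)t$) and $s(p)$ (degree $\le kt$), and this is exactly where the hypothesis $\deg\mu=(2k+2)t=2\max\{(k+1)t,kt\}$ is used — it forces $\langle r,s\rangle_1=0$ whenever $r$ or $s$ lies in the radical.

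Finally, I would record by induction that $M^j\bar 1=\overline{y^j}$ in $\bar V$ for $0\le j\le k$ (the inductive step is the identity $\langle M\overline{y^{j-1}},s\rangle_0=\pE_{\mu}[p^{j}s(p)]=\langle\overline{y^{j}},s\rangle_0$, valid since $y^j\in V$). Using self-adjointness of $M^k$ this gives $\langle M^{2k}\bar 1,\bar 1\rangle_0=\langle M^k\bar 1,M^k\bar 1\rangle_0=\pE_{\mu}[p^{2k}]=m^{2k}$, while $\langle\bar 1,\bar 1\rangle_0=\pE_{\mu}[1]=1$. Since $M^{2k}=(M^k)^2\succeq0$, its largest eigenvalue, which equals $(\max_i\lvert\theta_i\rvert)^{2k}$ for $\theta_i$ the eigenvalues of $M$, is at least the Rayleigh quotient $m^{2k}$ at $\bar 1$; hence some eigenvalue $\theta^*$ of $M$ has $\lvert\theta^*\rvert\ge m$. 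A lift $s\in V$ of a corresponding eigenvector then satisfies $\pE_{\mu}[s(p)^2]=\langle s,s\rangle_0>0$ and $\pE_{\mu}[p\,s(p)^2]=\langle Ms,s\rangle_0=\theta^*\langle s,s\rangle_0$, which is the $s$ we wanted. The main obstacle is purely the possible degeneracy of $\langle\cdot,\cdot\rangle_0$ and the accompanying quotient bookkeeping; if one prefers, one may first reduce to the nondegenerate case (perturb $\mu$ towards an honest full-support distribution and pass to a limit), after which the argument becomes a clean statement about the Hankel matrix $(\pE_{\mu}[p^{i+j}])_{0\le i,j\le k}$ and its one-step shift $(\pE_{\mu}[p^{i+j+1}])_{0\le i,j\le k}$.
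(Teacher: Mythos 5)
Your argument is correct, and it reaches the conclusion by a genuinely different route than the paper. The paper's proof is an explicit case analysis: it tries the three candidate reweightings $p^{2k}$, $p^{2k-2}$, and $(p+m)^2p^{2k-2}$ in turn, and shows via elementary moment inequalities (SoS Cauchy--Schwarz and H\"older) that at least one of them achieves ratio $\geq m/3$. You instead optimize over \emph{all} reweightings of the form $s(p)^2$ with $\deg s \leq k$, and the spectral argument on the Hankel operator $M$ (with $\langle M^{2k}\bar 1,\bar 1\rangle_0 = \pE_\mu[p^{2k}]$ as the Rayleigh quotient certificate) shows the best such reweighting achieves ratio $|\theta^*|\geq m$, i.e.\ constant $1$ rather than $1/3$. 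Your degree accounting is right and matches the paper's: the only place the full degree $(2k+2)t$ is needed is the Cauchy--Schwarz step showing $\langle\cdot,\cdot\rangle_1$ vanishes on the radical of $\langle\cdot,\cdot\rangle_0$ (the paper needs the same budget for its step $\pE_\mu[p^{2k}]^2 \leq \pE_\mu[p^{2k-2}]\,\pE_\mu[p^{2k+2}]$), and the reweighting polynomial $s(p)^2$ has degree $\leq 2kt$ as required. What your approach buys is a sharper constant and a conceptual explanation of where the reweighting comes from; what the paper's buys is a short, fully explicit proof that directly generalizes its $k=1$ lemma. One small caution: the parenthetical alternative of perturbing $\mu$ to a nondegenerate distribution and passing to a limit is not quite as clean as stated, since the normalized eigenvectors $s_\varepsilon$ could converge into the radical of the original form, making the limiting inequality vacuous; but this is a side remark, and your primary quotient-space argument is complete as written.
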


\begin{proof}
    Let $m \seteq \Paren{\pE_{\mu}\left[p^{2k}\right]}^{\frac{1}{2k}} > 0$. 

    First, consider reweighting $\mu$ by the degree-$2kt$ sum-of-squares polynomial $p^{2k}$ and denote by $\mu_1$ the resulting pseudo-distribution. Then, we have $\Abs{\pE_{\mu_1} \left[p \right]} = \frac{\Abs{\pE_\mu \left[p^{2k+1}\right]}}{m^{2k}}$. We are done if this is larger than $\frac{m}{3}$, hence it remains to handle the case $\Abs{\pE_\mu \left[p^{2k+1}\right]}\le \frac{m^{2k+1}}{3}$.

    Now, reweight $\mu$ by $p^{2k-2}$ and denote by $\mu_2$ the resulting pseudo-distribution. Note that by the pseudo-distribution version of Cauchy-Schwarz (\Cref{fact:sos-cauchy-schwarz}), as long as $\mu$ is a degree-$(2k+2)t$ pseudo-distribution,
    \[
        0<\pE_\mu \left[ p^{2k} \right]^2\le \pE_\mu \left[p^{2k-2}\right]\cdot \pE_\mu \left[p^{2k+2}\right]\mcom
    \]
    so that $\pE_\mu \left[p^{2k-2}\right]>0$ and the reweighting is well-defined. Furthermore, we have $\Abs{\pE_{\mu_2}\left[p\right]}=\frac{\Abs{\pE_\mu \left[p^{2k-1}\right]}}{\pE_\mu \left[p^{2k-2}\right]}$. Once again, we are done if this is larger than $\frac{m}{3}$, so we assume from now on that $\Abs{\pE_\mu \left[p^{2k-1}\right]}\le \frac{m}{3}\cdot \pE_\mu \left[p^{2k-2}\right]$.

    Finally, we consider the reweighting of $\mu$ by the SoS polynomial $(p+m)^2 p^{2k-2}$ and call $\mu_3$ the resulting pseudo-distribution. We have:
    \[
        \pE_\mu \left[ (p+m)^2 p^{2k-2} \right]=m^{2k}+2m\cdot \pE_\mu \left[ p^{2k-1} \right]+m^2 \cdot \pE_\mu \left[ p^{2k-2} \right]\in \left(0,\frac{8m^{2k}}{3}\right]\mcom
    \]
    where we also use $\pE_\mu \left[ p^{2k-2} \right]\le m^{2k-2}$ (which follows from \Cref{fact:sos-holders}). In particular, the reweighting for $\mu_3$ is well-defined. Similarly, we have
    \[
        \pE_\mu \left[ (p+m)^2 p^{2k-1}\right]\ge 2m^{2k+1}-\Abs{\pE_\mu \left[ p^{2k+1} \right]}-m^2\cdot \Abs{\pE_\mu \left[ p^{2k-1} \right]}\ge \frac{4m^{2k+1}}{3}\mper
    \]
    Thus, $\pE_{\mu_3} \left[ p \right]\ge \frac{m}{2}$ holds in this case, which concludes the proof.
\end{proof}

We are now ready to prove \Cref{thm:higher-degree-sos}.

\begin{proof}[Proof of \Cref{thm:higher-degree-sos}]
    Similarly to the proof of \Cref{thm:deg6-sos-approx}, we start by defining $q_i = q_i(y,z) = y^\top T_i z$.
    By \Cref{fact:sos-holders} and \Cref{lem:h-v-moment},
    \begin{equation*}
        \SOS^{2k} = \Paren{\pE_{\mu} \iprod{x, q} }^{2k}
        \leq \pE_{\mu} \iprod{x,q}^{2k}
        \leq O\Paren{\frac{n}{k}}^k \pE_{\mu} \E_{h\sim \pmo^n} \iprod{h,q}^{2k} \mper
    \end{equation*}
    Here we require $\mu$ to be a degree-$6k$ pseudo-distribution.

    Since $h\mapsto \pE_{\mu}\iprod{q,h}^{2k}$ is a degree-$2k$ polynomial, by anti-concentration of low-degree polynomials (\Cref{lem:anti-concentration-hypercontractivity}), we can sample $h\in \pmo^n$ such that $\Paren{\pE_{\mu} \iprod{q,h}^{2k}}^{1/2k} \geq \Omega\Bigparen{\sqrt{\frac{k}{n}}} \cdot \SOS$ with probability at least $2^{-O(k)}$.

    The rounding algorithm is as follows,
    \begin{enumerate}
        \item Sample $h\sim \pmo^n$ and set $\ol{x} = h$.

        \item Reweight the pseudo-distribution via \Cref{lem:scalar-fixing-high-degree} such that $|\pE_{\mu'} \iprod{q, h}| \geq \frac{1}{3} \Bigparen{\pE_{\mu} \iprod{q,h}^{2k}}^{1/2k}$.
        The SoS degree required for the reweighting is $2(2k+2) \leq 6k$.

        \item Use Grothendieck rounding (\Cref{fact:grothendieck}) on $\mu'$ to obtain solutions $\ol y,\ol z\in \pmo^n$ for the quadratic polynomial $y^\top (\sum_{i=1}^n h_i T_i) z$.
    \end{enumerate}
    The Grothendieck rounding gives us solutions $\ol y,\ol z \in \pmo^n$ with value $\Omega(1) \cdot |\pE_{\mu'}\iprod{q,h}|$.
    Thus, with probability at least $2^{-O(k)}$, we get assignments $\ol{x}, \ol{y}, \ol{z}\in\pmo^n$ such that
    $f(\ol{x}, \ol{y}, \ol{z}) \geq \Omega\Bigparen{\sqrt{\frac{k}{n}}} \cdot \SOS$.
    This completes the proof.
\end{proof}

\section{Polynomial-size SDPs via compressed SoS relaxations}
\label{sec:compressed}

This section is dedicated to the proof of the following theorem.

\begin{theorem}\label{thm:pruned-version}
    Let $k,n$ be integers such that $1\le k\le n$. There is a $2^{O(k)} n^{O(1)}$-time certification algorithm that, given a decoupled homogeneous degree-3 polynomial $f(x,y,z)=\sum_{1\le i,j,k\le n} T_{ijk} x_i y_i z_k$ achieves $O(\sqrt{n/k})$-approximation to $\OPT\coloneqq \max_{x,y,z\in\pmo^n} f(x,y,z)$.
    Moreover, there is a corresponding  rounding algorithm running in $2^{O(k)} n^{O(1)}$ time that outputs a solution $\ol x,\ol y,\ol z\in\pmo^n$ with value $f(\ol x,\ol y,\ol z)\ge \Omega\Bigparen{\sqrt{\frac{k}{n}}}\cdot \OPT$.
\end{theorem}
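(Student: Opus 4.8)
The plan is to bypass high-degree SoS entirely and \emph{derandomize} the random direction used inside the rounding of \Cref{thm:higher-degree-sos}: I will trade the random draw $h \sim \pmo^n$ for a deterministic sweep over a small explicit hitting set, and then only solve a degree-$2$ (Grothendieck) SoS program for the remaining two blocks of variables. Since $f$ is already decoupled, write $f(x,y,z) = \sum_{i=1}^n x_i q_i(y,z)$ with $q_i(y,z) = y^\top T_i z$, set $q(y,z) \coloneqq (q_1(y,z),\dots,q_n(y,z)) \in \R^n$, and for $h \in \pmo^n$ let $M_h \coloneqq \sum_{i=1}^n h_i T_i$, so that $\iprod{h, q(y,z)} = y^\top M_h z$ and, choosing $x_i = \sgn(q_i(y,z))$ optimally, $\OPT = \max_{y,z \in \pmo^n} \|q(y,z)\|_1$.

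The single nontrivial ingredient is a \emph{hitting set for halfspaces at scale $\sqrt{k/n}$}: an explicitly computable $H \subseteq \pmo^n$ with $|H| \le 2^{O(k)} n^{O(1)}$, built in $2^{O(k)} n^{O(1)}$ time, such that for every $w \in \R^n$ there is $h \in H$ with $\iprod{h, w} \ge c_0 \sqrt{k/n}\, \|w\|_1$ for an absolute constant $c_0 > 0$. That a set of this size \emph{exists} is exactly the block argument already implicit in \Cref{lem:h-v-moment} and after \Cref{eq:high-degree-poly-ineq}: split $[n]$ into $k$ blocks of size $\approx n/k$; by Paley--Zygmund applied to $\iprod{h^{(t)}, w^{(t)}}^2$ (whose fourth moment is $O(1)$ times the square of its mean) together with independence across blocks, a Rademacher $h$ satisfies $\iprod{h^{(t)}, w^{(t)}} \ge \Omega(\sqrt{k/n})\,\|w^{(t)}\|_1$ for all $t$ simultaneously with probability $2^{-O(k)}$, hence $\iprod{h, w} \ge \Omega(\sqrt{k/n})\,\|w\|_1$ with probability $2^{-O(k)}$; so the support of any generator that fools linear threshold functions to error $2^{-\Theta(k)}$ works. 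Turning this into an \emph{explicit} $H$ of nearly optimal size $2^{O(k)} n^{O(1)}$ --- off-the-shelf LTF generators only give size $n^{O(\log n)}$ --- is the one place that needs real work, and I expect this derandomization to be the main obstacle; it will be done by an elementary direct construction.

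Given $H$, the algorithm is immediate: for each $h \in H$, solve the degree-$2$ SoS relaxation of $\max_{y,z \in \pmo^n} y^\top M_h z$, getting a pseudo-distribution $\nu_h$ with value $V_h \coloneqq \pE_{\nu_h}[y^\top M_h z]$; apply Grothendieck rounding (\Cref{fact:grothendieck}) to $\nu_h$ to obtain $\bar y_h, \bar z_h \in \pmo^n$ with $\bar y_h^\top M_h \bar z_h \ge V_h / K_G$; and output the triple $(h, \bar y_h, \bar z_h)$ maximizing $f$. Equivalently, the SDP being rounded is the scalar $c_0^{-1}\sqrt{n/k}$ times the disjoint bundle, over $h \in H$, of these degree-$2$ SoS programs --- a single SDP with $2^{O(k)} n^{O(1)}$ auxiliary variables and constraints, whose optimum is $c_0^{-1}\sqrt{n/k}\cdot\max_{h \in H} V_h$. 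The total time is $2^{O(k)} n^{O(1)}$: $|H|$ iterations, each a $\poly(n)$-size SDP plus $\poly(n)$-time rounding, after $2^{O(k)} n^{O(1)}$-time preprocessing to build $H$.

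For correctness, let $(x^*, y^*, z^*)$ be optimal, so $\OPT = \|q(y^*, z^*)\|_1$. The hitting-set property applied to $w = q(y^*, z^*)$ yields $h^\star \in H$ with $\iprod{h^\star, q(y^*, z^*)} \ge c_0 \sqrt{k/n}\,\OPT$; since the degree-$2$ SoS value dominates the integral optimum, $V_{h^\star} \ge \max_{y,z \in \pmo^n} y^\top M_{h^\star} z \ge (y^*)^\top M_{h^\star} z^* \ge c_0 \sqrt{k/n}\,\OPT$, so the output has value at least $V_{h^\star}/K_G \ge \Omega(\sqrt{k/n}) \cdot \OPT$. Conversely, for every $h \in H$, Grothendieck rounding gives $\bar y_h^\top M_h \bar z_h \ge V_h/K_G$ with $\bar y_h^\top M_h \bar z_h = f(h, \bar y_h, \bar z_h) \le \OPT$, so $V_h \le K_G\,\OPT$; combining, $\OPT \le c_0^{-1}\sqrt{n/k}\cdot\max_{h \in H} V_h \le c_0^{-1} K_G \sqrt{n/k}\cdot\OPT$. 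Thus the SDP optimum $c_0^{-1}\sqrt{n/k}\cdot\max_{h \in H} V_h$ is a rigorous upper bound on $\OPT$ tight to within $O(\sqrt{n/k})$ (the certification), it is simultaneously $\ge \OPT$ and $\le O(\sqrt{n/k})\,\OPT$ (integrality gap $O(\sqrt{n/k})$), and the rounding above extracts from it an integral solution of value $\Omega(\sqrt{k/n}) \cdot \OPT$. This is exactly the $L_1$-diameter-free reformulation of Khot--Naor from \Cref{sec:khotnaor} with the single random $\bar x$ replaced by a deterministic sweep over $H$, except that it now also outputs a certificate and never forms the $n^{O(k)}$-size SoS relaxation of \Cref{thm:higher-degree-sos}; in particular $k = 1$ already yields a polynomial-time $O(\sqrt{n})$-approximate certifier.
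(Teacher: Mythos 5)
Your algorithm is correct given the hitting set, and it takes a genuinely different --- and, given that ingredient, simpler --- route than the paper's proof. The paper does not sweep over $H$ and solve $|H|$ independent Grothendieck SDPs: it writes a single degree-$12$ SoS feasibility program with auxiliary variables $M_x$ for each $x$ in the support of the hitting distribution $\calD$, imposes the moment-ratio constraint $\E_{\wh x\sim\calD}[M_{\wh x}^2(\langle\wh x,q\rangle^2-\alpha^2)]\ge 0$, binary-searches over $\alpha$, and rounds a feasible pseudo-distribution by reweighting by $M_{\ol x}^2$ followed by \Cref{lem:scalar-fixing} and Grothendieck rounding; feasibility of the intended solution (taking $M_x=\langle x,q^*\rangle^k$) is verified via H\"older's inequality and the moment lower bound of \Cref{lemmom}. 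Your route replaces all of that machinery with $\max_{h\in H} V_h$ over plain degree-$2$ programs, and your two-sided analysis ($V_h\le K_G\cdot\OPT$ for every $h$, and $V_{h^\star}\ge c_0\sqrt{k/n}\cdot\OPT$ for the hitting $h^\star$) is sound; it also sidesteps the $n^{-O(1/k)}$ loss in \Cref{lemmom} and the paper's resulting need to assume $k=\Omega(\log n)$. What you give up is that the paper's certificate comes from one SDP relaxation of the cubic problem to which further problem-specific constraints can be added (as exploited in \Cref{sec:3-sat}), whereas yours is a scaled maximum of $|H|$ separate quadratic SDP values.

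The one real gap is the one you flagged yourself: you never construct $H$, and your sketched existence argument does not lead to the construction. You ask all $k$ blocks of a Rademacher $h$ to succeed simultaneously, which happens with probability $2^{-\Theta(k)}$ and pushes you toward PRGs for linear threshold functions, whose known explicit supports are too large. The paper's \Cref{def:hitting-set} avoids this: it uses a \emph{single} $4$-wise independent vector $\wh b\in\pmo^{n/k}$ shared by all blocks, so that by Paley--Zygmund and linearity of expectation $\E_{\wh b}\sum_{i=1}^k|\langle\wh b,w^{(i)}\rangle|\ge\Omega(\sqrt{k/n})\,\|w\|_1$; hence some fixed $b$ in the $n^{O(1)}$-size support already works for every block on average, and the cross-block signs are then aligned by enumerating all $\wh c\in\pmo^k$, yielding $H=\{c\otimes b\}$ of size $2^k n^{O(1)}$ (which is symmetric under negation, giving you the one-sided guarantee you need). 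With that construction plugged in, your proof goes through.
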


Roughly, we will proceed by ``compressing'' the SDP relaxations analyzed in \Cref{sec:rounding-main}. We will use some explicit hitting set of size $2^{k}n^{O(1)}$ and use it to define some  constant-degree SoS relaxations with $2^{k}n^{O(1)}$  variables and one additional axiom.

\subsection{The blockwise construction of the hitting set}
\label{sec:blockwise-precise}

Before explaining how to write down the relaxations, we describe the construction of our hitting set over a small sample space that ``fools'' high moments in every direction. We will mimic the anti-concentration proof from \Cref{sec:approx-higher} by decomposing the $n$-dimensional vectors into $k$ blocks.

\begin{definition} \label{def:hitting-set}
    Let $n, k\in \N$ such that $k$ divides $n$.
    Define the distribution $\calD$ over $\wh x \in \pmo^n$  as follows.
    \begin{enumerate}
        \item Sample $\wh{b}$ from a 4-wise independent distribution over $\pmo^{\frac{n}{k}}$, that is, let $\wh{b} = f(s)$, where $f: \pmo^r \to \pmo^{\frac{n}{k}}$ is a 4-wise independent pseudorandom generator with seed $s \sim \pmo^r$ and $r = O(\log n)$.
        
        \item Sample $\wh{c} \sim \pmo^k$ independently of $\wh{b}$.
        
        \item Let $\wh{x} \coloneqq \wh{c} \ot \wh{b}$. In other words, decompose $\wh{x}$ into $k$ blocks $\wh{x}^{(1)},\dots, \wh{x}^{(k)}$ of size $\frac{n}{k}$ and set $\wh{x}^{(i)} \coloneqq \wh{c}_i \cdot \wh{b}$ for all $i\in [k]$.
    \end{enumerate}
\end{definition}

The following observation can be deduced for example from the classical construction of $k$-wise independent sets of random variables \cite{Jof74}.

\begin{claim}\label{claim:prg-size}
    The distribution $\calD$ can be obtained as the uniform distribution over a sample space of size $2^k n^{O(1)}$. In particular, for any $x\in \supp(\calD)$, $\Pr_{\wh{x} \sim \calD}\left[\wh{x} = x\right] \ge  2^{-k} n^{-O(1)}$.
\end{claim}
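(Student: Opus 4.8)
The plan is to trace through \Cref{def:hitting-set} and observe that the only randomness used to generate $\wh x \sim \calD$ is the pair $(s, \wh c)$, where $s \in \pmo^r$ is the seed of the $4$-wise independent generator $f$ and $\wh c \in \pmo^k$ is the vector of block signs, both of which live in explicitly small sample spaces. First I would invoke a standard construction of a $4$-wise independent distribution over $\pmo^{n/k}$ --- e.g.\ the polynomial-evaluation / BCH-code construction going back to \cite{Jof74} --- which yields an explicit generator $f \colon \pmo^r \to \pmo^{n/k}$ with seed length $r = O(\log(n/k)) = O(\log n)$; in particular the seed space $\pmo^r$ has size $2^r = n^{O(1)}$, and this is the source of the $n^{O(1)}$ factor in the claim.

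Next I would observe that by construction $\wh x = \wh c \ot f(s)$ is a \emph{deterministic} function of $(s, \wh c) \in \pmo^r \times \pmo^k$, so that $\calD$ is by definition exactly the law of $\phi(s, \wh c)$ for $(s, \wh c)$ uniform on $\pmo^r \times \pmo^k$, where $\phi(s, c) \coloneqq c \ot f(s)$. Hence $\calD$ is obtained as the uniform distribution over the sample space $\pmo^r \times \pmo^k$ (together with the explicit evaluation map $\phi$ into $\pmo^n$), and this sample space has size $2^{r+k} = 2^k \cdot n^{O(1)}$, which is the first assertion. The ``in particular'' statement follows immediately: every $x \in \supp(\calD)$ has at least one preimage under $\phi$, so $\Pr_{\wh x \sim \calD}[\wh x = x] \ge 2^{-(r+k)} = 2^{-k} n^{-O(1)}$.

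There is essentially no obstacle in this argument: the only quantitative input is the seed length of an explicit $4$-wise independent generator on at most $n$ bits, which is classical and $O(\log n)$ (one may take $r \le 4 \lceil \log(n+1) \rceil$, say), and the rest is bookkeeping about the product sample space $\pmo^r \times \pmo^k$ and the elementary fact that a push-forward of a uniform distribution over a finite set is itself ``uniform over a sample space'' in the sense used throughout the derandomization literature. If one wanted to be fully self-contained, the only thing worth spelling out is the explicit $4$-wise independent construction and its seed length; I would relegate that to the cited reference.
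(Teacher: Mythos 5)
Your proposal is correct and is exactly the argument the paper intends (the paper itself gives no proof beyond citing the classical $4$-wise independent construction): $\calD$ is the push-forward of the uniform distribution on $\pmo^r \times \pmo^k$ with $r = O(\log n)$, giving a sample space of size $2^k n^{O(1)}$ and hence the stated probability lower bound since every support point has at least one preimage. Nothing is missing.
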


We will also need the following result, which is a direct consequence of the Paley-Zygmund inequality and the 4-wise independence of $\wh{b}$.

\begin{claim}\label{lem:paley-zygmund}
    For all $w\in \R^{\frac{n}{k}}$, $\Pr_{\wh{b}} \left[ \Abs{\langle \wh b,w\rangle} \ge \frac{1}{2}\| w\|_2\right]\ge \Omega(1)$.
\end{claim}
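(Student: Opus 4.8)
The plan is to prove Claim~\ref{lem:paley-zygmund} as a direct application of the Paley–Zygmund inequality, using only the second and fourth moments of $\langle \wh b, w\rangle$, which is exactly where $4$-wise independence of $\wh b$ enters. First I would write $S \coloneqq \langle \wh b, w \rangle = \sum_{i=1}^{n/k} \wh b_i w_i$ and record the two moment computations. Since the coordinates $\wh b_i$ are $\pm 1$-valued, $\E[\wh b_i^2] = 1$, and since they are (at least) $2$-wise independent with $\E[\wh b_i]=0$, all cross terms vanish, so $\E[S^2] = \sum_i w_i^2 = \|w\|_2^2$. For the fourth moment, $\E[S^4] = \sum_{i,j,k,\ell} w_i w_j w_k w_\ell \,\E[\wh b_i \wh b_j \wh b_k \wh b_\ell]$; by $4$-wise independence this expectation agrees with the fully independent Rademacher case, so the only surviving terms are those where the indices pair up, giving $\E[S^4] = 3\|w\|_2^4 - 2\sum_i w_i^4 \leq 3\|w\|_2^4$ (or one can just bound it by $3\|w\|_2^4$ directly from hypercontractivity of degree-one Rademacher forms under $4$-wise independence).

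Next I would apply the Paley–Zygmund inequality to the nonnegative random variable $S^2$: for any $\theta \in [0,1]$,
\[
    \Pr\!\left[ S^2 \geq \theta\, \E[S^2] \right] \;\geq\; (1-\theta)^2 \cdot \frac{\E[S^2]^2}{\E[S^4]} \;\geq\; (1-\theta)^2 \cdot \frac{\|w\|_2^4}{3\|w\|_2^4} \;=\; \frac{(1-\theta)^2}{3} \mper
\]
Taking $\theta = \tfrac14$ gives $\Pr[ S^2 \geq \tfrac14 \|w\|_2^2 ] \geq \tfrac{(3/4)^2}{3} = \tfrac{3}{16} = \Omega(1)$, and since $S^2 \geq \tfrac14\|w\|_2^2$ is equivalent to $|\langle \wh b, w\rangle| \geq \tfrac12 \|w\|_2$, this is exactly the claimed bound. (If $w = 0$ the statement is vacuous, so we may assume $\|w\|_2 > 0$ and the division above is legitimate.)

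I do not anticipate any real obstacle here — the only thing to be careful about is confirming that the distribution produced by the $4$-wise independent generator $f$ genuinely has $\E[\wh b_i \wh b_j \wh b_k \wh b_\ell]$ matching the independent case for all multi-indices of size at most four (including repeated indices, which reduce to lower-order moments and are automatically handled since $4$-wise independence implies $\le 3$-wise independence). The constant $\tfrac12$ in the statement is not tight; any constant strictly less than $1$ works with a correspondingly adjusted $\Omega(1)$, so there is slack and no delicate optimization is needed.
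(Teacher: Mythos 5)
Your proof is correct and follows exactly the route the paper intends: the paper states this claim as ``a direct consequence of the Paley--Zygmund inequality and the 4-wise independence of $\wh{b}$,'' and your argument simply spells out that one-liner, with the moment computations $\E[S^2]=\|w\|_2^2$ and $\E[S^4]\le 3\|w\|_2^4$ done correctly. No issues.
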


Finally, the following lower bound on the moments of $\wh x$ will be the key ingredient to prove that our relaxation provides a correct certificate to the optimum.

\begin{lemma}[Large moments in every direction] \label{lemmom}
    For all $w\in\R^n$,
    \[
        \E_{\wh{x} \sim \calD} \iprod{ \wh{x},w}^{2k} \ge \Omega\left(\frac{k}{n}\right)^k n^{-O(1)}\|w\|_1^{2k} \mper
    \]
\end{lemma}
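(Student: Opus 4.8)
Fix $w\in\R^n$ and decompose it into $k$ blocks $w^{(1)},\dots,w^{(k)}$ of size $\frac nk$, matching the block structure of $\wh x=\wh c\otimes \wh b$. The plan is to mimic the sum-of-squares argument of \Cref{lem:h-v-moment} but working directly with the probability distribution $\calD$ so that we can exploit the $4$-wise independence of $\wh b$ and the full independence of $\wh c$. First I would write $\iprod{\wh x,w}=\sum_{t=1}^k \wh c_t\,\iprod{\wh b,w^{(t)}}$; setting $a_t\coloneqq \iprod{\wh b,w^{(t)}}$ (a function of $\wh b$ only), and using that $\wh c\sim\pmo^k$ is independent of $\wh b$, we get
\begin{align*}
    \E_{\wh x\sim\calD}\iprod{\wh x,w}^{2k}
    = \E_{\wh b}\,\E_{\wh c\sim\pmo^k}\Paren{\sum_{t=1}^k \wh c_t a_t}^{2k}
    = \E_{\wh b}\sum_{\gamma\in\N^k:\,|\gamma|=k} c_\gamma \prod_{t=1}^k a_t^{2\gamma_t}
    \ge c_{(1,\dots,1)}\,\E_{\wh b}\prod_{t=1}^k a_t^2,
\end{align*}
where $c_\gamma=\frac{(2k)!}{\prod_t(2\gamma_t)!}$ and we kept only the single term $\gamma=(1,\dots,1)$ (all terms are nonnegative since each $a_t^{2\gamma_t}\ge 0$). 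Note $c_{(1,\dots,1)}=(2k)!/2^k\ge (k/2)^k$ by a crude bound, which already contributes the needed $\Omega(k/n)^k$-type factor once we divide by $\|w\|_1^{2k}$ appropriately — I will track constants loosely.

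The crux is therefore to lower bound $\E_{\wh b}\prod_{t=1}^k \iprod{\wh b,w^{(t)}}^2$. Here $\wh b$ is a \emph{single} $\frac nk$-dimensional $4$-wise independent vector shared across all blocks, so the $k$ factors are \emph{not} independent, and this is the main obstacle. The resolution is that $4$-wise independence of $\wh b$ is enough to control each factor in a correlated-but-controlled way: I would expand $\prod_t \iprod{\wh b,w^{(t)}}^2=\prod_t\sum_{i,j}\wh b_i\wh b_j w^{(t)}_i w^{(t)}_j$ and observe that the diagonal contribution, where within each block we pair $i=j$, survives in expectation and equals $\prod_{t=1}^k \|w^{(t)}\|_2^2$, using $\E\wh b_i^2=1$; the issue is cancelling/bounding the off-diagonal cross terms across different blocks, where monomials $\wh b_{i_1}\wh b_{j_1}\cdots$ can involve up to $2k$ distinct coordinates and $4$-wise independence no longer pins down their expectation. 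Rather than fight this directly, the cleaner route — and the one I expect the authors take — is to first \emph{condition} on $\wh b$ and apply \Cref{lem:paley-zygmund} blockwise: for each $t$, $\Pr_{\wh b}[\,|\iprod{\wh b,w^{(t)}}|\ge \tfrac12\|w^{(t)}\|_2\,]\ge \Omega(1)$, which only needs $4$-wise independence (Paley-Zygmund needs just the second and fourth moments of $\iprod{\wh b,w^{(t)}}$, and these are determined by $4$-wise independence). But the $k$ events are about the \emph{same} $\wh b$, so I cannot multiply their probabilities. Instead I would argue: by Claim \ref{claim:prg-size}, $\wh b$ ranges over a sample space of size $n^{O(1)}$, and on this space $\E_{\wh b}\prod_t\iprod{\wh b,w^{(t)}}^2\ge \min_{\wh b}(\cdots)\cdot \Pr[\cdots]$ is too weak; the right move is the moment computation.

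So concretely, the decisive step is the identity
\begin{align*}
    \E_{\wh b}\,\prod_{t=1}^k \iprod{\wh b,w^{(t)}}^2
    \;=\; \sum_{i_1,j_1,\dots,i_k,j_k} \Paren{\prod_{t=1}^k w^{(t)}_{i_t}w^{(t)}_{j_t}}\,\E_{\wh b}\Brac{\prod_{t=1}^k \wh b_{i_t}\wh b_{j_t}},
\end{align*}
and I would restrict attention to the subsum over index patterns in which the $2k$ indices split into \emph{at most two distinct values within the collection $\{i_t,j_t\}$ for each pair but globally form a perfect matching in at most $\le 2$-element groups} — precisely the patterns whose expectation $4$-wise independence evaluates correctly, namely where each coordinate appears an even number of times. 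When the multiset $\{i_1,j_1,\dots,i_k,j_k\}$ has every element of even multiplicity, $\E_{\wh b}[\prod \wh b_{i_t}\wh b_{j_t}]=1$ provided no coordinate appears more than... — this is where $4$-wise independence is the bottleneck: it only certifies the value $1$ when at most $4$ of the $2k$ slots are ``active'', i.e.\ the patterns where all but one block are diagonal ($i_t=j_t$) and the remaining block either diagonal or a single transposition. Summing exactly those patterns gives a lower bound of the form $\prod_t\|w^{(t)}\|_2^2$ minus controllable error, and the terms with more structure are nonnegative (even-multiplicity patterns always contribute $\ge 0$ since their $\wh b$-expectation is a nonnegative value — actually $\in\{0,1\}$) while odd-multiplicity patterns vanish only when the ``odd part'' has $\le 4$ coordinates. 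The safe conclusion I would draw, matching the $n^{-O(1)}$ slack in the statement, is: drop to the block decomposition with each block of size $\frac nk$, lower bound $\|w^{(t)}\|_2^2\ge \frac{k}{n}\|w^{(t)}\|_1^2$ (Cauchy–Schwarz on a vector of dimension $\le n/k$), use $\prod_t\|w^{(t)}\|_1^2\ge$ (by AM–GM / power-mean across the $k$ blocks) $\ge \paren{\frac1k\sum_t\|w^{(t)}\|_1}^{2k}=\paren{\frac{\|w\|_1}{k}}^{2k}\ge k^{-2k}\|w\|_1^{2k}$, and finally fold the surviving combinatorial constant $c_{(1,\dots,1)}\cdot(\text{clean-pattern count})\ge (k/C)^k$ together with these to reach $\Omega(k/n)^k n^{-O(1)}\|w\|_1^{2k}$. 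The one genuinely delicate point — and the step I'd flag as the main obstacle — is verifying that the sum of off-diagonal $\wh b$-expectation terms does not cause a catastrophic negative contribution: since every nonzero $\E_{\wh b}[\prod \wh b_{i_t}\wh b_{j_t}]$ equals $1$ (it is $1$ when the multiset is even, $0$ otherwise, whenever $\le 4$ coordinates are involved, which is \emph{not} automatic beyond $4$-wise independence), one must instead argue via Claim \ref{lem:paley-zygmund} and a careful union/correlation bound that $\E_{\wh b}\prod_t\iprod{\wh b,w^{(t)}}^2=\Omega(1)^k\prod_t\|w^{(t)}\|_2^2$, which is exactly the blockwise Paley–Zygmund statement the authors isolated as Claim \ref{lem:paley-zygmund} — so I would, in the write-up, reduce the whole lemma to applying Claim \ref{lem:paley-zygmund} $k$ times after observing (via $4$-wise independence making distinct-block contributions factor in the relevant moments) that the blocks behave independently enough at the level of second moments of the products.
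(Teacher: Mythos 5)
There is a genuine gap, and it sits exactly at the point you flagged as the main obstacle. Your opening reduction
\[
\E_{\wh x\sim\calD}\iprod{\wh x,w}^{2k}\;\ge\; c_{(1,\dots,1)}\cdot\E_{\wh b}\prod_{t=1}^k \iprod{\wh b,w^{(t)}}^2
\]
is a valid inequality, but the quantity on the right can be \emph{identically zero} even when $\|w\|_1$ is large, so no argument can finish from there. Concretely, take $k=2$, $w^{(1)}=e_1-e_2$ and $w^{(2)}=e_1+e_2$: then $\iprod{b,w^{(1)}}\iprod{b,w^{(2)}}=b_1^2-b_2^2=0$ for \emph{every} $b\in\pmo^{n/k}$ (uniform or $4$-wise independent, it does not matter), while $\prod_t\|w^{(t)}\|_2^2=4$ and $\|w\|_1=4$. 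So the claim you propose to establish in your final paragraph, $\E_{\wh b}\prod_t\iprod{\wh b,w^{(t)}}^2=\Omega(1)^k\prod_t\|w^{(t)}\|_2^2$, is false, and not rescuable by the $n^{-O(1)}$ slack. The underlying issue is that your route requires all $k$ blocks to be large \emph{simultaneously} for the same $\wh b$, which correlations can forbid.

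The paper's proof sidesteps this by never forming a product over blocks. Apply Claim~\ref{lem:paley-zygmund} to each block separately to get $\E_{\wh b}\Abs{\iprod{\wh b,w^{(t)}}}\ge\Omega(1)\sqrt{k/n}\,\|w^{(t)}\|_1$, and then use \emph{linearity of expectation} on the sum $\sum_t\Abs{\iprod{\wh b,w^{(t)}}}$ --- which is immune to correlations between blocks --- to conclude $\E_{\wh b}\sum_t\Abs{\iprod{\wh b,w^{(t)}}}\ge\Omega(1)\sqrt{k/n}\,\|w\|_1$. Hence some single $b$ in the support achieves this, and pairing it with the sign pattern $c_t=\sgn(\iprod{b,w^{(t)}})$ (which is a valid choice since $\wh c$ ranges over all of $\pmo^k$) yields one point $x\in\supp(\calD)$ with $\Abs{\iprod{x,w}}\ge\Omega(1)\sqrt{k/n}\,\|w\|_1$. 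By Claim~\ref{claim:prg-size} that point has mass $\ge 2^{-k}n^{-O(1)}$, and retaining only its contribution to $\E\iprod{\wh x,w}^{2k}$ gives the lemma. You came close to this when you considered applying Claim~\ref{lem:paley-zygmund} blockwise but dismissed it because "I cannot multiply their probabilities"; the resolution is that you never need to --- you need one good $\wh b$ for the sum, not a good $\wh b$ for the product, and the tensor structure of $\wh x=\wh c\otimes\wh b$ lets $\wh c$ do the sign alignment.
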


\begin{proof}
    We first decompose $w$ into $k$ blocks $w^{(1)},\ldots,w^{(k)}$ of size $\frac{n}{k}$, in such a way that $\langle \wh x,w\rangle=\sum_{i=1}^k \wh{c}_i \langle \wh b,w^{(i)}\rangle$. Now for any fixed block $i\in [k]$, we know from \Cref{lem:paley-zygmund} that with at least constant probability over $\wh{b}$, it holds that 
    $|\iprod{\wh{b},w^{(i)}}|\ge \frac{1}{2}\|w^{(i)}\|_2\ge \frac{1}{2}\sqrt{\frac{k}{n}}\|w^{(i)}\|_1$,
    where the last inequality follows from Cauchy-Schwarz. In turn, by linearity of expectation,
    \[
        \E_{\wh{b}}\left[\sum_{i=1}^k \left|\langle \wh{b},w^{(i)}\rangle \right|\right]\ge \Omega(1)\cdot \sqrt{\frac{k}{n}} \| w\|_1\mper
    \]
    In particular, there exists some $x\in\supp(\calD)$ satisfying $|\langle x,w\rangle|\ge \Omega(1)\cdot \sqrt{\frac{k}{n}} \| w\|_1$. By \Cref{claim:prg-size}, this $x$ must be drawn with probability at least $2^{-k}n^{-O(1)}$ from $\calD$. Finally, we apply Markov's inequality to get
    \begin{align*}
        \E_{\wh{x} \sim \calD} \langle \wh{x},w\rangle^{2k}\ge \Omega\left(\frac{k}
        {n}\right)^k\|w\|_1^{2k} \Pr_{\wh{x} \sim \calD}\left[|\langle \wh{x},w\rangle|\ge  \Omega(1)\cdot \sqrt{\frac{k}{n}}\|w\|_1 \right]
        \ge \Omega\left(\frac{k}
        {n}\right)^k n^{-O(1)}\|w\|_1^{2k}\mper
    \end{align*}
    This concludes the proof.
\end{proof}

\subsection{Proof of \texorpdfstring{\Cref{thm:pruned-version}}{Theorem~\ref{thm:pruned-version}}}

We are now ready to state and analyze the SDP relaxation. The high-level intuition is the following: write $q_i\coloneqq\sum_{j,k} T_{ijk} y_j z_k$ for all $i\in [n]$, so that our goal is now to maximize $\langle x,q\rangle$, which by symmetry is equivalent to maximizing $\langle x,q\rangle^2$. Instead of maximizing over $x\in \pmo^n$ we essentially pick a random $\wh x$ from a distribution $\calD$ that has large $2k$-th moments in every direction. Then we replace the objective function $\E_\calD \max_{w}  \langle \wh x,q\rangle^2$  by the following proxy:
\[
    \max_{\mu\text{ pseudo-distribution on $w$}}  \frac{\E_{\wh{x} \sim \calD}\pE_\mu \langle \wh x,q\rangle^{2(k+1)}}{\E_{\wh{x} \sim \calD} \pE_\mu \langle \wh x,q\rangle^{2k}}\mper
\]
As $k$ grows, this yields a sequence of increasingly better approximations leveraging higher moments of the variables. Since expanding the $2k$-th powers would require solving an SDP of size $n^{\Omega(k)}$, we introduce auxiliary variables $\{M_{\wh x}\}$ corresponding to $\langle \wh x,q\rangle^k$ in combinatorial solutions.

\begin{proof}[Proof of \Cref{thm:pruned-version}.]
    Assume without of loss of generality that $k$ divides $n$. Let $\calD$ be the pseudorandom distribution from \Cref{def:hitting-set}. Furthermore, we fix a guess $\alpha\ge 0$ for the value of the optimum of the cubic optimization problem (the final certification and rounding algorithms will be obtained by binary searching for the best possible value of $\alpha$). 
    
    \paragraph{The relaxation.} We solve for \textit{feasibility} the degree-12 SoS program over the following variables:
    \begin{itemize}
        \item variables $y_j$ and $z_k$ for all $j,k\in[n]$. To lighten notations we let $q_i\coloneqq q_i(y,z) = \sum_{1\le j,k\le n} T_{ijk} y_j z_k$ for all $i\in [n]$ (each $q_i$ is a degree-2 polynomial) and write $q = (q_1,\dots,q_n)$.
        
        \item variables $M_{x}$ for each $x\in\supp(\calD)$.

    \end{itemize}
    and under the following additional polynomial constraints:
    \begin{align}
        &y_j^2=1 \quad\text{for all $j\in [n]$}\mcom \nonumber\\
        &z_k^2=1 \quad\text{for all $k\in [n]$}\mcom \nonumber\\
        &\E_{\wh{x} \sim \mathcal D} \left[M_{\wh{x}}^2\left(\langle \wh{x},q\rangle^2-\alpha^2\right)\right] \ge 0\mper\label{eqmain}
    \end{align}

    By construction of $\calD$, the relaxation has $2^k n^{O(1)}$ variables and constraints.

    \paragraph{The rounding algorithm.}
    First, we check that any feasible solution to the SoS program can be rounded into an integral solution $\ol x,\ol y,\ol z\in\pmo^n$ achieving value $\Omega(\alpha)$. Suppose that there exists some degree-12 pseudo-distribution $\mu$ (over $y$, $z$ and $M_x$) satisfying all the constraints. Then by \Cref{eqmain}, there exists $\ol x\in \supp(\calD)$ satisfying
    \[
        \alpha^2\le \frac{\pE_\mu M_{\ol x}^2 \langle \ol x,q\rangle^2}{\pE_\mu M_{\ol x}^2}=\pE_{\mu'} \langle \ol x,q\rangle^2\mcom
    \]
    where $\mu'$ is the degree-6 pseudo-distribution obtained by reweighting $\mu$ by the SoS polynomial $M_{\ol x}^2$. We now use \Cref{lem:scalar-fixing} to construct from $\mu'$ a degree-2 pseudo-distribution $\mu''$ that satisfies
    $\pE_{\mu'} \langle \ol x,q\rangle^2\le 9 \left(\pE_{\mu''}\langle \ol x,q\rangle\right)^2$.

Finally we use Grothendieck rounding (\Cref{fact:grothendieck}) on $\mu''$ to find $\ol y,\ol z\in\pmo^n$ such that 
\[
    f(\ol x,\ol y,\ol z)\ge \frac{1}{K_G} \pE_{\mu''} f(\ol x,y,z)=\frac{1}{K_G}\pE_{\mu''}\langle \ol x,q\rangle\ge \Omega(1)\cdot \alpha\mper
\]

\paragraph{Approximation factor.} Our final algorithm consists of a binary search to get the largest value of $\alpha\ge 0$ that makes the SoS program above feasible. Then, some explicit multiple of $\alpha$ coming from the analysis of our rounding provides a correct upper bound certificate on $\OPT$.

We now check that this achieves approximation   $O(\sqrt{n/k})$. Fix any triplet $x^*,y^*,z^*\in\pmo^n$ and let $q_i^*=\sum_{1\le j,k\le n} T_{ijk} y_j^* z_k^*$ for all $i\in [n]$. Suppose that $(x^*,y^*,z^*)$ achieves the optimum of the original problem, so that $\OPT = \iprod{x^*, q^*} = \|q^*\|_1$. We set $(y,z)=(y^*,z^*)$ and $M_{x}=\langle x,q^*\rangle^k$ for all $x\in\supp(\calD)$, and we prove that this defines a feasible solution. By H\"{o}lder's inequality, $\E_{\wh{x}\sim\calD} \langle \wh{x},q^*\rangle^{2k+2} \geq (\E_{\wh{x}\sim\calD} \langle \wh{x},q^*\rangle^{2k})^{\frac{2k+2}{2k}}$, and \Cref{lemmom} then yields
\[
    \frac{\E_{\wh{x}\sim\calD}M_{\wh x}^2\langle \wh x,q^*\rangle^2}{\E_{\wh{x}\sim\calD} M_{\hat x}^2}
    =\frac{\E_{\wh{x}\sim\calD} \langle \wh{x},q^*\rangle^{2k+2}}{\E_{\wh{x}\sim\calD} \langle \wh{x},q^*\rangle^{2k}}\ge \left(\E_{\wh{x}\sim\calD} \langle \wh x,q^*\rangle^{2k}\right)^{1/k}\ge \Omega(1)\cdot \frac{k}{n}\cdot n^{-O\left(\frac{1}{k}\right)}\OPT^2\mper
\]
Assume without loss of generality that $k=\Omega(\log n)$ (since otherwise, one can always increase $k$ to $\Theta(\log n)$ without affecting the target runtime). Then, as long as $\alpha\le O(1) \cdot\sqrt{\frac{k}{n}} \cdot\OPT$, \Cref{eqmain} is satisfied. This completes the proof.
\end{proof}

\section{Optimization over the unit sphere}
\label{sec:sphere}

In this section, we prove the approximation results for cubic optimization over the unit sphere matching our results over the hypercube:
\begin{itemize}
    \item In \Cref{sec:canonical-sphere}, we show that the canonical degree-$6k$ sum-of-squares relaxation has integrality gap $O(\sqrt{n/k})$ by describing an appropriate rounding algorithm.
    \item In \Cref{sec:opti-sphere}, we prove that a pruned SDP can achieve approximation $O(\sqrt{n/k})$ in time $2^{O(k)}\text{poly}(n)$.
\end{itemize}

\subsection{Analysis of the canonical degree-\texorpdfstring{$k$}{k} SoS relaxation}
\label{sec:canonical-sphere}

We prove that the canonical degree-$k$ SoS relaxation for optimizing over the unit sphere has integrality gap at most $O\Bigparen{\sqrt{\frac{n}{k}}}$. The proof mirrors the hypercube case (\Cref{thm:higher-degree-sos}), although the analysis is much simpler here since we can directly relate the SoS value to the moments of the Gaussian distribution.

\begin{theorem} \label{thm:sos-spherical}
    Fix $1\le k\le n$.
    Given any decoupled homogeneous degree-3   polynomial $f(x,y,z) = \sum_{1\le i,j,k\le n} T_{ijk} x_i y_j z_k$, the canonical degree-$6k$ SoS relaxation of $\max_{x,y,z\in \calS^{n-1}} f(x,y,z)$ has integrality gap $O \Bigparen{\sqrt{\frac{n}{k}}}$.
    Furthermore, given any degree-$6k$ pseudo-distribution $\mu$ over $\left(\calS^{n-1}\right)^3$ such that $\SOS \coloneqq \pE_{\mu} f>0$, 
     there is an $n^{O(k)}$-time randomized rounding algorithm that outputs with high probability $\ol{x},\ol{y},\ol{z}\in \calS^{n-1}$ such that $f(\ol{x}, \ol{y}, \ol{z}) \geq \Omega \Bigparen{\sqrt{\frac{k}{n}}} \cdot \SOS$.
\end{theorem}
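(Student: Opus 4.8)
The plan is to transcribe the proof of \Cref{thm:higher-degree-sos} to the sphere, replacing the Boolean randomization $h\sim\pmo^n$ by a Gaussian $g\sim\calN(0,I_n)$; this is simpler than the hypercube case because the relevant moment comparison becomes an exact polynomial identity rather than requiring the blockwise sum-of-squares computation of \Cref{lem:h-v-moment}. Write $q_i\coloneqq q_i(y,z)=y^\top T_i z$ and $q=(q_1,\dots,q_n)$, so $\SOS=\pE_\mu\iprod{x,q}$. First I would reduce to a moment estimate for $q$: raising Lagrange's identity $\norm{x}_2^2\norm{q}_2^2-\iprod{x,q}^2=\sum_{i<j}(x_iq_j-x_jq_i)^2$ to the $k$-th power and using the sphere constraint $\norm{x}_2^2=1$ gives a degree-$6k$ sum-of-squares proof of $\Set{\norm{x}_2^2=1}\sststile{6k}{}\iprod{x,q}^{2k}\le\norm{q}_2^{2k}$, and combining this with Cauchy--Schwarz and H\"older for pseudo-distributions (\Cref{fact:sos-cauchy-schwarz,fact:sos-holders}) yields, for any degree-$6k$ pseudo-distribution over $(\calS^{n-1})^3$,
\[
    \SOS^{2k}=\Paren{\pE_\mu\iprod{x,q}}^{2k}\le\pE_\mu\iprod{x,q}^{2k}\le\pE_\mu\norm{q}_2^{2k}\mper
\]
Next I would invoke the Gaussian moment identity $\norm{v}_2^{2k}=\frac{1}{(2k-1)!!}\E_{g\sim\calN(0,I_n)}\iprod{g,v}^{2k}$, which is a polynomial identity in $v$, to rewrite $\pE_\mu\norm{q}_2^{2k}=\frac{1}{(2k-1)!!}\E_g\pE_\mu\iprod{g,q}^{2k}$.

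The rounding algorithm samples $g\sim\calN(0,I_n)$ and sets $\ol x\coloneqq g/\norm{g}_2\in\calS^{n-1}$. Since $g\mapsto\pE_\mu\iprod{g,q}^{2k}$ is a degree-$2k$ polynomial in $g$, anti-concentration of low-degree Gaussian polynomials (\Cref{lem:anti-concentration-hypercontractivity} with $B=9$) shows that with probability $2^{-O(k)}$ we have $\pE_\mu\iprod{g,q}^{2k}\ge\E_g\pE_\mu\iprod{g,q}^{2k}\ge(2k-1)!!\cdot\SOS^{2k}$; simultaneously, Gaussian norm concentration gives $\norm{g}_2\le C\sqrt n$ with probability $1-e^{-\Omega(C^2n)}$, and since $k\le n$ one can take the absolute constant $C$ large enough (relative to the constant in the exponent of $2^{-O(k)}$) that a union bound retains both events with probability $2^{-O(k)}$. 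On this event, dividing by $\norm{g}_2^{2k}\le(C^2n)^k$ and using $\bigl((2k-1)!!\bigr)^{1/k}=\Theta(k)$ yields $\pE_\mu\iprod{\ol x,q}^{2k}\ge\Omega(k/n)^k\cdot\SOS^{2k}$, hence $\bigl(\pE_\mu\iprod{\ol x,q}^{2k}\bigr)^{1/(2k)}\ge\Omega\bigl(\sqrt{k/n}\bigr)\SOS$. Since $\ol x$ is now a fixed vector, $\iprod{\ol x,q}$ is a degree-$2$ polynomial in $(y,z)$, so the high-degree reweighting lemma (\Cref{lem:scalar-fixing-high-degree}, whose degree requirement $2(2k+2)\le6k$ holds for $k\ge2$; for $k=1$ use \Cref{lem:scalar-fixing}) produces a reweighted pseudo-distribution $\mu'$ of degree $\ge2$, still satisfying the sphere constraints on $y$ and $z$, with $\Abs{\pE_{\mu'}\iprod{\ol x,q}}\ge\frac13\bigl(\pE_\mu\iprod{\ol x,q}^{2k}\bigr)^{1/(2k)}\ge\Omega\bigl(\sqrt{k/n}\bigr)\SOS$. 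Replacing $\ol x$ by $-\ol x$ if necessary, assume $\pE_{\mu'}\iprod{\ol x,q}>0$.

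To conclude, set $A\coloneqq\sum_i\ol x_iT_i$, so $\pE_{\mu'}\iprod{\ol x,q}=\pE_{\mu'}[y^\top Az]$, and apply a lossless rounding for bilinear forms over the unit sphere --- the natural analogue of \Cref{lem:rounding-sphere}: with $Y\coloneqq\pE_{\mu'}[yz^\top]$ one checks that $\norm{Y}_*\le1$ (using the sphere constraints on $y,z$, via $\pE_{\mu'}[y^\top Bz]\le\frac12\pE_{\mu'}[\norm{B^\top y}_2^2+\norm{z}_2^2]\le1$ for every $B$ of operator norm at most $1$) while $\iprod{A,Y}=\pE_{\mu'}[y^\top Az]>0$, so a top singular pair $(\ol y,\ol z)\in\calS^{n-1}\times\calS^{n-1}$ of $Y$ satisfies $\ol y^\top A\ol z\ge\iprod{A,Y}$. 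This gives $f(\ol x,\ol y,\ol z)=\ol y^\top A\ol z\ge\pE_{\mu'}\iprod{\ol x,q}\ge\Omega\bigl(\sqrt{k/n}\bigr)\SOS$. Repeating the sampling $2^{O(k)}\poly(n)\le n^{O(k)}$ times boosts the success probability to $1-o(1)$, and the integrality-gap bound $O(\sqrt{n/k})$ follows because the SoS value is at least $\OPT$. The only genuinely new elements compared with the hypercube proof are keeping the anti-concentration event and the Gaussian norm bound valid simultaneously --- handled above using $k\le n$ --- and setting up the lossless bilinear rounding over the sphere; everything else is a direct (and, as noted, simpler) transcription.
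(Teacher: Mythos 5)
Your proposal is correct and follows essentially the same route as the paper's proof: the chain $\SOS^{2k}\le\pE_\mu\|q\|_2^{2k}$, the Gaussian moment identity with $(2k-1)!!\ge (k/2)^k$, hypercontractive anti-concentration in $g$, the high-moment reweighting lemma, and a lossless spherical rounding of the resulting bilinear form. The only additions are expository — you spell out the SoS proof of $\iprod{x,q}^{2k}\le\|q\|_2^{2k}$ via Lagrange's identity and work out the asymmetric (nuclear-norm) version of \Cref{lem:rounding-sphere}, which the paper invokes without adaptation; just note that in the latter step the pair achieving $\ol y^\top A\ol z\ge\iprod{A,Y}$ is the best singular pair of $Y$ with respect to $A$, not necessarily the top one.
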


\begin{proof}
    Similarly to the proof of \Cref{thm:deg6-sos-approx}, we start by defining $q_i = q_i(y,z) = y^\top T_i z$ and consider
    \[
        \SOS^{2k} = \Paren{\pE_{\mu} \iprod{x, q} }^{2k}
        \leq \pE_{\mu}\iprod{x, q}^{2k}
        \leq \pE_{\mu}\Brac{ \|x\|_2^{2k} \cdot \|q\|_2^{2k}}
        = \pE_{\mu} \|q\|_2^{2k}
        \mcom
    \]
    using Cauchy-Schwarz and the pseudo-expectation version of \Holder's inequality (\Cref{fact:sos-holders}).
    
    Then, let $h\sim \calN(0, I_n)$. We know from standard estimates on the moments of the Gaussian distribution that for any vector $v\in \R^n$, $\E_h \iprod{h, v}^{2k} = c_k \|v\|_2^{2k}$, where $c_k = (2k-1)!! \geq (k/2)^k$.
    Thus, we have
    \begin{equation*}
        \SOS^{2k} \leq \pE_{\mu} \|q\|_2^{2k}
        \leq \Paren{\frac{2}{k}}^k \cdot \E_{h\sim \calN(0,I_n)}\pE_{\mu} \iprod{q, h}^{2k} \mper
    \end{equation*}
    Since $\pE_{\mu}\iprod{q,h}^{2k}$ is a degree-$(2k)$ polynomial in $h$, by \Cref{lem:anti-concentration-hypercontractivity}, $h$ satisfies $\Paren{\pE_{\mu} \iprod{q,h}^{2k}}^{\frac{1}{2k}} \geq \Omega(\sqrt{k}) \cdot \SOS$ with probability at least $2^{-O(k)}$.
    Moreover, with probability at least $1-2^{-\Omega(n)}$ we have $\|h\|_2 \leq O(\sqrt{n})$.

    Hence, our rounding algorithm goes as follows,
    \begin{enumerate}
        \item Sample $h\sim \calN(0, I_n)$ and let $\ol x=\frac{h}{\|h\|_2}$.

        \item Reweight the pseudo-distribution $\mu$ via \Cref{lem:scalar-fixing-high-degree} to obtain a degree-2 pseudo-distribution $\mu'$ such that $\left|\pE_{\mu'} \iprod{q, h}\right| \geq \frac{1}{3} \Bigparen{\pE_{\mu} \iprod{q,h}^{2k}}^{\frac{1}{2k}}$.

        \item Use the lossless rounding for quadratic forms over the sphere (\Cref{lem:rounding-sphere}) on $\mu'$ to obtain solutions $\ol y,\ol z \in \calS^{n-1}$ such that $\ol y^\top (\sum_{i=1}^n h_i T_i) \ol z\ge \left|\pE_{\mu'} \langle q,h\rangle\right|$ (we can flip the sign of $h$ to get the guarantee with the absolute value).
    \end{enumerate}
    Putting everything together, it holds with probability at least $2^{-O(k)}$ that
    \[
        f(\ol x,\ol y,\ol z)=\frac{1}{\|h\|_2}\ol y^\top \left(\sum_{i=1}^n h_i T_i\right) \ol z\ge \frac{1}{\|h\|_2}\left|\pE_{\mu'} \langle q,h\rangle\right|\ge \Omega(1)\cdot \sqrt{\frac{k}{n}}\cdot \SOS \mper
    \]
    Repeating this $\poly(n,2^{k})$ times, we obtain a rounding algorithm that satisfies the desired guarantees with high probablity.
\end{proof}

\subsection{Analysis of compressed SoS relaxations over the unit sphere}
\label{sec:opti-sphere}

In this section, we prove the analogous statement of \Cref{thm:pruned-version} over the sphere.

\begin{theorem}\label{thm:pruned-version-sphere}
    Fix $1\le k\le n$. There is a $2^{O(k)} n^{O(1)}$-time certification algorithm that given a decoupled homogeneous degree-3 polynomial $f(x,y,z)=\sum_{1\le i,j,k\le n} T_{ijk} x_i y_i z_k$ achieves $O(\sqrt{n/k})$-approximation to
    \[
        \OPT\coloneqq \max_{x,y,z\in\calS^{n-1}} f(x,y,z)\mper
    \]
    Moreover, there is a corresponding  rounding algorithm running in $2^{O(k)} n^{O(1)}$ time that outputs a solution $\ol x,\ol y,\ol z\in\calS^{n-1}$ with value $f(\ol x,\ol y,\ol z)\ge \Omega\Bigparen{ \sqrt{\frac{k}{n}}}\cdot \OPT$.
\end{theorem}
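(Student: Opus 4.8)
The plan is to carry the compression of \Cref{thm:pruned-version} over to the sphere, with the lossless rounding of \Cref{lem:rounding-sphere} replacing Grothendieck rounding and with a spherical analogue of the hitting set of \Cref{def:hitting-set}. Fixing a guess $\alpha\ge 0$ for $\OPT$ (to be binary searched over), I would solve for feasibility a constant-degree SoS program in variables $y_j,z_k$ with $\|y\|_2^2=1$, $\|z\|_2^2=1$, together with one auxiliary variable $M_x$ for each $x$ in the support of a distribution $\calD$ over $\R^n$, subject to the single extra axiom $\E_{\widehat x\sim\calD}\left[M_{\widehat x}^2\bigl(\iprod{\widehat x,q}^2-\alpha^2\bigr)\right]\ge 0$, where $q_i\coloneqq\sum_{j,k}T_{ijk}y_jz_k$. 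If $\calD$ is supported on $2^{O(k)}n^{O(1)}$ vectors, this is an SDP with $2^{O(k)}n^{O(1)}$ variables and constraints, solvable in the claimed time.

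Rounding is essentially verbatim the hypercube argument: given a feasible pseudo-distribution $\mu$, the extra axiom produces some $\overline x$ in the support of $\calD$ with $\pE_\mu[M_{\overline x}^2\iprod{\overline x,q}^2]\ge\alpha^2\pE_\mu[M_{\overline x}^2]$; reweighting by $M_{\overline x}^2$ gives $\mu'$ with $\pE_{\mu'}\iprod{\overline x,q}^2\ge\alpha^2$, and \Cref{lem:scalar-fixing} then yields a degree-$2$ pseudo-distribution $\mu''$ with $|\pE_{\mu''}\iprod{\overline x,q}|\gtrsim\alpha$. Applying \Cref{lem:rounding-sphere} to $\mu''$ with the matrix $\sum_i(\overline x)_iT_i$ --- through the standard reduction of a bilinear form $y^\top Mz$ over a product of two spheres to a quadratic form over a single sphere (embed $(y,z)\mapsto (y;z)/\sqrt2\in\calS^{2n-1}$ and use the block-antidiagonal matrix built from $M$), then renormalize the two halves of the rounded vector --- produces $\overline y,\overline z\in\calS^{n-1}$ with $f(\overline x,\overline y,\overline z)=\overline y^\top\bigl(\sum_i(\overline x)_iT_i\bigr)\overline z\gtrsim\alpha$; since our construction of $\calD$ will keep $\|\widehat x\|_2=\Theta(1)$, replacing $\overline x$ by $\overline x/\|\overline x\|_2$ costs only a constant factor and gives a solution of value $\Omega(\alpha)$.

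For correctness of the certificate, let $x^*,y^*,z^*$ be optimal, so $\OPT=\iprod{x^*,q^*}=\|q^*\|_2$ with $q^*_i=\sum_{j,k}T_{ijk}y^*_jz^*_k$; set $(y,z)=(y^*,z^*)$ and $M_x=\iprod{x,q^*}^k$. By H\"older's inequality, $\E_{\widehat x}\iprod{\widehat x,q^*}^{2k+2}\ge\bigl(\E_{\widehat x}\iprod{\widehat x,q^*}^{2k}\bigr)^{(k+1)/k}$, so the extra axiom holds as soon as $\alpha\le O\bigl(\sqrt{k/n}\bigr)\OPT$ provided $\E_{\widehat x\sim\calD}\iprod{\widehat x,q^*}^{2k}\ge\Omega(k/n)^k\,n^{-O(1)}\,\|q^*\|_2^{2k}$, the $n^{-O(1)}$ being harmless because, exactly as in \Cref{thm:pruned-version}, we may assume $k=\Omega(\log n)$. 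Everything thus reduces to producing a distribution $\calD$ over $\Theta(1)$-norm vectors, supported on $2^{O(k)}n^{O(1)}$ points, whose $(2k)$-th moment in every direction $w$ is at least $\Omega(k/n)^k\,n^{-O(1)}\,\|w\|_2^{2k}$: the spherical analogue of \Cref{def:hitting-set} and \Cref{lemmom}, emulating the Gaussian anti-concentration used in \Cref{thm:sos-spherical} the way \Cref{def:hitting-set} emulates the Rademacher anti-concentration over the hypercube.

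This hitting-set step is the main obstacle, and it is genuinely harder than over the hypercube: the naive block construction $\widehat x=\widehat c\otimes\widehat b$ with entries $\pm1/\sqrt n$ only yields $\iprod{\widehat x,w}\gtrsim\|w\|_2/\sqrt n$ against an adversarial $w$, short of $\sqrt{k/n}\,\|w\|_2$ by a $\sqrt k$ factor, since $\sum_i\|w^{(i)}\|_2\ge\|w\|_2$ is the only available blockwise bound against $\|w\|_2$ (whereas against $\|w\|_1$ one has $\sum_i\|w^{(i)}\|_2\ge\sqrt{k/n}\,\|w\|_1$), and because a Rademacher sum has no moment growth in a near-$1$-sparse direction. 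My plan is to combine two families. First, to recover the Gaussian tail, take $\widehat x=\frac1Z(\widehat c_1\widehat b,\dots,\widehat c_k\widehat b)$ with $\widehat b$ pairwise independent over $\{\pm1\}^{n/k}$ and $\widehat c\in\R^k$ having i.i.d.\ coordinates from a finitely supported distribution matching the low-order moments of $\calN(0,1)$ --- affordable because $\widehat c$ ranges over only $k$ coordinates --- so that, conditioned on $\widehat b$, $\iprod{\widehat x,q^*}$ behaves like a centered Gaussian of variance $\propto\sum_i\iprod{\widehat b,q^{*(i)}}^2$, whose $(2k)$-th moment is $(2k-1)!!\ge(k/2)^k$ times the squared variance; averaging over $\widehat b$ and using $\E_{\widehat b}\iprod{\widehat b,q^{*(i)}}^2=\|q^{*(i)}\|_2^2$ together with convexity recovers the bound for ``block-balanced'' directions $w$. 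Second, adjoin a $\poly(n)$-size family of spiky vectors (such as $\pm e_j$ and low-support sign vectors) to cover directions $w$ concentrated on fewer than $\sim k$ coordinates, where the flat block vectors cannot reach correlation $\sqrt{k/n}$. The delicate points I expect to wrestle with are: (i) keeping the support at $2^{O(k)}n^{O(1)}$ rather than $2^{O(k\log k)}n^{O(1)}$, which limits how heavy the coordinates of $\widehat c$ may be and how much independence the spiky family may use; (ii) showing $\|\widehat x\|_2=\Theta(1)$ deterministically --- which forces truncating on the event $\|\widehat c\|_2^2=O(k)$ and checking this does not spoil the moment lower bound; and (iii) gluing the two families so that the moment lower bound holds uniformly over all effective sparsities of $w$, via a Paley--Zygmund/Markov argument as in the proof of \Cref{lemmom}.
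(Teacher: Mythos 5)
Your overall architecture is exactly the paper's: the same compressed SDP with auxiliary variables $M_x$ and the single axiom $\E_{\wh x\sim\calD}[M_{\wh x}^2(\iprod{\wh x,q}^2-\alpha^2)]\ge 0$, the same rounding chain (reweight by $M_{\ol x}^2$, apply \Cref{lem:scalar-fixing}, then \Cref{lem:rounding-sphere}), and the same reduction of the certificate's correctness to a spherical hitting-set lemma asserting $\E_{\wh x\sim\calD}\iprod{\wh x,w}^{2k}\ge\Omega(k/n)^k n^{-O(1)}\|w\|_2^{2k}$ for all $w$. You also correctly diagnose why the naive $\pm1$ block construction loses a $\sqrt k$ factor on block-concentrated directions.

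The gap is that you never actually construct the hitting set, and this is the crux of the theorem. Your proposed route (coordinates of $\wh c$ drawn i.i.d.\ from a finitely supported Gaussian-moment-matching distribution, truncated to control $\|\wh c\|_2$, glued with a separate family of spiky vectors) is left with three acknowledged unresolved difficulties, and the first of them is real: matching moments up to order $2k$ per coordinate with $k$ independent coordinates gives support size $2^{\Theta(k\log k)}$ unless you derandomize the joint distribution, which you do not do. The missing idea is that you do not need the \emph{distribution} of $\wh c$ to have large $2k$-th moments in every direction of $\R^k$; you only need its \emph{support} to contain, for each fixed $b$, one near-maximizer of the linear functional $c\mapsto\sum_i c_i\iprod{b,w^{(i)}}$ over the unit ball of $\R^k$. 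The paper therefore takes $\wh c$ uniform over a $\frac{1}{10}$-net of $\calS^{k-1}$ (size $2^{O(k)}$) and sets $\wh x=\sqrt{k/n}\,\wh c\otimes\wh b$: the net contains a $c$ nearly aligned with $(\iprod{b,w^{(i)}})_i$ whatever its sparsity pattern, so $\iprod{x,w}\ge\Omega(1)\sqrt{k/n}\,(\sum_i\iprod{b,w^{(i)}}^2)^{1/2}$, and $\E_{\wh b}\sum_i\iprod{\wh b,w^{(i)}}^2\ge\Omega(\|w\|_2^2)$ by pairwise independence and Paley--Zygmund. One good support point drawn with probability $2^{-O(k)}n^{-O(1)}$ then yields the moment bound by Markov, exactly as in \Cref{lemmom}. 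This single construction handles sparse and balanced directions uniformly, so no second ``spiky'' family and no gluing argument are needed. Without some completed construction of this kind, your proof of \Cref{thm:pruned-version-sphere} is not complete.
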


Our proof relies on a hitting set construction analogous to \Cref{lemmom}.

\begin{lemma}\label{lem:moments-sphere}
    For any $1\le k\le n$ with $k=\Omega(\log n)$, there exists a distribution $\calD$ over $\calS^{n-1}$ supported on at most $2^{O(k)} n^{O(1)}$ vectors such that for all $w\in\R^n$,
    \[
        \left(\E_{\wh x\sim \mathcal D} \langle \wh x,w\rangle^{2k}\right)^{\frac{1}{2k}}\ge \Omega(1)\cdot \sqrt{\frac{k}{n}} \| w\|_2\mper
    \]
\end{lemma}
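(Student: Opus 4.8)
The plan is to follow the blueprint of \Cref{lemmom}: once we have an explicit set $S\subseteq\calS^{n-1}$ of size $2^{O(k)}n^{O(1)}$ with the property that for \emph{every} $w\in\R^n$ there is some $x\in S$ with $\iprod{x,w}\ge\Omega(1)\cdot\sqrt{k/n}\,\|w\|_2$, we take $\calD$ to be the uniform distribution on $S$ and conclude $\E_{\wh x\sim\calD}\iprod{\wh x,w}^{2k}\ge \frac{1}{|S|}\bigl(\Omega(1)\sqrt{k/n}\,\|w\|_2\bigr)^{2k}$, so that $\bigl(\E_{\wh x\sim\calD}\iprod{\wh x,w}^{2k}\bigr)^{1/2k}\ge |S|^{-1/2k}\cdot\Omega(1)\sqrt{k/n}\,\|w\|_2$; since $|S|=2^{O(k)}n^{O(1)}$ and $k=\Omega(\log n)$, we get $|S|^{-1/2k}=2^{-O(1)}n^{-O(1/k)}=\Omega(1)$. (More generally, if $\calD$ is a uniform mixture of $O(\log n)$ such sets, the same conclusion holds up to the extra factor $(\log n)^{-1/2k}=\Omega(1)$.) Thus everything reduces to constructing the hitting set $S$.

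For the construction I would mimic \Cref{def:hitting-set}: decompose the coordinates into $k$ blocks of size $n/k$ and, within each block, replace the Rademacher vector by a $4$-wise independent one, so that \Cref{lem:paley-zygmund} gives, for each block $i$, that $|\iprod{\wh b,w^{(i)}}|\ge\tfrac12\|w^{(i)}\|_2$ with constant probability. Exactly as in the proof of \Cref{lemmom}, averaging over the blocks and then choosing the per-block signs greedily produces a point $x=\frac{1}{\sqrt n}(\wh c\ot\wh b)\in\calS^{n-1}$ with $\iprod{x,w}\ge\Omega(1)\cdot\frac{1}{\sqrt n}\sum_i\|w^{(i)}\|_2$. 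When the $\ell_2$ mass of $w$ is spread roughly evenly across the blocks we have $\sum_i\|w^{(i)}\|_2\approx\sqrt k\,\|w\|_2$, and this already gives the desired $\sqrt{k/n}\,\|w\|_2$. The subtlety where the sphere genuinely differs from the hypercube is that the sphere target $\sqrt{k/n}\,\|w\|_2$ does \emph{not} shrink as $w$ concentrates, whereas the hypercube target $\sqrt{k/n}\,\|w\|_1$ does — which is why in \Cref{lemmom} one fixed block size handles every $w$, and why here it does not. I would therefore take a \emph{mixture over $O(\log n)$ dyadic block sizes} (together with the signed standard basis $\{\pm e_j\}$ to cover the extreme case $\|w\|_\infty=\Omega(\|w\|_2)$), so that for an appropriate scale the block-diagonal point aligns with the coordinate shell of $w$ carrying a $1/O(\log n)$ fraction of $\|w\|_2^2$; this is what recovers the full $\sqrt{k/n}$ factor. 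Only block sizes $\Omega(n/k)$ are affordable (otherwise the number of admissible sign patterns $\wh c$ exceeds $2^{O(k)}$), so one must check that these scales — possibly combined with a recursive application of the construction to sub-collections of the blocks — suffice for arbitrary weight vectors $w$, and that the total size stays $2^{O(k)}n^{O(1)}$.

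I expect the main obstacle to be exactly this last point: certifying $\iprod{x,w}\gtrsim\sqrt{k/n}\,\|w\|_2$ \emph{uniformly over all $w$}, including those whose mass is distributed over the coordinates in a highly non-uniform (e.g.\ heavy-tailed) way, while keeping the support at $2^{O(k)}n^{O(1)}$; equivalently, one needs a set of that size that reproduces, in every direction, the $2k$-th moment behaviour of the uniform distribution on $\calS^{n-1}$ up to constant factors. This is what makes the argument more delicate than its hypercube analogue, and I expect the bulk of the work to lie in designing the multi-scale (or recursive) block structure and in verifying both its size and its hitting property against an adversarially chosen $w$.
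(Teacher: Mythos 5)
Your reduction to a hitting-set statement is exactly the paper's (a point attaining $\Omega(\sqrt{k/n})\,\|w\|_2$ is drawn with probability $2^{-O(k)}n^{-O(1)}$, and Markov plus $k=\Omega(\log n)$ absorbs that probability into the $2k$-th root), and your use of a $4$-wise independent $\wh b$ with per-block Paley--Zygmund matches \Cref{lem:paley-zygmund}. The gap is in the across-block aggregation, and it is precisely the point you flag as unresolved. With Rademacher outer signs $\wh c\in\pmo^k$ chosen greedily you only get $\iprod{x,w}\gtrsim\frac{1}{\sqrt n}\sum_i\|w^{(i)}\|_2$, and in general $\sum_i\|w^{(i)}\|_2$ is only $\ge\|w\|_2$ (all mass in one block), a full $\sqrt{k}$ short of the target. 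Your proposed fix --- a mixture over $O(\log n)$ dyadic block sizes plus $\{\pm e_j\}$, possibly with recursion --- is not carried out, and it is not clear it can be: as you note, only block sizes $\Omega(n/k)$ keep the sign-pattern count at $2^{O(k)}$, and coarsening the blocks does not help when the $\ell_2$ mass of $w$ sits in a single block of size $n/k$ spread over many coordinates of that block (so that $\{\pm e_j\}$ does not help either). What is really needed is an outer vector that can \emph{concentrate} on the heavy blocks, not merely sign them.

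The paper's resolution is a small but decisive change: take $\wh c$ uniform over a constant-$\varepsilon$ net of the sphere $\calS^{k-1}$ (size $O(1)^k=2^{O(k)}$, so still affordable) and set $\wh x=\sqrt{k/n}\cdot\wh c\otimes\wh b$. For any fixed $b$, the net contains a $c$ nearly aligned with the vector $u=(\iprod{b,w^{(1)}},\dots,\iprod{b,w^{(k)}})$, giving $\sum_i c_i\iprod{b,w^{(i)}}\ge\Omega(1)\cdot\bigl(\sum_i\iprod{b,w^{(i)}}^2\bigr)^{1/2}$ \emph{regardless of how the mass of $w$ is distributed across blocks}; combined with $\E_{\wh b}\sum_i\iprod{\wh b,w^{(i)}}^2\ge\Omega(1)\,\|w\|_2^2$ from Paley--Zygmund, this yields the hitting property for every $w$ at a single block scale. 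In particular no multi-scale or recursive structure is needed. To repair your write-up, replace the greedy $\pmo^k$ signs by this $\varepsilon$-net of $\calS^{k-1}$ (and renormalize $\wh x$ by $\sqrt{k/n}$ so it is a unit vector); the rest of your argument then goes through.
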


\begin{proof}
    Assume without loss of generality that $k$ divides $n$. We define the distribution $\calD$ over $\wh x \in \calS^{n-1}$ as follows.
    \begin{enumerate}
        \item Sample $\wh b\in\pmo^{\frac{n}{k}}$ from a 4-wise independent distribution. Similarly to \Cref{claim:prg-size}, this can be achieved by taking the uniform distribution over a subset of $\pmo^{\frac{n}{k}}$  of size $n^{O(1)}$.
        \item Sample independently $\wh c$ uniformly over an $\varepsilon$-net of $\calS^{k-1}$ of size $O(1/\varepsilon)^k$ for $\varepsilon=\frac{1}{10}$.
        \item Output $\wh x=\sqrt{\frac{k}{n}}\cdot \wh c\otimes \wh b$. Note in particular that $\wh x$ is a unit vector.
    \end{enumerate}
    Decompose $w$ in $k$ blocks $w^{(1)},\ldots,w^{(k)}$ of size $\frac{n}{k}$ such that $\langle \wh x,w\rangle=\sqrt{\frac{k}{n}}\cdot\sum_i \wh{c}_i \langle \wh{b},w^{(i)} \rangle$.
    \Cref{lem:paley-zygmund} states that for each $1 \leq i \leq \frac{n}{k}$,
    \[
        \Pr_{\wh{b}} \Brac{ \Abs{\iprod{ \wh{b},w^{(i)}}} \ge \frac{1}{2}\|w^{(i)}\|_2 }\ge \Omega(1)\mper
    \]
    This implies that $\E_{\wh{b}} \sum_{i=1}^k \langle \wh b,w^{(i)}\rangle^2\ge \Omega(1)\cdot \|w\|_2^2$. On the other hand, for any $b\in\pmo^{\frac{n}{k}}$, we can find $c=c(b)$ in the $\varepsilon$-net such that
    \[
        \sum_{i=1}^k c_i \langle b,w^{(i)} \rangle\ge \Omega(1)\cdot \left(\sum_{i=1}^k \langle b, w^{(i)}\rangle^2\right)^{\frac{1}{2}}\mper
    \]
    Therefore, there must exist $x\in \supp(\calD)$ such that $\langle x,w\rangle\ge \Omega(1)\cdot\sqrt{\frac{k}{n}}\cdot \|w\|_2$, and this $x$ is drawn with probability at least $2^{-O(k)} n^{-O(1)}$. Finally, by Markov inequality,
    \begin{align*}
        \E_{\wh{x}\sim\calD} \langle \wh x,w\rangle^{2k}&\ge \Pr_{\wh{x}\sim\calD}\left(|\langle \wh x,w\rangle|
        \ge \Omega(1)\cdot\sqrt{\frac{k}{n}}\cdot \|w\|_2\right)\cdot \Omega\left(\frac{k}{n}\right)^k\|w\|_2^{2k}\\&= 2^{-O(k)} n^{-O(1)} \Omega\left(\frac{k}{n}\right)^k \|w\|_2^{2k}\mper
    \end{align*}
    This completes the proof assuming $k=\Omega(\log n)$.
\end{proof}

We are now ready to prove \Cref{thm:pruned-version-sphere}.

\begin{proof}[Proof of \Cref{thm:pruned-version-sphere}] The proof is identical to the proof of \Cref{thm:pruned-version}, with the following exceptions:
    \begin{itemize}
        \item the boolean constraints $y_j^2=1$ and $z_k^2=1$ for all $j,k\in[n]$ become spherical constraints: $\|y\|_2^2=1$ and $\|z\|_2^2=1$.
        
        \item instead of Grothendieck's rounding we use the (lossless) rounding for quadratic forms over the sphere from \Cref{lem:rounding-sphere}.
        
        \item if $x^*, y^*,z^*$ achieve the optimum for the cubic maximization problem, then $\OPT=\|q^*\|_2$, where $q_i^*=\sum_{1\le j,k\le n} T_{ijk} y_j^* z_k^*$ for all $i\in [n]$.

        \item the last sequence of inequalities in the proof of \Cref{thm:pruned-version} becomes as follows after using \Cref{lem:moments-sphere} instead of \Cref{lemmom}:
            \[
                \frac{\E_{\calD}\left[M_{\wh x}^2\langle \wh x,q^*\rangle^2\right]}{\E_{\calD} \left[M_{\hat x}^2\right]}=\frac{\E_{\calD} \langle \wh{x},q^*\rangle^{2k+2}}{\E_{\calD} \langle \wh{x},q^*\rangle^{2k}}\ge \left(\E_{\calD} \langle \wh x,q^*\rangle^{2k}\right)^{1/k}\ge \Omega(1)\cdot \frac{k}{n}\cdot n^{-O\left(\frac{1}{k}\right)}\OPT^2\mper
            \]
    \end{itemize}
    We conclude in the same way as in the proof of \Cref{thm:pruned-version}.
\end{proof}

\section{Rounding algorithm for 3SAT}
\label{sec:3-sat}

In this section, we consider 3SAT formulas where each 3-tuple of variables appears at most once, i.e.\ there are no two clauses with the same set of variables.
\Hastad and Venkatesh~\cite{HV04} used an anti-concentration result of \cite{AGK04} to prove that any 3SAT formula with $m$ clauses has value at least $\frac{7}{8} + \Omega(\frac{1}{\sqrt{m}})$ (which is achieved by a random assignment with probability $\Omega(\frac 1 m)$).

We prove the following improvement over this result:

\begin{theorem}[Restatement of \Cref{thm:3sat}]
    \label{thm:3sat-improved}
    There is a polynomial-time randomized algorithm that, given a satisfiable 3SAT formula with $n$ variables, finds with high probability an assignment satisfying a $(\frac 7 8 + \wt{\Omega}(n^{-\frac 3 4}))$-fraction of the clauses.
\end{theorem}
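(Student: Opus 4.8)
The plan is to write the fraction of clauses of a simple 3SAT formula $\phi$ (with $m$ clauses on $n$ variables, no triple of variables repeated) satisfied by $x\in\pmo^n$ as $\frac78+f(x)$, where $f(x)=\frac1{8m}\sum_{C}g_C(x)$ is the advantage polynomial: each $g_C$ has degree $3$, takes only the values $1$ and $-7$ on $\pmo^n$, and equals $1$ precisely when $x$ satisfies $C$. Decompose $f=f_{\le 2}+f_3$ into its degree-$\le2$ part and its homogeneous cubic part; since $\phi$ is simple, the cubic monomials coming from distinct clauses are distinct, so $\norm{f_3}_1\le\frac18$. The key structural facts (this is \Cref{obs:satisfiable-3sat}) are that on the hypercube $1-g_C$ coincides with $\frac18\pi_C^2$ for the explicit cubic $\pi_C=\prod_{a=1}^{3}(1-\ell^C_a)$, with $\ell^C_a$ the $a$-th literal of $C$ viewed as a $\pm1$-valued degree-$1$ polynomial, so that $\{x_i^2=1\}\sststile{6}{x}\{\frac18-f(x)\ge0\}$ --- hence the degree-$k$ SoS value of $\max_{x\in\pmo^n}f(x)$ equals \emph{exactly} $\frac18$ for every $k\ge 6$ --- and that at any satisfying assignment $x^*$ the degree-$\le2$ part of each $g_C$ equals $1-\chi_C(x^*)\in\{0,2\}$, where $\chi_C$ is the parity monomial of $C$; in particular the degree-$\le2$ part of $f$ is nonnegative, clause by clause, at $x^*$.

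Building on this, I would solve for feasibility, over a guessed threshold $\alpha$, a constant-degree SoS program over $x\in\pmo^n$ asserting $f(x)\ge\alpha$, augmented with the auxiliary variables and axioms of \Cref{alg:3sat-improved} that encode the niceness of the degree-$\le2$ part just described, together with a moment axiom over a hitting-set-indexed family of auxiliary variables in the style of \Cref{thm:pruned-version} and \Cref{lemmom}, so that the whole relaxation has size $\poly(n)$. A satisfying assignment certifies feasibility at $\alpha=\frac18$, so a binary search returns some $\alpha$ comparable to the target $\wt{\Omega}(n^{-3/4})$; it then remains to convert a feasible pseudo-distribution at that $\alpha$ into an integral assignment of comparable value, which simultaneously yields the certified upper bound on $\OPT$.

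The rounding reuses the machinery of \Cref{sec:rounding-main,sec:compressed}: decouple the cubic part via \Cref{lem:decoupling}, draw the ``$x$''-block from the hitting-set distribution, apply the high-moment reweighting of \Cref{lem:scalar-fixing-high-degree} to push the projected pseudo-expectation up to a constant times its pseudo-standard deviation, and finish the ``$(y,z)$''-block with Grothendieck rounding (\Cref{fact:grothendieck}); the residual degree-$\le2$ contribution is rounded separately with Charikar-Wirth rounding (\Cref{thm:charikar-wirth}) at the cost of only a $\polylog(n)$ factor, and the added axioms are precisely what prevent the linear and quadratic parts of $f$ from cancelling the cubic gain. Taking $k=\Theta(\log n)$ and balancing the $\Omega(\sqrt{k/n})$ loss of the cubic step against the $\Omega(n^{-1/4})$-scale advantage that can be certifiably charged to the cubic part --- falling back, when $m=O(n^{3/2})$, on the random-assignment bound of \Hastad and Venkatesh, which already gives $\frac78+\Omega(m^{-1/2})$ --- yields the claimed $\frac78+\wt{\Omega}(n^{-3/4})$. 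The main obstacle is precisely the degree-$\le2$ part of $f$: unlike the homogeneous cubic case it admits no decoupling, and an adversarial quadratic form could a priori dominate the cubic signal; the satisfiability hypothesis and the added SoS axioms are what control it, and carrying this out quantitatively --- in particular pinning down the honest $n^{-1/4}$ bottleneck --- is the technical heart of the proof.
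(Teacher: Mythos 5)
Your high-level framing matches the paper's: decompose the advantage polynomial by degree, exploit the fact that at a satisfying assignment the degree-$\le 2$ part is nonnegative (\Cref{obs:satisfiable-3sat}), add this as an SoS axiom, and combine the cubic rounding machinery of \Cref{sec:rounding-main} with Charikar--Wirth for the quadratic part. But there is a genuine gap, and it sits exactly where the exponent $3/4$ is decided. You attribute the $n^{-3/4}$ to ``balancing the $\Omega(\sqrt{k/n})$ loss of the cubic step against an $n^{-1/4}$-scale advantage charged to the cubic part.'' That is not where the bottleneck is: when the cubic part carries the advantage, the paper's \Cref{lem:large-deg3} already yields $\frac{7}{8}+\wt{\Omega}(n^{-1/2})$, and \Cref{lem:large-deg2} does the same when the quadratic part is large. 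The hard case is when essentially all of the advantage at $x^*$ lives in the \emph{linear} part $f_1$, with $f_2$ and $f_3$ negligible at $x^*$. Your proposal has no mechanism for this case: Charikar--Wirth (\Cref{thm:charikar-wirth}) applies only to homogeneous quadratic forms with zero diagonal and gives value relative to $\max_x f_2(x)$, which may be far below $n^{-3/4}$ here; and naively maximizing the linear form $f_1$ gives no control over how negative $f_2$ and $f_3$ become at its maximizer ($f_2$ can be as low as $-3/8$ there).

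The paper's resolution (\Cref{lem:large-deg1}) is a separate rounding step you would need to supply: solve a degree-2 SoS program with the axioms $f_1(x)+f_2(x)\ge 0$ and $|f_1(x)|\ge\delta$ for $\delta=\Theta(1/(\sqrt{n}\log n))$, sample a Gaussian with the pseudo-moments, truncate coordinatewise at $T=\Theta(\sqrt{\log n})$, and then use the truncated vector as a \emph{bias} of magnitude $p$ for independent $\pm 1$ sampling. The bias scales $f_1$ by $p$, $f_2$ by $p^2$, and $f_3$ by $p^3$, so the completely uncontrolled cubic part costs only $-p^3$ while the certified gain from $f_1+f_2$ is $\wt{\Omega}(p\delta)$; optimizing $p\approx n^{-1/4}$ gives $\wt{\Omega}(\delta^{3/2})=\wt{\Omega}(n^{-3/4})$, and this single case is what forces the final exponent. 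Without this step (or a substitute for it), your algorithm fails on satisfiable instances whose advantage is concentrated in the degree-1 part; the satisfiability hypothesis and the axiom $f_1+f_2\ge 0$ alone do not rescue a Charikar--Wirth-based rounding, which cannot see linear terms. The remaining ingredients of your sketch (the exact-$\frac{1}{8}$ SoS value, the hitting-set compression with $k=\Theta(\log n)$, the fallback to H{\aa}stad--Venkatesh for sparse instances) are either correct but unnecessary for the stated bound or orthogonal to this missing piece.
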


\paragraph{Notations.} A 3SAT clause $C$ with variables $x_1,x_2,x_3$ and negation pattern $(\sigma_1,\sigma_2,\sigma_3)\in \pmo^3$ can be written as
\begin{equation*}
    \psi_C(x_1,x_2,x_3) = \frac{7}{8} - \frac{1}{8}(\sigma_1 x_1 + \sigma_2 x_2 + \sigma_3 x_3 + \sigma_1 \sigma_2 x_1 x_2 + \sigma_2 \sigma_3 x_2 x_3 + \sigma_1 \sigma_3 x_1 x_3 + \sigma_1 \sigma_2 \sigma_3 x_1x_2x_3) \mper
\end{equation*}
Here we adopt the convention that $-1$ is $\mathsf{True}$ and $+1$ is $\mathsf{False}$, and we can see that $\psi_C(x) = 0$ if $\sigma_1 x_1 = \sigma_2 x_2 = \sigma_3 x_3 = +1$, and $\psi_C(x) = 1$ otherwise.
Thus, a 3SAT formula can be represented as a function $\psi: \pmo^n \to [0,1]$,
\begin{equation*}
    \psi(x) = \frac{7}{8} + f_1(x) + f_2(x) + f_3(x) \mcom
\end{equation*}
where $f_1, f_2, f_3$ are homogeneous polynomials of degree 1, 2 and 3 respectively. 

Observe that $\max_{x\in \pmo^n} |f_1(x)|,\max_{x\in \pmo^n}|f_2(x)|\le 3/8$, and $\max_{x\in \pmo^n} |f_3(x)| \leq 1/8$. In particular, this last statement implies the following crucial observation:

\begin{observation}
\label{obs:satisfiable-3sat}
If $x^*$ is a satisfying assignment, then $f_1(x^*) + f_2(x^*) \geq 0$.
\end{observation}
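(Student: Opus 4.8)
The plan is to simply evaluate the representation $\psi(x) = \frac{7}{8} + f_1(x) + f_2(x) + f_3(x)$ at the given satisfying assignment $x^*$. Recall that $\psi\colon\pmo^n\to[0,1]$ records the \emph{fraction} of clauses satisfied, so a satisfying assignment gives $\psi(x^*) = 1$. Rearranging the identity then yields $f_1(x^*) + f_2(x^*) + f_3(x^*) = 1 - \frac{7}{8} = \frac{1}{8}$.

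The second ingredient is the bound $\max_{x\in\pmo^n}|f_3(x)| \le \frac{1}{8}$ recorded immediately above the statement. Concretely, summing the per-clause formula for $\psi_C$ over all $m$ clauses and normalizing, the degree-$3$ part is $f_3(x) = -\frac{1}{8m}\sum_C \sigma_1^C\sigma_2^C\sigma_3^C\, x_{i_1^C}x_{i_2^C}x_{i_3^C}$; since no two clauses involve the same triple of variables, these $m$ cubic monomials are distinct and the triangle inequality over $\pmo^n$ loses nothing, giving $|f_3(x)|\le\frac{1}{8m}\cdot m=\frac18$. In particular $f_3(x^*)\le\frac18$.

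Combining the two facts, $f_1(x^*) + f_2(x^*) = \frac{1}{8} - f_3(x^*) \ge \frac{1}{8} - \frac{1}{8} = 0$, which is exactly the claim. There is no real obstacle here: once one has $\|f_3\|_\infty \le \frac18$ (which uses only the no-repeated-triple assumption), the observation is a one-line consequence of the fact that a satisfying assignment pins the total advantage-over-random $f_1 + f_2 + f_3$ to the value $\frac18$, while the cubic part alone can never exceed $\frac18$.
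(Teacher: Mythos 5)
Your proof is correct and is exactly the argument the paper intends: the paper derives the observation directly from the bound $\max_x |f_3(x)|\le \tfrac18$ together with $\psi(x^*)=1$, i.e.\ $f_1(x^*)+f_2(x^*)=\tfrac18-f_3(x^*)\ge 0$. Your extra justification of $\|f_3\|_\infty\le\tfrac18$ via the triangle inequality is fine (and in fact does not even need the no-repeated-triple assumption for the upper bound).
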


Before proceeding to describing the algorithm, we show the following version of degree-3 decoupling that augments the guarantees of \Cref{lem:decoupling} by controlling also the degree-1 and degree-2 parts:
\begin{lemma}[Recoupling]
    \label{lem:recoupling-improved}

    Given $x,y,z \in \pmo^n$, there exists a polynomial-time sampleable distribution $\calD$ over $\pmo^n$ such that
    \begin{enumerate}
        \item For any degree-3 homogeneous multilinear polynomial $f(x)=\sum_{i,j,k\in [n]} T_{ijk} x_i x_j x_k$ (where $T$ is a symmetric $3$-tensor) and the corresponding decoupled polynomial $\wt{f}(x,y,z)=\sum_{i,j,k\in [n]} T_{ijk} x_i y_j z_k$, 
        \[
            \E_{x'\sim \calD}[f(x')] = \frac{2}{9} \cdot \wt{f}(x,y,z)\mper
        \]

        \item For any degree-2 homogeneous multilinear polynomial $g(x)=\sum_{i,j\in [n]} M_{ij} x_i x_j$,
        \[
            \E_{x'\sim \calD} [g(x')] = \frac{1}{9} \cdot (g(x) + g(y) + g(z))\mper
        \]

        \item $\E_{x'\sim \mathcal D} \left[x'\right]=0$.
    \end{enumerate}
\end{lemma}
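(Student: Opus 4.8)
The plan is to give an explicit randomized construction of $\calD$ obtained by composing the classical polarization identity for symmetric $3$-tensors with an independent coordinate-wise randomized rounding. Concretely, I would first sample a uniform sign vector $\epsilon = (\epsilon_1,\epsilon_2,\epsilon_3) \in \pmo^3$, set $s \coloneqq \epsilon_1\epsilon_2\epsilon_3$, and form the real vector $w \in \{-3,-1,1,3\}^n$ with $w_i \coloneqq \epsilon_1 x_i + \epsilon_2 y_i + \epsilon_3 z_i$. Then, independently for each coordinate $i$, I would set $x'_i = +1$ with probability $\frac{1}{2}(1 + s w_i/3)$ and $x'_i = -1$ otherwise, so that $\E[x'_i \mid \epsilon] = s w_i/3$; this is a valid probability since $|w_i| \le 3$, and the resulting $\calD$ is clearly polynomial-time sampleable. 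The reason for inserting the sign $s$ into the rounding bias is that $s^2 = 1$ and $s^3 = s$, which is exactly what makes the degree-$3$ moments reproduce the polarization sign while keeping the degree-$1$ and degree-$2$ moments unaffected by it.

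Next I would compute the three families of moments directly. For degree $1$: $\E_\calD[x'_i] = \frac{1}{3}\E_\epsilon[\epsilon_1\epsilon_2\epsilon_3(\epsilon_1 x_i + \epsilon_2 y_i + \epsilon_3 z_i)] = 0$, since each resulting $\epsilon$-monomial has some $\epsilon_a$ appearing an odd number of times. For degree $2$ and distinct $i \ne j$, independence of the roundings gives $\E[x'_i x'_j \mid \epsilon] = \frac{s^2}{9} w_i w_j = \frac{1}{9} w_i w_j$, and expanding $\E_\epsilon[w_i w_j]$ annihilates all cross terms $\epsilon_a \epsilon_b$ with $a \ne b$ and leaves $x_i x_j + y_i y_j + z_i z_j$; multilinearity of $g$ (zero diagonal of $M$) then yields $\E_\calD[g(x')] = \frac{1}{9}(g(x) + g(y) + g(z))$. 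For degree $3$ and distinct $i,j,k$, $\E[x'_i x'_j x'_k \mid \epsilon] = \frac{s^3}{27} w_i w_j w_k = \frac{s}{27} w_i w_j w_k$, so $\E_\calD[f(x')] = \frac{1}{27}\E_\epsilon\!\big[s \sum_{i,j,k} T_{ijk} w_i w_j w_k\big]$; after expanding each $w_i w_j w_k$ and multiplying by $s = \epsilon_1 \epsilon_2 \epsilon_3$, a monomial survives the average over $\epsilon$ only when its three factors select $x$, $y$, $z$ in some order, and summing those $6$ orderings against the symmetric tensor $T$ contributes exactly $6\,\tilde f(x,y,z)$. Hence $\E_\calD[f(x')] = \frac{6}{27}\tilde f(x,y,z) = \frac{2}{9}\tilde f(x,y,z)$, which is the first claim.

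I do not expect a genuine obstacle: the construction is a mild twist on the standard decoupling proof behind \Cref{lem:decoupling}. The only points requiring care are (i) keeping the parity bookkeeping of the $\epsilon$-monomials consistent across all three degrees — in particular verifying that the inserted sign $s$ is precisely what makes the degree-$3$ moment nonzero while leaving the degree-$1$ moment zero and the degree-$2$ moment independent of $s$ — and (ii) checking that the rounding biases $\frac{1}{2}(1 \pm s w_i / 3)$ always lie in $[0,1]$, which reduces to the trivial bound $|w_i| \le 3$. A further routine check is that multilinearity (zero diagonals of $M$ and $T$) is what allows us to identify the coordinate moments with honest polynomial evaluations, i.e.\ to ignore repeated-index contributions.
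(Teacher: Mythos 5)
Your proof is correct, and after the change of variables $b_a \coloneqq s\,\epsilon_a$ (so that $(b_1,b_2,b_3)=(\epsilon_2\epsilon_3,\epsilon_1\epsilon_3,\epsilon_1\epsilon_2)$ is pairwise independent with $b_1b_2b_3=1$) your construction produces conditionally independent $\pm1$ coordinates with exactly the same conditional means $\tfrac{1}{3}(b_1x_i+b_2y_i+b_3z_i)$ as the paper's, which samples $x_i'$ uniformly from the multiset $\{b_1x_i,b_2y_i,b_3z_i\}$. Since only these conditional first moments enter the multilinear moment computations, this is essentially the same proof as the paper's.
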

\begin{proof}
    The distribution $\calD$ can be sampled as follows:
    \begin{itemize}
        \item Sample $b_1$ and $b_2$ independently and uniformly in $\pm 1$ and let $b_3=b_1b_2$. Then $(b_1, b_2, b_3)$ has a pairwise independent distribution and $b_1 b_2 b_3 = 1$.

        \item Independently for each $i\in [n]$, sample $x_i'$ uniformly in the multiset $\{b_1 x_i,b_2 y_i,b_3 z_i\}$.
    \end{itemize}
    Then, we have:
    \begin{equation*}
    \begin{aligned}
        \E_{x'\sim \calD} \left[f(x')\right]
        &= \sum_{1\le i,j,k\le n} T_{ijk} \cdot \E \left[ \frac{b_1 x_i + b_2 y_i + b_3 z_i}{3}\cdot \frac{b_1 x_j + b_2 y_j + b_3 z_j}{3} \cdot \frac{b_1 x_k + b_2 y_k + b_3 z_k}{3} \right] \\
        &= \frac{1}{27} \sum_{1\le i,j,k\le n} T_{ijk} \cdot (x_i y_j z_k + x_i z_j y_k + x_j y_i z_k + x_j y_k z_i + x_k y_j z_i + x_k y_i z_j) \\
        &= \frac{2}{9} \cdot \wt{f}(x,y,z) \mper
    \end{aligned}
    \end{equation*}

    Similarly, for degree-2 homogeneous multilinear polynomials,
    \begin{align*}
        \E_{x'\sim \calD} \left[g(x')\right] &= \sum_{1\le i,j\le n} M_{ij} \cdot \E \left[ \frac{b_1 x_i + b_2 y_i + b_3 z_i}{3}\cdot \frac{b_1 x_j + b_2 y_j + b_3 z_j}{3}\right] 
        = \frac{1}{9} \cdot (g(x) + g(y) + g(z)) \mper
    \end{align*}
    The third part follows similarly.
\end{proof}

\paragraph{The algorithm.} We now describe the algorithm that achieves the guarantees of \Cref{thm:3sat-improved}. We define $\delta\coloneqq\frac{c}{\sqrt{n}\log n}$ (for some small constant $c>0$ to be picked at the end of \Cref{lem:large-deg3}).

\begin{mdframed}
    \begin{algorithm}[3SAT]
    \label{alg:3sat-improved}
    \mbox{}
      \begin{description}
      \item[Input:] A 3SAT instance in $n$ variables: $\psi(x) = \frac{7}{8} + f_1(x) + f_2(x) + f_3(x)$ where $f_1, f_2, f_3$ are homogeneous polynomials of degree 1, 2 and 3 respectively.

      \item[Output:] An assignment $x\in \pmo^n$ with value $\frac{7}{8}+\wt{\Omega}(n^{-\frac 3 4})$ with high probability.

      \item[Operation:] \mbox{}
        \begin{enumerate}
     
            \item Solve for feasibility the following two degree-2 SoS programs over variables $x=(x_1,\ldots,x_n)$:
            \begin{enumerate}
                \item with axioms $x_i^2=1$ for all $i\in [n]$, $f_1(x)+f_2(x)\ge 0$, and $f_1(x)>\delta$;
                \item with axioms $x_i^2=1$ for all $i\in [n]$, $f_1(x)+f_2(x)\ge 0$, and $f_1(x)<-\delta$.
            \end{enumerate}
            If none of these programs is feasible, move to the next case.

            Otherwise, let $\mu$ be either a feasible degree-2 pseudo-distribution for (a), or the negation of a feasible degree-2 pseudo-distribution for (b).
            \begin{itemize}
                \item Sample $g\sim \mathcal N(\pE_\mu x,\pE_\mu (x-\pE_\mu x)(x-\pE_\mu x)^\top)$. 
                \item For all $i\in [n]$, set $\ol x_i=\frac{g_i}{T}$ if $|g_i|\le T$ and $\ol x_i=\text{sign}(g_i)$ otherwise (for some parameter $T=T(n)>0$ to be chosen in \Cref{lem:large-deg1}).
                \item Sample $x^{(1)}_i=1$ with probability $\frac{1+p\ol x_i}{2}$ and $x^{(1)}_i=-1$ with probability $\frac{1-p\ol x_i}{2}$, independently for all $i\in [n]$ (for some parameter $p=p(n)>0$ to be chosen in \Cref{lem:large-deg1}).
            \end{itemize}

            \item Get $x'$ by optimizing $f_2$ using the second part of \Cref{thm:charikar-wirth}.
            Then, set $x^{(2)} = x'$ or $x^{(2)}=-x'$ depending on which of the two has higher value.

            \item Let $\wt{f_3}(x,y,z)$ be the decoupled polynomial of $f_3(x)$.
            \begin{itemize}
                \item Run the degree-6 SoS relaxation of cubic optimization of $\wt f_3$ in variables $x,y,z\in \pmo^n$ with the additional axioms $f_2(y) \geq -\delta$ and $f_2(z) \geq -\delta$.

                \item Sample $\ol x\sim \pmo^n$ and reweight the pseudo-distribution $\mu$ as in the algorithm of \Cref{thm:deg6-sos-approx}.

                \item Apply Charikar-Wirth rounding (\Cref{thm:charikar-wirth}) on the reweighted $\mu'$ to get $\ol y,\ol z\in \pmo^n$.

                \item Finally, obtain $x^{(3)}$ by recoupling $\ol x,\ol y,\ol z$ via \Cref{lem:recoupling-improved}.

            \end{itemize}

            \item Pick the best assignment among $x^{(1)}, x^{(2)}, x^{(3)}$.

            \item Repeat the steps 1 to 4 $\text{poly}(n)$ times and output the best assignment obtained.
        \end{enumerate}
      \end{description}
    \end{algorithm}
\end{mdframed}

The first three steps in the algorithm correspond to each of the degree-1, 2, or 3 part being large. On a high level, our strategy is quite natural. If the degree-2 part is large, we use the classical roundings for degree-2 polynomials. If the degree-3 part is large, then we use our rounding for decoupled degree-3 polynomials from \Cref{sec:rounding-main}. In those two cases, we get an assignment with value $\frac 7 8 + \wt \Omega(n^{-\frac 1 2})$. If the degree-1 part is large, we introduce a different algorithm based on a degree-2 SoS relaxation with additional axioms. Our rounding in this case is inspired by the proof of \Cref{thm:charikar-wirth} and uses an additional idea to make the degree-3 part negligible. In this last case, we get an assignment with value $\frac 7 8 + \wt \Omega(n^{-\frac 3 4})$.

We analyze separately these three cases now.

\begin{lemma}[Large degree-1 part]
    \label{lem:large-deg1}
    Suppose that $\mu$ is a degree-2 pseudo-distribution such that:
    \[
        \left|\pE_\mu \left[f_1(x)\right]\right|\ge \delta\mcom\quad \pE_\mu \left[f_1(x)+f_2(x)\right]\ge 0\mper
    \]
    Then $x^{(1)}$ has value $\frac{7}{8}+\wt{\Omega}(n^{-\frac 3 4})$.
\end{lemma}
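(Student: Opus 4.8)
The plan is to analyze the assignment $x^{(1)}$ produced in Step~1 of \Cref{alg:3sat-improved} as a Gaussian-then-truncate-then-bias rounding in the spirit of Charikar--Wirth (\Cref{thm:charikar-wirth}), tuned so as to extract the degree-$1$ signal $\pE_\mu[f_1]\ge\delta$ while keeping the degree-$2$ and degree-$3$ contributions of $\psi$ under control. First, negating $\mu$ if necessary, I would assume $\pE_\mu[f_1(x)]\ge\delta$; this keeps the axiom $\pE_\mu[f_1+f_2]\ge 0$ intact, because if $\pE_\mu[f_1]\le-\delta$ then $\pE_\mu[f_2]\ge-\pE_\mu[f_1]\ge\delta$ and the negated solution satisfies $\pE_\mu[f_1+f_2]\ge 2\delta$. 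The structural fact I would rely on is that $g\sim\calN(\pE_\mu x,\ \pE_\mu(x-\pE_\mu x)(x-\pE_\mu x)^\top)$ has exactly the first and second moments of $\mu$: $\E_g[g_i]=\pE_\mu[x_i]$ and $\E_g[g_ig_j]=\pE_\mu[x_ix_j]$ for all $i,j$, including $\E_g[g_i^2]=\pE_\mu[x_i^2]=1$. Hence each $g_i$ is a univariate Gaussian with mean in $[-1,1]$ and variance in $[0,1]$, so $\Pr[|g_i|>T]\le 2e^{-(T-1)^2/2}$ for $T\ge 1$.

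Next, since the bias step samples the coordinates of $x^{(1)}$ independently with $\E[x^{(1)}_i\mid\bar x]=p\bar x_i$, and since $f_2$ and $f_3$ are multilinear, I would expand $\E\,\psi(x^{(1)}) = \tfrac78 + p\,\E_g[f_1(\bar x)] + p^2\,\E_g[f_2(\bar x)] + p^3\,\E_g[f_3(\bar x)]$, reducing the lemma to lower-bounding this quantity. Setting $T=C\sqrt{\log n}$ for a large absolute constant $C$, I would compare $\bar x_i$ with the untruncated $g_i/T$ only on the low-probability event $\{|g_i|>T\}$ and use the Gaussian tail and moment bounds above to obtain $\E_g[\bar x_i]=\pE_\mu[x_i]/T\pm n^{-\Omega(C^2)}$ and $\E_g[\bar x_i\bar x_j]=\pE_\mu[x_ix_j]/T^2\pm n^{-\Omega(C^2)}$ for all $i,j$. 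Since the degree-$1$ and degree-$2$ coefficients of a $3$SAT formula each have $\ell_1$-norm at most $3/8$, this yields $\E_g[f_1(\bar x)]=\pE_\mu[f_1]/T\pm\tfrac38 n^{-\Omega(C^2)}$ and $\E_g[f_2(\bar x)]=\pE_\mu[f_2]/T^2\pm\tfrac38 n^{-\Omega(C^2)}$; for the degree-$3$ term I would only use the crude bound $|\E_g[f_3(\bar x)]|\le\|f_3\|_1\le 1/8$, which is valid since $\bar x\in[-1,1]^n$ and $f_3$ is multilinear.

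Finally I would combine these estimates with the axioms $\pE_\mu[f_2]\ge-\pE_\mu[f_1]$ and $\delta\le\pE_\mu[f_1]\le 3/8$: for $p\le T/2$,
\[
  p\,\E_g[f_1(\bar x)]+p^2\,\E_g[f_2(\bar x)]\ \ge\ \frac pT\,\pE_\mu[f_1]\Paren{1-\frac pT}-O\!\Paren{p\,n^{-\Omega(C^2)}}\ \ge\ \frac{p\delta}{2T}-O\!\Paren{p\,n^{-\Omega(C^2)}},
\]
so $\E\,\psi(x^{(1)})-\tfrac78\ \ge\ \tfrac{p\delta}{2T}-O(p\,n^{-\Omega(C^2)})-\tfrac{p^3}{8}$. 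Since $\delta/T=\wt\Theta(n^{-1/2})$, taking $C$ large enough makes the middle term negligible against $p\delta/T$, and taking $p=\Theta\bigl(\sqrt{\delta/T}\bigr)=\wt\Theta(n^{-1/4})$ (which satisfies $p\le T/2$ and $p\le 1$ for large $n$) makes $p^3/8\le\tfrac12\cdot\tfrac{p\delta}{2T}$; this gives $\E\,\psi(x^{(1)})\ge\tfrac78+\Omega(p\delta/T)=\tfrac78+\wt\Omega(n^{-3/4})$. Because $\psi\le 1$ pointwise, Markov's inequality applied to $\tfrac18-(\psi(x^{(1)})-\tfrac78)\ge 0$ then shows that a single run has $\psi(x^{(1)})\ge\tfrac78+\wt\Omega(n^{-3/4})$ with probability $\wt\Omega(n^{-3/4})$, so the $\poly(n)$ repetitions in Step~5 boost this to high probability. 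I expect the main obstacle to be this last balancing act: the cumulative truncation error has to be pushed below the signal $\pE_\mu[f_1]/T=\wt\Theta(n^{-1/2})$, forcing $T=\Theta(\sqrt{\log n})$ with a large leading constant, while the uncontrolled degree-$3$ term $p^3/8$ forces $p$ down to $\wt\Theta(n^{-1/4})$; it is precisely the interplay of these two constraints — made possible by the threshold $\delta\sim n^{-1/2}$ — that produces the final $n^{-3/4}$ gain rather than something larger.
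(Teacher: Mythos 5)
Your proof is correct and follows essentially the same route as the paper's: the same WLOG negation of $\mu$, the same Gaussian-sample/truncate/bias analysis with truncation errors controlled by Gaussian tails, the same use of the axiom $\pE_\mu[f_1+f_2]\ge 0$ to absorb the degree-2 term into the degree-1 signal, and the same parameter choices $T=\Theta(\sqrt{\log n})$ and $p=\wt\Theta(n^{-1/4})$. The only (harmless) differences are that you use the sharper bound $\|f_1\|_1,\|f_2\|_1\le 3/8$ where the paper settles for $O(n^3)$, and you spell out the reverse-Markov boosting that the paper defers to the proof of \Cref{thm:3sat-improved}.
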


\begin{proof}
    First, up to negating the pseudo-distribution (which does not affect the degree-2 part), we can assume without loss of generality that $\pE_\mu \left[f_1(x)\right]\ge \delta$.
    
    The rounding proceeds in two steps. We introduce some parameters $p=p(n)\in [0,1]$ and $T=T(n)>0$ to be fixed later.
    \begin{enumerate}
        \item Sample a Gaussian with mean and covariance given by the degree-2 pseudo-distribution: $g\sim \mathcal N(\pE_\mu x,\pE_\mu (x-\pE_\mu x)(x-\pE_\mu x)^\top)$. Then, let $\ol x\in [-1,1]^n$ be defined as follows: for each $i\in [n]$, let $\ol x_i=\frac{g_i}{T}$ if $|g_i|\le T$ and $\ol x_i=\text{sign}(g_i)$ otherwise.
        \item Use $\ol x$ as a bias for sampling $ x^{(1)}$: sample for each $i\in [n]$ independently: $x_i^{(1)}=1$ with probability $\frac{1+p\ol x_i}{2}$ and $x^{(1)}_i=-1$ with probability $\frac{1-p\ol x_i}{2}$.
    \end{enumerate}
    
    We now analyze this rounding. Define
    \begin{align*}
        \Delta_i\coloneqq\E \ol x_i-\frac{1}{T}\E g_i, \quad \Delta_{ij}\coloneqq\E\left[ \ol x_i \ol x_j\right] - \frac{1}{T^2}\E\left[g_i g_j\right]\mper
    \end{align*}

    Then, at the end of the first step, we have (here, $\|f_1\|_1$ and $\|f_2\|_1$ denote the sum of the absolute value of the coefficients of $f_1$  and $f_2$, respectively):
    \begin{align}
        \E \left[f_1(\ol x)\right] &\ge \frac{1}{T}\pE\left[f_1(x)\right] - \|f_1\|_1 \cdot \max_{1\le i\le n}  |\Delta_i|\mcom\label{eq:deg1}\\
        \E \left[f_2(\ol x)\right] &\ge \frac{1}{T^2}\pE\left[f_2(x)\right] - \|f_2\|_1 \cdot \max_{1\le i,j\le n}  |\Delta_{ij}|\mper
    \end{align}

    \begin{claim}
        $\max_{1\le i\le n} |\Delta_i|$ and $\max_{1\le i,j\le n} |\Delta_{ij}|$ are both at most $O(1)\cdot e^{-\frac{T^2}{8}}$.
    \end{claim}
    
    \begin{proof}
        Fix $i\in [n]$. First,
        \[
            \Delta_i=\frac{1}{T}\E \left[g_i (\mathbf 1_{|g_i|\le T}-1)\right]+\E\left[\text{sign}(g_i) \mathbf 1_{|g_i|>T}\right]\mper
        \]
        Since $\mu$ is a degree-2 pseudo-distribution over the hypercube, $g_i$ is a Gaussian with mean $\pE_\mu x_i\in [-1,1]$ and variance 1. Define $p_i\coloneqq \Pr(|g_i|>T)\le e^{-T^2/4}$ (this holds provided $T$ is a large enough constant). By the triangle inequality and Cauchy-Schwarz, we have
        \[
            |\Delta_i|\le \frac{1}{T}\left|\E \left[g_i \mathbf 1_{|g_i|> T}\right]\right|+p_i\le \frac 1 T \sqrt{2p_i}+p_i\le O(1)\cdot e^{-\frac{T^2}{8}}\mper
        \]
        Fix now $i,j\in [n]$. Similarly,
        \begin{align*}
            \Delta_{ij}&=\frac{1}{T^2}\E\left[g_i g_j\left(\mathbf 1_{|g_i|\le T,|g_j|\le T}-1\right)\right]+\frac 1 T \E\left[\text{sign}(g_i) g_j\mathbf 1_{|g_i|>T,|g_j|\le T}\right]\\
            &+\frac 1 T  \E\left[g_i \text{sign}(g_j)\mathbf 1_{|g_i|\le T,|g_j|>T}\right] + \E\left[\text{sign}(g_i) \text{sign}(g_j)\mathbf 1_{|g_i|> T,|g_j|>T}\right]\mper
        \end{align*}
        Note that $(g_i, g_j)$ is a 2-dimensional Gaussian with marginal means bounded by 1 in absolute value and $|\E[g_i g_j]|=|\pE_\mu [x_i x_j]|\le 1$ by \Cref{fact:sos-cauchy-schwarz}. Hence, using once again the triangle inequality and Cauchy-Schwarz,
        \begin{align*}
            |\Delta_{ij}|&\le \frac{1}{T^2}\left|\E [g_i g_j \mathbf 1_{|g_i|>T\text{ or } |g_j|>T}]\right| +\frac{1}{T}\left(\E\left[|g_j|\mathbf 1_{|g_i|>T}\right]+\E\left[|g_i|\mathbf 1_{|g_j|>T}\right]\right)+p_i\\
            &\le \frac{1}{T^2}\sqrt{\E \left[g_i^2 g_j^2\right]\left(p_i+p_j\right)} + \frac{\sqrt 2}{T}\left(\sqrt{ p_i}+\sqrt{ p_j}\right)+p_i\\
            &\le O(1)\cdot e^{-\frac{T^2}{8}}\mper\qedhere
        \end{align*}
    \end{proof}
    Since $f_1$ and $f_2$ come from a 3SAT instance with $m\le n^3$ clauses, we have $\|f_1\|_1,\|f_2\|_1\le O(n^3)$. Thus, if we pick $T=\sqrt{48\log n}$, we get
    \begin{align*}
        \E \left[f_1(\ol x)+f_2(\ol x)\right]&\ge \frac{1}{\sqrt{48\log n}}\pE\left[f_1(x)\right]+\frac{1}{48\log n}\pE\left[f_2(x)\right]-O\left(\frac{1}{n^{2}}\right)\ge -O\left(\frac{1}{n^{2}}\right)\mcom  
    \end{align*}
    where we used our assumption $\pE_\mu \left[ f_1(x)+f_2(x) \right]\ge 0$ (together with the fact that $\pE\left[f_1(x)\right]\ge 0)$.
    Next, at the end of second step, it holds that:
    \begin{align}
        \E \left[f_1( x^{(1)})\right]&= p \E \left[f_1(\ol x)\right],\quad\label{eq:deg12}
        \E \left[f_2(x^{(1)})\right] = p^2 \E \left[f_2(\ol x)\right]\mcom\\
        \E \left[f_3(x^{(1)})\right] &= p^3 \E \left[f_3(\ol x)\right]\ge -p^3\mper\label{eq:deg3negbound}
    \end{align}
    Hence, by \Cref{eq:deg12} and \Cref{eq:deg1}:
    \begin{align*}
        \E \left[f_1( x^{(1)})+f_2(x^{(1)})\right]&\ge (p-p^2)\E \left[f_1(\ol x)\right]+p^2 \E \left[f_1(\ol x)+f_2(\ol x)\right]\\
        &\ge\Omega(1)\cdot \frac{\delta(p-p^2)}{\sqrt{\log n}}- O\left(\frac{p^2}{n^{2}}\right)-O\left(\frac p {n^2}\right)\mper
    \end{align*}
    Setting $p=\frac{n^{-\frac 1 4}}{\log n}$, we get
    \[
        \E \left[f_1( x^{(1)})+f_2( x^{(1)})\right]\ge \Omega(1)\cdot \frac{n^{-\frac{3}{4}}}{\log^{2.5} n}\mcom
    \]
    and we conclude by combining with \Cref{eq:deg3negbound}.
\end{proof}

\begin{lemma}[Large degree-2 part]
    \label{lem:large-deg2}
    Suppose that $\max f_2(x)\ge \delta$. Then $x^{(2)}$ has value $\frac{7}{8}+\wt{\Omega}(n^{-\frac 1 2})$.
\end{lemma}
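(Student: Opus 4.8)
The plan is to exploit two facts: the degree-$2$ part $f_2$ can be approximately maximized over $\pmo^n$ by the Charikar--Wirth rounding, and the remaining terms of $\psi$ have \emph{odd} degree, so they can be made non-negative for free by flipping the sign of the entire assignment.

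Write $f_2(x) = x^\top M x$, where $M$ is the symmetric matrix of coefficients of $f_2$; since $f_2$ is multilinear and homogeneous of degree $2$, $M$ has zero diagonal, which is exactly the hypothesis needed to apply \Cref{thm:charikar-wirth}. I would solve the degree-$2$ SoS relaxation of $\max_{x\in\pmo^n} f_2(x)$ and invoke the second part of \Cref{thm:charikar-wirth} with $T=\Theta(\sqrt{\log n})$; this produces a polynomial-time sampleable distribution $\calD$ over $\pmo^n$ with
\[
    \E_{x'\sim\calD}[f_2(x')] \geq \Omega\Paren{\tfrac{1}{\log n}}\cdot\max_{x\in\pmo^n}f_2(x) \geq \Omega\Paren{\tfrac{\delta}{\log n}} = \wt{\Omega}\Paren{n^{-1/2}}\mcom
\]
using the hypothesis $\max_x f_2(x)\geq\delta$ together with $\delta=\frac{c}{\sqrt n\,\log n}$.

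Next, since $f_1$ and $f_3$ are homogeneous of odd degree (hence odd functions) while $f_2$ is even, for every $x'\in\pmo^n$ we have $\psi(x')+\psi(-x') = \frac{7}{4} + 2f_2(x')$. Therefore the assignment $x^{(2)}\in\{x',-x'\}$ selected by the algorithm (the one with larger $\psi$-value) satisfies $\psi(x^{(2)})\geq\frac{7}{8}+f_2(x')$. Finally, since $|f_2(x')|\leq 3/8$ pointwise and $\E_{x'\sim\calD}[f_2(x')]=\wt{\Omega}(n^{-1/2})$, a reverse Markov inequality gives $\Pr_{x'\sim\calD}\Brac{f_2(x')\geq\wt{\Omega}(n^{-1/2})}\geq\wt{\Omega}(n^{-1/2})$; combining this with the previous bound, a single execution of this step yields an assignment of value $\frac{7}{8}+\wt{\Omega}(n^{-1/2})$ with probability $\wt{\Omega}(n^{-1/2})$, and the $\poly(n)$ outer repetitions in \Cref{alg:3sat-improved} amplify this to high probability.

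I do not anticipate a real obstacle here --- this is the most routine of the three cases of \Cref{thm:3sat-improved}. The only points requiring a little care are checking that $M$ has zero diagonal so that \Cref{thm:charikar-wirth} is applicable, and noting that the Charikar--Wirth bound holds only in expectation (and can be negative on an individual sample), which is why the argument passes through the reverse-Markov estimate and the repetition loop rather than a worst-case guarantee.
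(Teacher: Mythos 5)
Your proposal is correct and follows essentially the same route as the paper: apply Charikar--Wirth to $f_2$ and use the parity of $f_1+f_3$ versus $f_2$ to pick the better of $\pm x'$. Your added care about the zero diagonal of $M$ and about the Charikar--Wirth guarantee holding only in expectation (handled via reverse Markov plus the repetition loop) is a valid and slightly more rigorous rendering of the same argument.
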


\begin{proof}
    Using Charikar-Wirth rounding (\Cref{thm:charikar-wirth} with $T=\Theta(\sqrt{\log n})$), we get that 
    \[
        f_2(x')\ge \Omega\left(\frac{1}{\log n}\right)\max_{x\in \pmo^n} f_2(x)\mper
    \]
    Moreover, $f_2$ is invariant by the choice of the signing $\pm x'$, while $f_1+f_3$ changes sign. Thus, $x^{(2)}$ has value $\frac{7}{8}+\wt{\Omega}(n^{-\frac 1 2})$ (recall that $\delta=\frac{c}{\sqrt{n} \log n}$).
\end{proof}

\begin{lemma}[Large degree-3 part]
    \label{lem:large-deg3}
    Suppose that $\max_x f_2(x)\le \delta$ and $|f_1(x^*)|\le \delta$ for some assignment $x^*$ satisfying the 3SAT formula. Then $x^{(3)}$ has value $\frac 7 8 + \wt{\Omega}(n^{-\frac 1 2})$.
\end{lemma}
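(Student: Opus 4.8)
The plan is to run the rounding of \Cref{thm:deg6-sos-approx} on the decoupled cubic $\wt{f_3}$, but applied to the \emph{augmented} degree-$6$ relaxation of step~3 of \Cref{alg:3sat-improved} (the one carrying the extra axioms $f_2(y)\ge-\delta$ and $f_2(z)\ge-\delta$), and then to carefully track the low-order parts $f_1,f_2$ through the recoupling step. First I would lower bound the SoS value $\SOS \coloneqq \pE_\mu \wt{f_3}$ of that relaxation using the satisfying assignment $x^*$. Since $x^*$ satisfies the formula, $f_1(x^*)+f_2(x^*)+f_3(x^*)=\tfrac18$; combining this with the hypotheses $\Abs{f_1(x^*)}\le\delta$, $\max_x f_2(x)\le\delta$, and the trivial bound $\Abs{f_3}\le\tfrac18$ yields $f_2(x^*)=\tfrac18-f_1(x^*)-f_3(x^*)\ge-\delta$ and $f_3(x^*)=\tfrac18-f_1(x^*)-f_2(x^*)\ge\tfrac18-2\delta$. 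Hence the point pseudo-distribution supported on $(x^*,x^*,x^*)$ satisfies all axioms of the step-3 relaxation (the Boolean constraints $(x^*_i)^2=1$ and $f_2(x^*)\ge-\delta$ for the $y$- and $z$-copies), so $\SOS\ge\wt{f_3}(x^*,x^*,x^*)=f_3(x^*)\ge\tfrac18-2\delta=\Omega(1)$.

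Next I would run the rounding of \Cref{thm:deg6-sos-approx} essentially verbatim, writing $q_i\coloneqq\sum_{j,k}T_{ijk}y_jz_k$: sample $h\sim\pmo^n$, so that with probability $\Omega(1)$ one has $\pE_\mu\iprod{q,h}^2\ge\SOS^2/n$, and use \Cref{lem:scalar-fixing} (with $t=2$, which is what forces the degree-$6$ budget) to reweight $\mu$ by a degree-$\le4$ sum of squares, obtaining a degree-$2$ pseudo-distribution $\mu'$ with $\Abs{\pE_{\mu'}\iprod{q,h}}\ge\tfrac13\sqrt{\pE_\mu\iprod{q,h}^2}$; set $\ol x=\pm h$ with the sign chosen so that $\pE_{\mu'}\iprod{q,\ol x}\ge0$ (this leaves $f_2(\ol x)=f_2(h)$ unchanged, as $f_2$ is even). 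Crucially, reweighting $\mu$ by a sum of squares of degree $\le4$ preserves the inequalities $\pE f_2(y)\ge-\delta$ and $\pE f_2(z)\ge-\delta$ inside the degree-$6$ budget, so $\pE_{\mu'}f_2(y)\ge-\delta$ and $\pE_{\mu'}f_2(z)\ge-\delta$. I would then apply Charikar--Wirth rounding (\Cref{thm:charikar-wirth}) with internal parameter $\Theta(\sqrt{\log n})$ to the $2n$-dimensional vector $(y,z)$ under $\mu'$; since the bipartite form of $\wt{f_3}(\ol x,\cdot,\cdot)$ and the quadratics $f_2(y),f_2(z)$ all have zero diagonal and polynomially bounded coefficient $\ell_1$-norms, one run produces $\ol y,\ol z\in\pmo^n$ with
\[
    \E\big[\wt{f_3}(\ol x,\ol y,\ol z)\big]\ge\Omega\!\left(\tfrac{1}{\log n}\right)\Abs{\pE_{\mu'}\iprod{q,\ol x}}-n^{-\Omega(1)}, \qquad \E\big[f_2(\ol y)\big],\ \E\big[f_2(\ol z)\big]\ge-\delta\mper
\]
Averaging over $h$ and using $\SOS=\Omega(1)$ with the constant success probability (the bad $h$'s contribute only $-n^{-\Omega(1)}$), this gives $\E[\wt{f_3}(\ol x,\ol y,\ol z)]\ge\Omega\big(n^{-1/2}/\log n\big)$ unconditionally.

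Finally I would obtain $x^{(3)}\sim\calD(\ol x,\ol y,\ol z)$ via \Cref{lem:recoupling-improved} and take the expectation over all the randomness of the run. Part~3 of that lemma and linearity of $f_1$ give $\E[f_1(x^{(3)})]=0$; part~1 gives $\E[f_3(x^{(3)})]=\tfrac29\E[\wt{f_3}(\ol x,\ol y,\ol z)]=\Omega(n^{-1/2}/\log n)$; and part~2 gives $\E[f_2(x^{(3)})]=\tfrac19\big(\E[f_2(\ol x)]+\E[f_2(\ol y)]+\E[f_2(\ol z)]\big)\ge\tfrac19(0-\delta-\delta)$, where $\E[f_2(\ol x)]=\E_h f_2(h)=0$ since $\ol x$ has the law of $\pm h$ with $h$ uniform and $f_2$ has zero diagonal. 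Summing, $\E[f_1(x^{(3)})+f_2(x^{(3)})+f_3(x^{(3)})]\ge\Omega(n^{-1/2}/\log n)-\tfrac29\delta$, which is $\wt{\Omega}(n^{-1/2})$ once the constant $c$ in $\delta=c/(\sqrt n\log n)$ is chosen small enough relative to the constants from \Cref{thm:charikar-wirth} and \Cref{lem:scalar-fixing}. Since $\psi(x^{(3)})=\tfrac78+f_1(x^{(3)})+f_2(x^{(3)})+f_3(x^{(3)})$ lies in $[0,1]$, a reverse Markov inequality upgrades this expectation bound to: a single run of step~3 outputs an assignment of value $\tfrac78+\wt{\Omega}(n^{-1/2})$ with probability $\wt{\Omega}(n^{-1/2})$; repeating $\poly(n)$ times then succeeds with high probability.

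The main obstacle — and the reason the axioms $f_2(y)\ge-\delta$, $f_2(z)\ge-\delta$ are added to the relaxation in the first place — is controlling $f_2(\ol y)$ and $f_2(\ol z)$ after rounding, because $x^{(3)}$ inherits a $\tfrac19(f_2(\ol x)+f_2(\ol y)+f_2(\ol z))$ term that must not swamp the $\Theta(n^{-1/2}/\log n)$ gain coming from $\wt{f_3}$. This hinges on (i) polynomial reweighting by a sum of squares preserving those axioms at the relevant SoS degree, and (ii) Charikar--Wirth rounding degrading $\pE_{\mu'}f_2(y)$ only by a $1/\Theta(\log n)$ factor plus a negligible $\ell_1$-error (which is exactly where $f_2$ being zero-diagonal is used). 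A secondary subtlety is that when bounding $\E[f_2(\ol x)]$ one must argue \emph{unconditionally} — $\E_h f_2(h)=0$ — whereas conditioning on the ``good $h$'' event could in principle bias $f_2(\ol x)$ to be as negative as $-\Omega(1)$.
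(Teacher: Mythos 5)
Your proposal is correct and follows essentially the same route as the paper: lower-bound the SoS value of the augmented degree-$6$ relaxation via the satisfying assignment, reweight and Charikar--Wirth-round while tracking the $f_2(y),f_2(z)$ axioms through the reweighting, and recouple via \Cref{lem:recoupling-improved}. The only (valid) variation is bookkeeping: you average unconditionally over $h$ and use $\E_h f_2(h)=0$ exactly, whereas the paper conditions on a good $\ol{x}$ and controls $f_2(\ol{x})\ge -C\delta$ by a Markov/union-bound argument.
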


\begin{proof}
        We run the canonical degree-6 SoS relaxation on the decoupled polynomial $\wt{f_3}(x,y,z)$ associated to $f_3$ with variables $x,y,z\in \pmo^n$, and the additional axioms $f_2(y)\ge -\delta$ and $f_2(z)\ge -\delta$. Let $\mu$ denote the resulting pseudo-distribution.

        Since $\max_x f_2(x)\le \delta$, $|f_1(x^*)|\le \delta$ and $\delta=o(1)$, we must have $f_3(x^*)\ge \frac{1}{8}-o(1)$. Using \Cref{obs:satisfiable-3sat}, we get that the delta pseudo-distribution centered at $x=y=z=x^*$ is a feasible solution to the SoS relaxation, so that $\mu$ satisfies 
        \begin{equation*}
            \pE_{\mu} \wt{f_3}(x,y,z) \geq \frac{1}{8} - o(1),\quad 
            \pE_{\mu} f_2(y) \geq -\delta, \quad
            \pE_{\mu} f_2(z) \geq -\delta\mper
        \end{equation*}

        Next, we sample $\ol x  \sim \pmo^n$ and reweight $\mu$ using \Cref{lem:scalar-fixing} to get a degree-2 pseudo-distribution $\mu'$.
        The same analysis shows that with at least constant probability, $\ol x$ satisfies
        \begin{equation}
            \label{eq:goodxsat}
            \pE_{\mu'} \wt{f}_3(\ol x,y,z) \geq \Omega\Paren{\frac{1}{\sqrt{n}}} \mper
        \end{equation}
        Furthermore, we claim that for $C > 1$,
        \begin{equation*}
            \Pr_{\ol x\sim\pmo^n}\Brac{f_2(\ol x) \geq - C\delta} \ge 1-\frac{1}{C} \mper
        \end{equation*}
        To see this, note that $\E_{\ol x} f_2(\ol x) = 0$ because $f_2$ is multilinear.
        The above bound then follows from the assumption $\max_x f_2(x) \leq \delta$.
        Therefore, by a union bound, with at least constant probability we get a good $\ol x$ that satisfies simultaneously \Cref{eq:goodxsat} and $f_2(\ol x) \geq -C\delta$. By repeating the sampling  $\text{poly}(n)$ times, we can find such an $\ol x\in \pmo^n$ with high probability.

        Now, fix $\ol x\in \pmo^n$ satisfying the previous conditions. We apply \Cref{thm:charikar-wirth} to the degree-2 pseudo-distribution $\mu'$ over $(y,z)\in \pmo^{2n}$ for some $T>0$ to be fixed later. Denoting by $\|f_2\|_1$ (resp. $\|f_3\|_1$) the sum of the absolute value of the coefficients of $f_2$ (resp. $f_3$), we have:
        \begin{equation*}
        \begin{aligned}
            \E_{\ol y} \left[f_2(\ol y)\right] &\ge \frac{1}{T^2}\pE_{\mu'} \left[f_2(y)\right]-8e^{-\frac{T^2}{2}}\|f_2\|_1 \mcom \\
            \E_{\ol z} \left[f_2(\ol z)\right] &\ge  \frac{1}{T^2}\pE_{\mu'} \left[f_2(z)\right]-8e^{-\frac{T^2}{2}}\|f_2\|_1 \mcom \\
            \E_{\ol y,\ol z} \left[\wt{f_3}(\ol x,\ol y,\ol z)\right] &\ge  \frac{1}{T^2}\pE_{\mu'} \left[\wt{f_3}(\ol x,y,z)\right]-8e^{-\frac{T^2}{2}}\|f_3\|_1 \mper \\
        \end{aligned}
        \end{equation*}
        The constraints $f_2(y)\ge -\delta$ and $f_2(z)\ge -\delta$ still hold for the reweighted pseudo-distribution $\mu'$. Moreover, $\|f_2\|_1$ and $\|f_3\|_1$ are both $O(n^3)$, so by picking $T=\sqrt{8\log n}$, we get
        \begin{equation*}
        \begin{aligned}
            \E_{\ol y} \left[f_2(\ol y)\right] &\ge -O\left(\frac{\delta}{\log n}\right)-O\left(\frac{1}{n}\right) \mcom \\
            \E_{\ol z} \left[f_2(\ol z)\right] &\ge -O\left(\frac{\delta}{\log n}\right)-O\left(\frac{1}{n}\right) \mcom \\
            \E_{\ol y,\ol z} \left[\wt{f_3}(\ol x,\ol y,\ol z)\right] &\ge  \Omega\left(\frac{1}{\sqrt n \log n}\right)-O\left(\frac{1}{n}\right) \mper \\
        \end{aligned}
        \end{equation*}

        Finally, once we have $\ol x,\ol y,\ol z$, we recouple them to get $x^{(3)}$ by \Cref{lem:recoupling-improved}, obtaining
        \begin{equation*}
        \begin{aligned}
            \E_{x^{(3)}} f_3(x^{(3)}) &= \frac{2}{9} \cdot \wt{f_3}(\ol x,\ol y,\ol z)\mcom\\
            \E_{x^{(3)}} f_2(x^{(3)}) &= \frac{1}{9}\cdot (f_2(\ol x) + f_2(\ol y) + f_2(\ol z))\mcom\\
            \E_{x^{(3)}} f_1(x^{(3)}) &= 0
            \mper
        \end{aligned}
        \end{equation*}
        Thus, we have
        \begin{equation*}
            \E \psi(x^{(3)}) \geq \frac{7}{8} + \Omega\Paren{\frac{1}{\sqrt{n} \log n}} - O(\delta)
            = \frac{7}{8} + \wt{\Omega}\Paren{\frac{1}{\sqrt{n}}}\mcom
        \end{equation*}
        where the last equality holds provided that we pick the constant $c>0$ in the definition of $\delta$ to be small enough. This concludes the proof.
\end{proof}

\begin{proof}[Proof of \Cref{thm:3sat-improved}]

    It now suffices to prove that one of the assumptions of \Cref{lem:large-deg1}, \Cref{lem:large-deg2} or \Cref{lem:large-deg3} must hold. Fix some satisfying assignment $x^*$ to the 3SAT formula.

    \begin{enumerate}
        \item If $|f_1(x^*)|>\delta$, then one of the two SoS programs from Step 1 of \Cref{alg:3sat-improved} is feasible, and the assumptions of \Cref{lem:large-deg1} hold.

        \item If $\max_x f_2(x)> \delta$, then the assumptions of \Cref{lem:large-deg2} hold.

        \item If $|f_1(x^*)|\le \delta$ and  $\max_x f_2(x)\le \delta$, then the assumptions of \Cref{lem:large-deg3} hold.
    \end{enumerate}
    Hence, in all cases we get a random assignment $\wt x\in \pmo^n$ satisfying $\E \psi(\wt x)\ge \frac{7}{8}+\wt{\Omega}(n^{-\frac 3 4})$. By repeating the rounding $\text{poly}(n)$ times, we can get such an assignment with high probability.
\end{proof}

\section*{Acknowledgements}

We would like to thank anonymous reviewers for their valuable feedback.
We would also like to thank Jonathan Shi and Antares Chen for inspiring conversations on related polynomial optimization problems.
Jun-Ting Hsieh also thanks Prashanti Anderson for discussions on decoupling inequalities.
Finally, part of this work was done while Jun-Ting Hsieh was visiting Bocconi University in 2022.

\bibliographystyle{alpha}
\bibliography{main}

\appendix
\section{A simple \texorpdfstring{$O(\sqrt{n})$}{O(sqrt(n))}-certifiable upper bound}
\label{sec:warmup}

In this section, we provide a simple alternative certification algorithm achieving the same approximation ratio as \Cref{thm:deg6-sos-approx}.
Let $f(x,y,z) = \sum_{1\le i,j,k\le n} T_{ijk} x_i y_j z_k$ where $(T_{i,j,k})_{1\le i,j,k\le n}$ is a symmetric 3-tensor. We want to approximate
\begin{equation} \label{prob.def} 
    \max_{x,y,z \in \{\pm 1 \}^n} f(x,y,z) \mper
\end{equation}
Let $\calD$ be a pairwise independent distribution over $\pmo^n$. We can use a construction in which $|\supp(\calD)| = O(n)$. We will assume without loss of generality that if $\wh x\in \supp(\calD)$ then also $-\wh x \in \supp(\calD)$.

We consider the following approximation algorithm: for each $\wh x \in \supp(\calD)$, find a constant factor approximation of
\begin{equation} 
    \label{fix.x}  \max_{y,z \in \pmo^n} f(\wh{x},y,z) = \max_{y,z\in \pmo^n} \sum_{1\le i,j,k\le n} T_{i,j,k} \wh x_i y_j z_k\mcom
\end{equation}
using (the proof of)  Grothendieck's inequality. Output the best solution over all choices of $\wh x \in \supp(\calD)$.

Call $r$ the maximum of the Grothendieck relaxation of \Cref{fix.x} over all $\wh x\in \supp(\calD)$. The algorithm outputs a solution of value $\Omega(r)$. We want to prove that the standard degree-4 SoS relaxation of \Cref{prob.def} has optimum at most $r \cdot \sqrt n$, which establishes an $O(\sqrt n)$ integrality gap.

Let $\mu$ be the optimal pseudo-distribution for the degree-4 SoS relaxation of \Cref{prob.def} and $\SOS$ be its value.
Let $q_i(y,z) = y^\top T_i z = \sum_{1 \leq j,k \leq n} T_{i,j,k} y_j z_k$, and write $q = (q_1,\dots,q_n)$ such that $f(x,y,z) = \iprod{x,q}$.
We have that
\begin{align*}
    \SOS^2 = \Paren{\pE_{\mu} \iprod{x,q} }^2
    \leq \pE_{\mu}[\iprod{x,q}^2]
    \leq  n \cdot \pE_{\mu}\|q\|_2^2
\end{align*}
by Cauchy-Schwarz (\Cref{fact:sos-cauchy-schwarz}).

On the other hand, by definition of $r$ and our assumption that $\supp(\calD)$ is symmetric, for every $\wh x\in \supp(\calD)$ there is a degree-2 SoS proof (over variables $y$, $z$) that
\[
    f(\wh{x}, y,z) \leq r \text{ and } f(\wh{x},y,z) \geq - r \mper
\]
Hence, for every $\wh x \in \supp(\calD)$  there is a degree-4 SoS proof that
\[
    f(\wh{x},y,z)^2 = \iprod{\wh{x},q}^2 \leq r^2 \mcom
\]
which means that
\[
    r^2 \geq  \pE_\mu \E_{\wh x \sim \calD} \iprod{\wh{x},q}^2 = \pE_\mu \|q\|_2^2 \geq \frac{1}{n} \cdot \SOS^2 \mcom
\]
where we use the fact that the pairwise independence of $\calD$ implies that $\E_{\wh x \sim \calD} \iprod{\wh{x}, v}^2 = \|v\|_2^2$ for all $v \in \R^n$. Putting things together we get $\SOS^2 \leq n\cdot r^2$, which completes the argument.

\section{Optimizing higher-degree polynomials}
\label{sec:high-degree}

\subsection{Decoupling inequalities}

We first prove a decoupling lemma for all odd-degree polynomials.

\begin{lemma}[High-degree version of \Cref{lem:decoupling}]
    \label{lem:decoupling-hd}
    Let $d$ be an odd integer.
    Let $f(x) = \iprod{T, x^{\otimes d}}$ be a multilinear homogeneous degree-$d$ polynomial in $n$ variables (where $T$ is a symmetric $d$-tensor),
    and let $\wt{f}(x^{(1)}, \dots,x^{(d)}) = \iprod{T, x^{(1)} \otimes \cdots \otimes x^{(d)} }$ be the decoupled polynomial of $f$.
    Then, given any $x^{(1)}, \dots, x^{(d)} \in \pmo^n$, there exists a sampleable distribution $\calD$ over $\pmo^n$ such that
    \begin{equation*}
        \E_{y\sim \calD}\left[f(y)\right] = \frac{d!}{d^d} \cdot \wt{f}\Paren{ x^{(1)}, \dots, x^{(d)} } \mper
    \end{equation*}
    As a consequence,
    \begin{equation*}
        \max_{y\in \pmo^n} f(y) \geq \frac{d!}{d^d} \cdot \max_{x^{(1)}, \dots, x^{(d)} \in \pmo^n}  \wt{f}\Paren{ x^{(1)}, \dots, x^{(d)} } \mper
    \end{equation*}
\end{lemma}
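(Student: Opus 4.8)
The plan is to mimic the degree-$3$ recoupling construction of \Cref{lem:recoupling-improved}. I would define the distribution $\calD$ over $\pmo^n$ as follows: sample $b_1,\dots,b_{d-1}$ i.i.d.\ uniformly in $\pmo$ and set $b_d \coloneqq b_1 b_2 \cdots b_{d-1}$, so that $b_1 b_2 \cdots b_d = 1$ deterministically; then, independently for each coordinate $i\in[n]$, draw $\sigma_i$ uniformly from $[d]$ and set $y_i \coloneqq b_{\sigma_i}\, x^{(\sigma_i)}_i$ (equivalently, $y_i$ is uniform in the multiset $\{b_1 x^{(1)}_i,\dots,b_d x^{(d)}_i\}$). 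This $\calD$ is visibly polynomial-time sampleable, which takes care of that part of the claim. Since $f$ is multilinear, $f(y)=\sum_{i_1,\dots,i_d \text{ distinct}} T_{i_1\cdots i_d}\, y_{i_1}\cdots y_{i_d}$, so it suffices to compute $\E_{\calD}[y_{i_1}\cdots y_{i_d}]$ for \emph{distinct} indices $i_1,\dots,i_d$, in which case $\sigma_{i_1},\dots,\sigma_{i_d}$ are independent and uniform on $[d]$.

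The heart of the argument is the identity
\[
    \E_{b}\bigl[\, b_{\tau_1} b_{\tau_2}\cdots b_{\tau_d} \,\bigr]=\mathbf{1}\bigl[\tau\in S_d\bigr]\qquad\text{for every }\tau\in[d]^d,
\]
and this is the one place where oddness of $d$ is essential. To prove it I would write the left side as $\E_b\prod_{v\in[d]} b_v^{m_v}$ with $\sum_{v} m_v=d$, reduce modulo $b_v^2=1$, and substitute $b_d=\prod_{v<d} b_v$; the resulting monomial in $b_1,\dots,b_{d-1}$ has nonzero expectation iff it is the empty product, and since $\sum_v m_v=d$ is \emph{odd} this forces $m_v=1$ for all $v$, i.e.\ $\tau$ is a permutation. (For even $d$ there are non-permutation exponent vectors with all $m_v$ even, which is precisely the obstruction of \Cref{ex:failure-of-decouling}.) Given this identity, $\E_{\calD}[y_{i_1}\cdots y_{i_d}]=\frac{1}{d^d}\sum_{\sigma\in S_d}\prod_{\ell=1}^d x^{(\sigma(\ell))}_{i_\ell}$; summing against $T_{i_1\cdots i_d}$ and, inside each $\sigma$-term, relabelling $j_m\coloneqq i_{\sigma^{-1}(m)}$, the symmetry of $T$ collapses every one of the $d!$ permutations to the same quantity $\wt f(x^{(1)},\dots,x^{(d)})$, giving $\E_{\calD}[f(y)]=\frac{d!}{d^d}\wt f(x^{(1)},\dots,x^{(d)})$. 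As a sanity check this recovers the factor $2/9$ when $d=3$.

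The ``as a consequence'' statement then follows from a one-line averaging argument: take $x^{(1)},\dots,x^{(d)}$ attaining $\max\wt f$ (which is nonnegative, since negating $x^{(1)}$ flips the sign of $\wt f$), so that $\E_{\calD}[f(y)]=\frac{d!}{d^d}\max\wt f$ and hence some $y\in\supp(\calD)$ has $f(y)\ge\frac{d!}{d^d}\max\wt f$. I expect the only real obstacle to be the bookkeeping in the vanishing identity and in the symmetry-collapse step; conceptually everything is a faithful generalization of the degree-$3$ computation in \Cref{lem:recoupling-improved}, with the role of the relation $b_1b_2b_3=1$ played by $b_1\cdots b_d=1$ and oddness of $d$ ensuring that only the ``decoupled'' monomials survive.
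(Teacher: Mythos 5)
Your proposal is correct and is essentially identical to the paper's proof: the same distribution (pairwise product structure $b_d=b_1\cdots b_{d-1}$ with coordinates drawn uniformly from the multiset $\{b_j x_i^{(j)}\}$), the same reduction to $\E_b[b_{\tau_1}\cdots b_{\tau_d}]=\mathbf{1}[\tau\in S_d]$, and the same symmetry collapse over permutations. Your parity argument for the vanishing identity is in fact slightly more explicit than the paper's, which simply asserts it from $(d-1)$-wise independence, $b_1\cdots b_d=1$, and oddness of $d$.
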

\begin{proof}
    The distribution $\calD$ can be sampled as follows,
    \begin{itemize}
        \item Let $b_1,\ldots,b_{d-1}$ be i.i.d. uniform $\pm 1$ random variables and let $b_d\coloneqq b_1\ldots b_{d-1}$. Note that the distribution of $b=(b_1,\ldots,b_d)$ is $(d-1)$-wise independent and $b_1 b_2 \cdots b_d = 1$.
        \item Independently for each $i \in [n]$, sample $y_i$ uniformly from $\{b_j x_i^{(j)}\}_{j\in[d]}$.
    \end{itemize}
    Since each $y_i$ is sampled independently conditioned on $b$, we have that for any pairwise distinct indices $i_1,\dots,i_d \in [n]$,
    \begin{equation*}
        \E_{y\sim \calD}\left[y_{i_1}\cdots y_{i_d}\right] = \E_b \left[\prod_{k=1}^d \Paren{\frac{1}{d} \sum_{j=1}^d b_j x_{i_k}^{(j)} } \right]
        = d^{-d} \sum_{j_1,\dots,j_d \in [d]} \E_b\left[b_{j_1}\cdots b_{j_d}\right] \cdot x_{i_1}^{(j_1)} \cdots x_{i_d}^{(j_d)}  \mper
    \end{equation*}
    Since $b$ follows a $(d-1)$-wise independent distribution, $b_1 b_2 \cdots b_d = 1$ and $d$ is odd, $\E_b\left[b_{j_1}\cdots b_{j_d}\right]$ does not vanish (and equals 1) if and only if $j_1,\dots,j_d$ are distinct, i.e., $\{j_1,\dots,j_d\} = [d]$.
    Thus, the summation above is simply summing over permutations of $[d]$:
    \begin{equation*}
        \E_{y\sim \calD}\left[y_{i_1}\cdots y_{i_d}\right] = \frac{d!}{d^d} \cdot \E_{\pi\sim \bbS_d}\Brac{ x_{i_1}^{(\pi(1))} \cdots x_{i_d}^{(\pi(d))} }
        = \frac{d!}{d^d} \cdot \E_{\pi\sim \bbS_d}\Brac{ x_{i_{\pi(1)}}^{(1)} \cdots x_{i_{\pi(d)}}^{(d)} }
        \mcom
    \end{equation*}
    where $\pi \sim \bbS_d$ denotes a random permutation of $[d]$. Finally, as $T$ is a symmetric tensor, we deduce
    \begin{equation*}
        \E_{y\sim \calD} \left[f(y)\right] = \frac{d!}{d^d} \cdot \wt{f}\Paren{ x^{(1)}, \dots, x^{(d)} } \mper
    \end{equation*}
    This proves the first statement of the lemma.
    The second statement follows immediately.
\end{proof}

For even-degree polynomials, \Cref{lem:decoupling-hd} simply cannot hold.
This is easy to appreciate by discussing the simple setting of quadratics:
\begin{example}[Impossibility of decoupling for quadratics]
\label{ex:failure-of-decouling}
    Consider the matrix $Q = I - \vec{1}\vec{1}^\top$, and define the multilinear quadratic polynomial $f(x) = x^\top Qx$ and the decoupled polynomial $\wt{f}(x,y) = x^\top Q y$.
    Then, it is easy to verify that $\max_{x\in\pmo^n} f(x) = n$ but $\max_{x,y\in\pmo^n} \wt{f}(x,y) = n^2-n$, which is a $\poly(n)$ gap.
\end{example}

On the other hand, if we only consider $\max_{y} |f(y)|$ like in the setting of \cite{BGG+17} (as opposed to $\max_y f(y)$), then decoupling inequalities with the same guarantees as \Cref{lem:decoupling-hd} hold for \emph{any} degree:

\begin{lemma}[Decoupling for absolute values, any degree]
    \label{lem:decoupling-abs-2}
    Let $d \in \N$.
    Let $f(x) = \iprod{T, x^{\otimes d}}$ be a multilinear homogeneous degree-$d$ polynomial in $n$ variables (where $T$ is a symmetric $d$-tensor),
    and let $\wt{f}(x^{(1)}, \dots,x^{(d)}) = \iprod{T, x^{(1)} \otimes \cdots \otimes x^{(d)} }$ be the decoupled polynomial of $f$.
    Then,
    \begin{equation*}
        \max_{y\in \pmo^n} |f(y)| \geq \frac{d!}{d^d} \cdot \max_{x^{(1)}, \dots, x^{(d)} \in \pmo^n}  \wt{f}\Paren{ x^{(1)}, \dots, x^{(d)} } \mper
    \end{equation*}
\end{lemma}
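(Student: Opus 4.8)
The plan is to imitate the randomization used in the proof of \Cref{lem:decoupling-hd}, but with two modifications that are forced on us by the failure of exact decoupling in even degree (cf.\ \Cref{ex:failure-of-decouling}). First, I would \emph{not} impose the product constraint $b_1\cdots b_d=1$ on the auxiliary signs; instead I take $b_1,\dots,b_d$ to be fully i.i.d.\ uniform in $\pmo$. Second, instead of tracking $\E f(y)$ I would track the expectation of the signed quantity $(b_1\cdots b_d)\,f(y)$. Concretely, let $\calD$ be the distribution over $y\in\pmo^n$ obtained by sampling $b_1,\dots,b_d\sim\pmo$ independently and then, independently for each $i\in[n]$, setting $y_i$ uniformly at random in the multiset $\{b_j x_i^{(j)}\}_{j\in[d]}$; note $\supp(\calD)\subseteq\pmo^n$.

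The heart of the argument is the identity
\begin{equation*}
    \E_{y\sim\calD}\Brac{(b_1\cdots b_d)\,f(y)} = \frac{d!}{d^d}\cdot \wt f\Paren{x^{(1)},\dots,x^{(d)}}\mper
\end{equation*}
To establish it I would expand $f(y)=\sum_{i_1,\dots,i_d\text{ distinct}} T_{i_1\dots i_d}\, y_{i_1}\cdots y_{i_d}$ using that $f$ is multilinear, observe that conditioned on $b=(b_1,\dots,b_d)$ the coordinates $y_{i_1},\dots,y_{i_d}$ (distinct indices) are independent with $\E[y_{i_k}\mid b]=\tfrac1d\sum_{j\in[d]} b_j x_{i_k}^{(j)}$, and hence write
\begin{equation*}
    \E\Brac{(b_1\cdots b_d)\,y_{i_1}\cdots y_{i_d}} = d^{-d}\sum_{j_1,\dots,j_d\in[d]} \E_b\Brac{b_1\cdots b_d\cdot b_{j_1}\cdots b_{j_d}}\cdot x_{i_1}^{(j_1)}\cdots x_{i_d}^{(j_d)}\mper
\end{equation*}
Since $b_j^2=1$ and the factor $b_1\cdots b_d$ contributes one copy of each $b_j$, the Rademacher expectation $\E_b[\,b_1\cdots b_d\cdot b_{j_1}\cdots b_{j_d}\,]$ equals $1$ exactly when every index $j\in[d]$ appears an \emph{odd} number of times among $j_1,\dots,j_d$, and $0$ otherwise; as these multiplicities are all $\geq 1$ and sum to $d$, this forces $(j_1,\dots,j_d)$ to be a permutation of $[d]$. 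Summing over the $d!$ surviving permutations and using the symmetry of $T$ exactly as in \Cref{lem:decoupling-hd} collapses the sum to $\tfrac{d!}{d^d}\wt f(x^{(1)},\dots,x^{(d)})$. I would point out that this computation uses nothing about the parity of $d$ — parity only matters for whether the constraint-based scheme of \Cref{lem:decoupling-hd} additionally yields an honest distribution on $\pmo^n$ with $\E f(y)=\tfrac{d!}{d^d}\wt f$.

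To finish, I would take absolute values: since $|b_1\cdots b_d|=1$ and $\supp(\calD)\subseteq\pmo^n$,
\begin{equation*}
    \frac{d!}{d^d}\cdot \wt f\Paren{x^{(1)},\dots,x^{(d)}} = \E_{y\sim\calD}\Brac{(b_1\cdots b_d)\,f(y)} \leq \E_{y\sim\calD}\Abs{f(y)} \leq \max_{y\in\pmo^n}|f(y)|\mcom
\end{equation*}
and then maximize the left-hand side over $x^{(1)},\dots,x^{(d)}\in\pmo^n$. The conceptual obstacle — and the reason this needs a genuinely different argument than \Cref{lem:decoupling-hd} rather than a routine adaptation — is that for even $d$ there provably is no distribution $\calD$ on $\pmo^n$ with $\E_{y\sim\calD}f(y)=\tfrac{d!}{d^d}\wt f$ (this is already witnessed by \Cref{ex:failure-of-decouling} with $d=2$); the fix is to insert the sign $b_1\cdots b_d$, which preserves the permutation-collapsing identity while costing nothing in the passage $|\E[\cdot]|\leq\E|\cdot|$ since it has modulus one. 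An equivalent route I could take instead is to record the identity in the de la Peña form $\tfrac{d!}{d^d}\wt f=\E_{\eps\sim\pmo^d}\big[(\eps_1\cdots\eps_d)\,f\big(\tfrac1d\sum_{j}\eps_j x^{(j)}\big)\big]$ and then use that a multilinear function attains the maximum of its absolute value over $[-1,1]^n$ at a vertex; both amount to the same bookkeeping.
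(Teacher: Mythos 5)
Your proof is correct, and the core of it is the same as the paper's: the identity $\E\bigl[(b_1\cdots b_d)\,f(y)\bigr]=\tfrac{d!}{d^d}\wt f(x^{(1)},\dots,x^{(d)})$ you derive by the Rademacher moment computation is exactly the polarization identity the paper invokes (your "equivalent route" remark at the end is precisely the paper's proof, which writes $y_\eps=\tfrac1d\sum_j\eps_j x^{(j)}\in[-1,1]^n$ and then uses that a multilinear polynomial attains $\max|f|$ over $[-1,1]^n$ at a vertex of the hypercube). The only difference is packaging: you compose the polarization step with an explicit coordinate-wise randomized rounding so that $\calD$ lands directly on $\pmo^n$ (mirroring the construction in \Cref{lem:decoupling-hd}), which lets you skip the vertex-attainment step; both are valid and cost the same $\tfrac{d!}{d^d}$.
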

\begin{proof}
    We use the trick that $\max_{y\sim \pmo^n}|f(y)| = \max_{y\in [-1,1]^n} |f(y)|$.
    Thus, given assignments $x^{(1)},\dots,x^{(d)} \in \pmo^n$, it suffices to round to a $y\in [-1,1]^n$.
    
    We next state the well-known polarization identity for degree-$d$ homogeneous polynomials:
    \begin{equation*}
        \wt{f}\Paren{x^{(1)}, \dots, x^{(d)}} = \E_{\eps\sim \pmo^n} \Brac{\frac{\eps_1 \eps_2 \cdots \eps_d}{d!} f\Paren{\eps_1 x^{(1)} + \cdots + \eps_d x^{(d)}}} \mper
    \end{equation*}
    Define $y_{\eps} \seteq \frac{1}{d}(\eps_1 x^{(1)} + \cdots \eps_d x^{(d)}) \in [-1,1]^n$.
    Then, rewriting the above and using the triangle inequality,
    \begin{equation*}
        \wt{f}\Paren{x^{(1)}, \dots, x^{(d)}} = \frac{d^d}{d!} \cdot \E_{\eps\sim\pmo^n}\Brac{\eps_1\eps_2 \cdots \eps_d \cdot f(y_{\eps})}
        \leq \frac{d^d}{d!} \cdot \E_{\eps\sim\pmo^n} \Brac{ |f(y_{\eps})| } \mper
    \end{equation*}
    Thus, there exists a $y_{\eps}\in [-1,1]^n$ such that $|f(y_{\eps})| \geq \frac{d!}{d^d}\cdot \wt{f}(x^{(1)},\dots, x^{(d)})$.
\end{proof}

\subsection{Rounding SoS relaxations for high-degree polynomials}

We now give a simple polynomial-time certification and rounding algorithm using the canonical sum-of-squares relaxation that achieves approximation $O(n^{\frac{d}{2}-1})$ for optimizing \emph{decoupled} homogeneous degree-$d$ polynomials over the hypercube.
The proof is essentially identical to the cubic case (\Cref{thm:deg6-sos-approx}).

\begin{theorem} \label{thm:hd-poly}
    Let $d\ge 3$ and $n \ge 1$ be integers.
    Given any decoupled homogeneous degree-$d$ polynomial $f(x^{(1)}, \dots, x^{(d)}) = \iprod{T, x^{(1)} \otimes \cdots \otimes x^{(d)} }$, the degree-$2d$ SoS relaxation of
    \[
    \max_{x^{(1)}, \dots, x^{(d)}\in\pmo^n} f(x^{(1)}, \dots, x^{(d)})
    \]
    has integrality gap at most $O(n^{\frac{d}{2}-1})$.
    Furthermore, given a degree-$2d$ pseudo-distribution $\mu$ such that $\SOS \coloneqq \pE_{\mu}f>0$,
     there is a randomized $n^{O(d)}$-time rounding algorithm that outputs $\ol{x}^{(1)},\dots,\ol{x}^{(d)} \in \pmo^n$ such that with high probability $f(\ol{x}^{(1)}, \dots, \ol{x}^{(d)}) \geq \Omega(n^{-\frac{d}{2}+1}) \cdot \SOS$.
\end{theorem}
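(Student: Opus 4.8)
The plan is to run the proof of \Cref{thm:deg6-sos-approx} essentially verbatim, with the single new idea of grouping the $d$ tensor factors as ``$(d-2)$ random factors plus one Grothendieck pair''. Write the objective as $f = \langle\, x^{(1)}\otimes\cdots\otimes x^{(d-2)},\, q\,\rangle$, where $q = (q_I)_{I\in[n]^{d-2}}$ and $q_I(x^{(d-1)},x^{(d)}) = \sum_{a,b} T_{I,a,b}\, x^{(d-1)}_a x^{(d)}_b$ is a \emph{bilinear} form in the last two blocks. Since each sign variable squares to $1$ under the hypercube axioms, $\pE_\mu\bigl[(x^{(1)}_{i_1}\cdots x^{(d-2)}_{i_{d-2}})^2\bigr]=1$, and the pseudo-distribution Cauchy--Schwarz (\Cref{fact:sos-cauchy-schwarz}) gives $\SOS = \sum_I \pE_\mu\bigl[(x^{(1)}_{i_1}\cdots x^{(d-2)}_{i_{d-2}})\,q_I\bigr] \le \sum_I \sqrt{\pE_\mu[q_I^2]} \le n^{(d-2)/2}\sqrt{\pE_\mu\|q\|_2^2}$. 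Thus $\SOS^2 \le n^{d-2}\,\pE_\mu\|q\|_2^2$, the exact analogue of \Cref{eq:sos-squared} with $n$ replaced by $n^{d-2}$.

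Next I would introduce the external randomness as a \emph{product}: sample $h = h^{(1)}\otimes\cdots\otimes h^{(d-2)}$ with $h^{(1)},\dots,h^{(d-2)}\in\pmo^n$ independent and uniform, and plan to output $\ol{x}^{(j)} := h^{(j)}$ for $j\le d-2$. The identity $\E_h\langle q,h\rangle^2 = \|q\|_2^2$ survives because $\E[h^{(k)}_{i_k} h^{(k)}_{j_k}] = \mathbf{1}[i_k=j_k]$ block by block, so $\SOS^2 \le n^{d-2}\,\E_h \pE_\mu\langle q,h\rangle^2$. The function $h\mapsto\pE_\mu\langle q,h\rangle^2$ equals $h^\top Q h$ with $Q = \pE_\mu[qq^\top]\succeq 0$, which when restricted to product $h$ is a nonnegative polynomial of degree $2(d-2)$ in the $(d-2)n$ Rademacher coordinates; hence by the hypercontractivity anti-concentration bound (\Cref{lem:anti-concentration-hypercontractivity}, whose hypothesis the uniform distribution on $\pmo^{(d-2)n}$ satisfies with $B=9$) a random product $h$ satisfies $\pE_\mu\langle q,h\rangle^2 \ge \SOS^2/n^{d-2}$ with probability $2^{-O(d)}$. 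The structural point that makes everything go through is that for any \emph{fixed} product $h$, $\langle q,h\rangle = (x^{(d-1)})^\top\bigl(\sum_I h_I T_I\bigr)x^{(d)}$ is again a bilinear form in $(x^{(d-1)},x^{(d)})$, so we are exactly back in the cubic situation.

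On a good $h$ I would apply the scalar-fixing reweighting of \Cref{lem:scalar-fixing} to $p = \langle q,h\rangle$, which has degree $t=2$, so a degree-$3t=6\le 2d$ pseudo-distribution suffices; this yields a low-degree pseudo-distribution $\mu'$ with $|\pE_{\mu'}\langle q,h\rangle| \ge \frac{1}{3}\sqrt{\pE_\mu\langle q,h\rangle^2} = \Omega(\SOS/n^{(d-2)/2})$. Running Grothendieck rounding (\Cref{fact:grothendieck}) on the marginal of $\mu'$ over $(x^{(d-1)},x^{(d)})$ — with a sign flip of one $h^{(k)}$ to absorb the absolute value, exactly as in \Cref{thm:deg6-sos-approx} — produces $\ol{x}^{(d-1)},\ol{x}^{(d)}\in\pmo^n$ with $(\ol{x}^{(d-1)})^\top\bigl(\sum_I h_I T_I\bigr)\ol{x}^{(d)} \ge \Omega(\SOS/n^{(d-2)/2})$. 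Since $f(\ol{x}^{(1)},\dots,\ol{x}^{(d)}) = (\ol{x}^{(d-1)})^\top\bigl(\sum_I h_I T_I\bigr)\ol{x}^{(d)}$ by construction and $n^{(d-2)/2} = n^{d/2-1}$, this is $\Omega(n^{-d/2+1})\cdot\SOS$; repeating $2^{O(d)}$ times (well within the $n^{O(d)}$ budget) amplifies the success probability, and since $\OPT\le\SOS$ while the rounded point has value $\le\OPT$, the same chain of inequalities certifies the integrality gap is $O(n^{d/2-1})$.

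The steps I expect to require the most care are the two places where the move from the cubic to the general case is not literally ``plug in $d$''. First, I need the degree bookkeeping to close: a short check should confirm that the Cauchy--Schwarz (pairing a degree-$(d-2)$ monomial against a quadratic, needing degree $2\max(d-2,2)$), the computation of $\pE_\mu\|q\|_2^2$ (degree $4$), and the scalar-fixing (degree $6$) are all supported by a degree-$2d$ pseudo-distribution for $d\ge 3$. Second, and more conceptually, I must make sure that sampling $h$ as a rank-one product — rather than a fully independent vector in $\pmo^{n^{d-2}}$, which could not be read off as $x^{(1)},\dots,x^{(d-2)}$ — neither breaks the variance identity (it does not, by independence across blocks) nor kills anti-concentration (it does not, since $h\mapsto\pE_\mu\langle q,h\rangle^2$ is still a degree-$2(d-2)$ polynomial), the only price being that the ``constant probability'' of the cubic Paley--Zygmund step degrades to $2^{-O(d)}$. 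With these two checks in hand, the remainder is the proof of \Cref{thm:deg6-sos-approx} with $x$ replaced by the $(d-2)$-fold tensor of sign vectors.
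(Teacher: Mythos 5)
Your proposal is correct and follows essentially the same route as the paper's proof: the identical Cauchy--Schwarz bound $\SOS^2 \le n^{d-2}\,\pE_\mu\|q\|_2^2$, the product-Rademacher variance identity, hypercontractive anti-concentration at cost $2^{-O(d)}$, scalar-fixing, and Grothendieck rounding on the remaining bilinear pair. The only difference is cosmetic (you designate the last two blocks rather than the first two as the Grothendieck pair), and your degree bookkeeping matches the paper's.
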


\begin{proof}
    For $i_3,\dots,i_d\in [n]$, we let  $q_{i_3,\dots,i_d}(x^{(1)}, x^{(2)}) \coloneqq \iprod{T_{i_3,\dots,i_d}, x^{(1)} \otimes x^{(2)}}$, a degree-2 polynomial in $x^{(1)}$ and $x^{(2)}$, where $T_{i_3,\dots,i_d}$ is an $n\times n$ matrix corresponding to a slice of the tensor $T$.
    For simplicity of notation, we will drop the dependence on $x^{(1)}$ and $x^{(2)}$ and write $Q \coloneqq Q(x^{(1)},x^{(2)})= (q_{i_3,\dots,i_d})_{i_3,\dots,i_d\in[n]}$ as an order-$(d-2)$ tensor whose entries are degree-2 polynomials  in $x^{(1)}$ and $x^{(2)}$.
    Then, we have
    \begin{align*}
        \SOS &= \sum_{i_3,\dots,i_d \in [n]} \pE_{\mu} \Brac{q_{i_3,\dots,i_d} x_{i_3}^{(3)} \cdots x_{i_d}^{(d)} } \\
        &\leq \sum_{i_3,\dots,i_d \in [n]} \sqrt{\pE_{\mu} \Brac{q_{i_3,\dots,i_d}^2} } \\
        &\leq \sqrt{n^{d-2} \sum_{i_3,\dots,i_d \in [n]} \pE_{\mu} \Brac{q_{i_3,\dots,i_d}^2} }
        = n^{\frac{d}{2}-1}\sqrt{\pE_{\mu}\, \|Q\|_F^2 }
        \mcom
    \end{align*}
    using Cauchy-Schwarz and its pseudo-expectation version.
    Here, $\|Q\|_F^2$ is the sum of the squared coefficients of the tensor $Q$.
    However, taking $h^{(3)},\dots, h^{(d)}$ to be i.i.d.\ uniform samples from $\pmo^n$, we have
    \begin{align*}
        \E_{h^{(3)},\dots,h^{(d)}} \iprod{Q, h^{(3)}\otimes \cdots \otimes h^{(d)} }^2
        = \|Q\|_F^2 \mper
    \end{align*}
    For simplicity, we denote $H \coloneqq h^{(3)} \otimes \cdots \otimes h^{(d)}$. Then,
    \begin{equation*}
        \SOS^2 \leq n^{d-2} \cdot \E_{H}\, \pE_{\mu}\, \Iprod{Q, H}^2 \mcom
    \end{equation*}
    where we recall that the coefficients of $Q$ are degree-2 polynomials in $x^{(1)}$ and $x^{(2)}$.
    We now describe the rounding algorithm.
    \begin{enumerate}
        \item Sample $h^{(3)},\dots, h^{(d)} \sim \pmo^n$, and set $\ol x^{(j)} \coloneqq h^{(j)}$ for $j \geq 3$.
        Denote $H \coloneqq h^{(3)} \otimes \cdots \otimes h^{(d)}$.

        \item Apply \Cref{lem:scalar-fixing} to obtain from $\mu$ a degree-$2$ pseudo-distribution $\mu'$ satisfying 
        \[
            \left|\pE_{\mu'} \iprod{Q, H}\right| \geq \frac{1}{3} \sqrt{\pE_{\mu} \iprod{Q,H}^2}\mper
        \]

        \item Use Grothendieck rounding (\Cref{fact:grothendieck}) on $\mu'$ to obtain solutions $\ol x^{(1)}, \ol x^{(2)} \in \pmo^n$ such that
        $\iprod{Q(x^{(1)}, x^{(2)}), H}\ge \frac{1}{K_G}\cdot \left|\pE_{\mu'} \iprod{Q, H}\right|$ (we can always flip the sign of $h^{(3)}$ to get the guarantee with the absolute value).
    \end{enumerate}
    First, note that $\pE_{\mu} \iprod{Q,H}^2$ is a degree-$2(d-2)$ polynomial in $h^{(3)},\ldots,h^{(d)}$, thus by \Cref{lem:anti-concentration-hypercontractivity} and hypercontractivity of polynomials over the hypercube, we know that
    \begin{equation*}
        \Pr_{H} \Brac{ \pE_{\mu}\,\iprod{Q,H}^2 \geq \E_H\, \pE_{\mu}\,\iprod{Q,H}^2 } \geq 2^{-O(d)} \mper
    \end{equation*}
    Hence, with probability at least $2^{-O(d)}$, we get a ``good'' $H$ such that $\pE_{\mu}\iprod{Q,H}^2 \geq n^{-(d-2)} \cdot \SOS^2$. Putting everything together, with probability at least $2^{-O(d)}$ we obtain $\ol x^{(1)},\ldots,\ol x^{(d)}$ such that $f(\ol x^{(1)},\ldots,\ol x^{(d)})\ge \Omega(n^{-\frac{d}{2}+1})\cdot \SOS$.
    Repeating the above $\poly(n,2^{d})$ times, we can obtain a solution with value $\Omega(n^{-\frac{d}{2}+1}) \cdot \SOS$ with high probability.
    This completes the proof.
\end{proof}

By combining \Cref{thm:hd-poly} with our decoupling inequalities \Cref{lem:decoupling-hd} and \Cref{lem:decoupling-abs-2}, we deduce that the same approximation guarantees hold in the general, non-decoupled case, for maximizing an odd-degree homogeneous polynomial, or maximizing the absolute value of an homogeneous polynomial of any degree. While we stated our results on the hypercube, the same holds for maximizing over the unit sphere, using the Gaussian rounding from \Cref{thm:sos-spherical} instead of Grothendieck's inequality.

\end{document}